\newcommand{\be}{\begin{equation}}
\newcommand{\ee}{\end{equation}}
\newcommand{\bes}{\begin{equation*}}
\newcommand{\ees}{\end{equation*}}
\newcommand{\bea}{\begin{eqnarray}}
\newcommand{\eea}{\end{eqnarray}}
\newcommand{\beas}{\begin{eqnarray*}}
\newcommand{\eeas}{\end{eqnarray*}}
\newcommand{\bl}{\begin{lem}}
\newcommand{\el}{\end{lem}}
\newcommand{\la}{\lambda}
\newcommand{\al}{\alpha}
\newcommand{\f}{\frac{1}{2}}
\DeclareMathOperator{\diag}{diag}
\DeclareMathOperator{\Vol}{Vol}
\DeclareMathOperator{\Real}{Re}
\DeclareMathOperator{\Image}{Im}
\DeclareMathOperator{\diffi}{d}
\DeclareMathOperator{\sgn}{sgn}
\DeclareMathOperator{\erfc}{erfc}
\DeclareMathOperator{\Pfaff}{Pfaff}
\DeclareMathOperator{\re}{Re}
\DeclareMathOperator{\im}{Im}
\newtheorem{thm}{Theorem}[section]
\newtheorem{lem}[thm]{Lemma}
\newtheorem{defi}[thm]{Definition}
\newcommand{\bd}{\begin{defi}}
\newcommand{\ed}{\end{defi}}
\begin{document}

\title[Induced Ginibre ensembles of random matrices]
{Induced Ginibre ensemble of random matrices \\
               and quantum operations}

\author{ Jonit Fischmann$^1$, Wojciech Bruzda$^2$,  Boris A Khoruzhenko$^1$,
Hans-J{\"u}rgen Sommers$^{3}$, Karol {\.Z}yczkowski$^{2,4}$}

\address{$^1$  Queen Mary University of London, School of Mathematical Sciences,
 London E1 4NS, UK}

\address{$^2$ Institute of Physics,  Jagiellonian University,
 ul. Reymonta 4, 30-059 Krak{\'o}w, Poland}

\address{$^3$ Fakult\"at Physik, Universit\"at Duisburg-Essen, 47048 Duisburg, Germany}

\address{$^4$ Center for Theoretical Physics, Polish Academy of Sciences,
         Al. Lotnik{\'o}w 32/44, 02-668 Warszawa, Poland}

\ead{  j.fischmann@qmul.ac.uk \quad w.bruzda@uj.edu.pl \quad
  b.khoruzhenko@qmul.ac.uk \quad h.j.sommers@uni-due.de \quad karol@tatry.if.uj.edu.pl}

 \date{July 12, 2011}

\begin{abstract}
A generalisation of the Ginibre ensemble of non--Hermitian random square matrices is introduced.  The corresponding probability measure is induced by the ensemble of rectangular  Gaussian matrices via a quadratisation procedure. We derive the joint probability density of eigenvalues for such induced Ginibre ensemble  and study various spectral correlation functions
for complex and real  matrices, and analyse universal behaviour in the limit of large dimensions. In this limit the eigenvalues of the induced Ginibre ensemble cover uniformly a ring in the complex plane. The real induced Ginibre ensemble is shown to be useful to describe statistical properties of evolution operators associated with random quantum operations,
for which the dimensions of the input state and the output state do differ.
\end{abstract}

%\maketitle

\section{Introduction}
\label{sec:1}
In 1965 Ginibre introduced a new three fold family of non--Hermitian Gaussian random matrix ensembles as an extension to the mathematical theory of Hermitian random matrices \cite{Ginibre1965}. Since then Ginibre ensembles of random matrices  have found numerous applications in various fields of physics. They can be used to describe non-unitary dynamics of open  quantum systems \cite{Ha10}, transfer matrices in mesoscopic wires \cite{Be97}, directed quantum chaos \cite{Ef97}. Real non-symmetric random matrices can be used in mathematical finances to describe correlations  between time series of prices of  various stocks \cite{BP,KDGO06}, and in physiology to characterize correlations between data representing the electric activity of brain \cite{KDI00,Se03}. The same ensemble of real Ginibre matrices describes spectral properties of evolution operators representing random quantum operations \cite{BSCSZ10} and is useful in telecommunication to characterize scattering of electromagnetic waves on random obstacles \cite{TV04}, and to describe the effects of synchronization in random networks \cite{Ti02}.

In his original paper Ginibre derived the eigenvalue distribution of matrices with i.i.d. real, complex or quaternion-real entries \cite{Ginibre1965}. These Ginibre ensembles are sometimes denoted in literature \cite{S} as  {\sf GinOE}, {\sf GinUE} and {\sf GinSE}, respectively. The letter U, O and S stands for orthogonal, unitary and symplectic symmetry class. The case {\sf GinOE} of real asymmetric matrices proved to be the hardest and Ginibre studied only the special case that all eigenvalues are real.

It took another 25 years for Lehmann and Sommers \cite{LS} and Edelman \cite{E} to derive the complete distribution of eigenvalues for the real Ginibre ensemble. Further difficulty arose in the computation of the eigenvalue correlation functions. In 2007 Akemann and Kanzieper succeeded in expressing the smooth complex correlation functions as Pfaffians \cite{AK}, whereas Sinclair presented a method for averaging over the real Ginibre ensemble in terms of Pfaffians \cite{S}. Finally, Forrester and Nagao were able to determine the real-real as well as the complex-complex correlation functions as Pfaffians using the method of skew-orthogonal polynomials \cite{FN,Forrester2008}, while Borodin and Sinclair gave the real-complex correlation in addition to  a thorough asymptotic analysis \cite{BS}. Simultaneously and independently, Sommers \cite{Sommers2007} and Sommers and Wieczorek \cite{SW} derived the complex-complex, real-real, and complex-real eigenvalue correlation functions via free-fermion diagram expansion.  A similar progress was made for the chiral real Gaussian ensemble \cite{Akemann2010b} (see also \cite{Osborn2004,Akemann2005b} for the chiral complex and quaternion-real Gaussian ensembles) and for two non-Gaussian ensembles of real asymmetric matrices \cite{Forrester2011,KSZ10}. A general review on non--Hermitian random matrices can be found in \cite{FS03}, while a recent overview on the Ginibre ensembles is provided in \cite{KS}.

A square matrix $A$ of size $N$  pertaining to the complex (real) {\sl Ginibre ensemble} consists of $N^2$ independent complex (real) random numbers drawn from a normal distribution with zero mean and a fixed variance \cite{Me04}. Normalizing to $\langle \Tr AA^{\dagger} \rangle= N^2$, the joint probability density function of the matrix entries in this ensemble is
\begin{equation}
p_{\mathrm{G}}(A) \propto  \exp\left( -\frac{\beta}{2} \Tr A^{\dagger} A\right)\; ,
\label{ginib1}
\end{equation}
where $\beta=1$ for real matrices and $\beta=2$ for complex matrices. $A^{\dagger}$ is the Hermitian conjugate of $A$ which for the real matrices is simply the transpose of $A$. Note that the same formula can be used also to define an analogous probability measure in the space of rectangular matrices $A$.  The spectral density of the Ginibre ensemble is described by the Girko circular law  \cite{Gi84} according to which the eigenvalue distribution for large $N$ is, in the leading order, uniform in a disk about the origin in the complex plane \cite{Ba97,GT07,TV08}.

The Ginibre ensemble was generalised by Feinberg and Zee \cite{FZ97,FSZ01} who studied random complex matrices with joint probability density of matrix entries
\begin{equation}
p_{\mathrm{FZ}}(A) \propto   \exp\left( -\Tr\; V( A^{\dagger}A) \right) \; ,
\label{ginibFZ}
\end{equation}
where $V( A^{\dagger}A)$ is a polynomial in $A^{\dagger}A$. It was found in \cite{FZ97,FSZ01} that in the limit of large matrix dimensions the spectral measure of (\ref{ginibFZ})  can only be supported by a single ring (or disk) on the complex plane. A mathematical proof of the ``Single Ring Theorem'' was recently provided by Guionnet et al. \cite{GKZ09} who considered random matrices $A=UDV$ with independent Haar unitary $U$ and $V$, and real diagonal $D$,  see also \cite{Wei08,Bo10} for an alternative approach. The Feinberg-Zee ensemble (\ref{ginibFZ}) fits into this scheme on making use of  the singular value decomposition in (\ref{ginibFZ}). This is because the corresponding Jacobian depends only on the singular values of $A$, see, e.g., \cite{Forrester2010b}. The matrices considered in  \cite{Wei08,Bo10} are of the form $A=UD$ with Haar unitary $U$ and diagonal $D$, however, as far as eigenvalues are concerned the two ensembles $UDV$ and $UD$ are equivalent because of the invariance of the Haar measure. Yet another approach to obtaining the spectral measure of products of random matrices was proposed in \cite{Burda2010,Burda2010a} and extended to weighted sums of unitary matrices in \cite{Jarosz2010}

In this work we study a generalisation of the Ginibre ensemble of $N\times N$ matrices, complex ($\beta=2$) or real ($\beta=1$), specified by the probability distribution with density  %along the lines of (\ref{ginibFZ}).
\begin{equation}
p^{(\beta)}_{\mathrm{IndG}}(G)= C^{(\beta)}_L\left(\det G^{\dagger} G\right)^{\frac{\beta }{2}\,L}   \exp\left(  -\frac{\beta}{2}\Tr  G^{\dagger} G\right)
\label{ginibind}
\end{equation}
with the normalization constant
\begin{equation} \label{ginibind1}
C^{(\beta)}_L=\pi^{-\frac{\beta}{2}N^2}\Big(\frac{\beta}{2}\Big)^{\f N^2-\f NL}\prod_{j=1}^N\frac{\Gamma(\frac{\beta}{2}j)}{\Gamma\big(\frac{\beta}{2}(j+L)\big)}\; .
\end{equation}
Here $L$ is a free parameter such that $L\ge 0$. Formally, the random matrix ensemble (\ref{ginibind}) is a special case of the Feinberg-Zee ensemble (\ref{ginibFZ}) corresponding to the potential $V(t)=-\frac{\beta}{2}(t-L\, \log t)$. This specific choice of potential makes the model exactly solvable in that one is able to obtain the joint probability density function of  eigenvalues and, consequently, study the eigenvalue statistics to the same level of detail as in the original Ginibre ensemble. For example, the distribution of the real eigenvalues for real matrices and all eigenvalue correlation functions become accessible. Obtaining this information by the techniques used to study the Feinberg-Zee ensemble seems to be hardly possible.

Ensemble (\ref{ginibind}) will be called induced Ginibre ensemble as we will show in Section \ref{w:section4.2}  that a random matrix $G$ from this ensemble can be generated out of an auxiliary rectangular Gaussian matrix $X$ of size $(N+L)\times N$ and a random unitary matrix distributed according to the Haar measure on the unitary group. Thus the Gaussian measure on the space of $(N+L)\times N$ rectangular matrices is used to induce another measure in the space of square $N\times N$ matrices.  A similar construction is used to generate  ensembles of quantum states \cite{ZS01}, as the Haar measure on the group of unitary matrices of size $kN$ induces a probability measure in the space of mixed quantum states of size $N$. Random square matrices of size $N$ from the induced ensemble can also be obtained by means of quadratisation (\ref{WX}) of rectangular $(N+L)\times N$ random matrices.

The induced Ginibre ensemble of real non-symmetric matrices can be defined in an analogous way. We study spectral density and eigenvalue correlation functions for both induced ensembles of random matrices. We show that the induced Ginibre ensemble exhibits interesting behaviour in the limit of large matrix dimensions: Its eigenvalues spreading across an annulus in the complex plane, as opposed to the circular law. Secondly, on the level of correlation functions our asymptotic analysis reveals Ginibre correlations, supporting the universality conjecture.

This article is organized as follows. In Section \ref{w:section4.2} we introduce a procedure for quadratisation of rectangular matrices and we derive the induced Ginibre distribution.  In Section \ref{sec:compl} the induced Ginibre ensemble of complex matrices is investigated, its spectral density is derived and eigenvalue correlation functions  are analysed. Section  \ref{sec:real} is devoted to the induced Ginibre ensemble of real matrices. In Section \ref{sec:quant} it is demonstrated that the real induced Ginibre ensemble can be linked to evolution operators associated to generic complementary quantum operations, which modify the dimensionality of the quantum system. Proofs of technical results are relegated to the appendix.

\section{Quadratisation of rectangular matrices and induced Ginibre ensembles}
\label{w:section4.2}

Consider complex or real rectangular matrices $X$ with $M$ rows and $N$ columns. We shall assume that $M > N$, so that our rectangular matrices are 'standing'.  Let $Y$ and $Z$ denote, respectively, the upper square block of size $N\times N$ and the lower rectangular block of size $(M-N)\times N$ of the matrix $X$.  Since the standard definition of the spectrum does not work for non--square matrices we provide a unitary transformation $W\in U(M)$  intended to set the lower block $Z$  to zero :
\begin{equation}\label{WX}
W^{\dagger} X =W^{\dagger} \left[\begin{array}{c}
Y\\Z \end{array}\right]=
\left[\begin{array}{c}
G \\0 \end{array}\right] \; .
\end{equation}

One can easily find such transformations. Assuming that the matrix $X$ has rank $N$,  consider the linear span ${\cal S}$ of the column-vectors of $X$. Let $\bi{q}_1, \ldots \bi{q}_N$ be an orthonormal basis in ${\cal S}$, and $\bi{q}_{N+1}, \ldots \bi{q}_M$ be an orthonormal basis in ${\cal S}^{\perp}$, the orthogonal complement of ${\cal S}$ in $\mathbf{C}^M$. If we set $W=[\bi{q}_1 \ldots \bi{q}_M]$ then (\ref{WX}) holds. Obviously, all other suitable unitary transformations are obtained from this $W$  by multiplying it to the right by the block diagonal unitary matrices $\diag[U,V]$ where $U$ and $V$ run through the unitary groups  $U(N)$ and $U(M-N)$, respectively. Multiplying $W$ by $\diag [\mathbbm{1}_N, V]$ corresponds to choosing a different orthonormal basis in  ${\cal S}^{\perp}$, and multiplying $W$ by $\diag [U, \mathbbm{1}_{M-N}]$ corresponds to replacing matrix $G$ by $UG$.

It is straightforward to check that any unitary matrix $W\in U(M)$  can be transformed to the block form
\be \label{bb11}
\fl W=\left[ \begin{array}{cc}(\mathbbm{1}_N-CC^{\dagger})^{1/2}& C \\-C^{\dagger} & (\mathbbm{1}_{M-N}-C^{\dagger}C)^{1/2}  \end{array} \right], \quad \text{where $C$ is $N\times (M-N)$} \; ,
\ee
by multiplying it to the right by block diagonal unitary matrix as above\footnote{
Representation (\ref{bb11}) can also be derived by exponentiating matrices of the form
${\cal A} =\Big[
\begin{array}{cc}
0 & A \\
-A^{\dagger} & 0
\end{array}
\Big] $
with $A$ being $N\times (M-N)$. Such matrices ${\cal A}$ form the orthogonal component of
$\mathtt{u}(N)\oplus \mathtt{u}(M-N)$ in $\mathtt{u}(M)$;
hence the matrices $W=\exp({\cal A})$ provide a natural set of representatives for
$U(M)/(U(N)\times U(M-N))$. Expanding the exponential in Taylor series leads to (\ref{bb11})  with $C=A(A^{\dagger}A)^{-1/2}\sin ((A^{\dagger}A)^{1/2})$.
}.
Correspondingly, we shall seek the unitary matrix $W$ in (\ref{WX}) in the block form (\ref{bb11}). This additional condition makes the decomposition in (\ref{WX}) unique for ('standing') rectangular matrices of full rank. Note that if the matrix $X$ is real then $G$, $W$, and, correspondingly, $C$ in (\ref{bb11}) can all be chosen real as well.

Having settled on the choice of $W$, we can solve the equation in (\ref{WX})  for $G$ and $C$.

\begin{lem} \label{lem1} Let $M > N$. Suppose that $Y$ is $N\times N$ and  $Z$ is $(M-N)\times N$, and $Y$ is invertible. Then there is a unique $M\times M$ unitary matrix $W$ of the form (\ref{bb11}) such that (\ref{WX}) holds. The square matrix $G$ in (\ref{WX}) is given by
\begin{equation}
G=
 \left(\mathbbm{1}_N+\frac{1}{Y^{\dagger}}Z^{\dagger}Z\frac{1}{Y} \right)^{1/2} Y \; .
\label{atilda}
\end{equation}
\end{lem}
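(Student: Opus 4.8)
The plan is to exploit the block structure of the unitary matrix $W$ in (\ref{bb11}) and simply read off the two scalar (matrix) equations that (\ref{WX}) imposes. Writing $W=\left[\begin{array}{cc} P & C \\ -C^{\dagger} & Q\end{array}\right]$ with $P=(\mathbbm{1}_N-CC^{\dagger})^{1/2}$ and $Q=(\mathbbm{1}_{M-N}-C^{\dagger}C)^{1/2}$, the relation $W^{\dagger}X=\left[\begin{array}{c} G\\0\end{array}\right]$ is equivalent to $P^{\dagger}Y-C Z=G$ and $C^{\dagger}Y+Q^{\dagger}Z=0$, where I have used that $W^{\dagger}=\left[\begin{array}{cc} P^{\dagger} & -C \\ C^{\dagger} & Q^{\dagger}\end{array}\right]$ (both $P$ and $Q$ are Hermitian, being square roots of Hermitian matrices). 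So the second equation reads $C^{\dagger}Y=-Q\,Z$, i.e. $C^{\dagger}=-Q Z Y^{-1}$, which expresses $C^{\dagger}$ (hence $C$) in terms of the known data $Y,Z$ together with $Q$; this is where invertibility of $Y$ enters.

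Next I would solve for $C$ explicitly. From $C^{\dagger}=-QZY^{-1}$ we get $C^{\dagger}C=QZY^{-1}(Y^{\dagger})^{-1}Z^{\dagger}Q = Q\,ZY^{-1}(Y^{-1})^{\dagger}Z^{\dagger}\,Q$. Combined with $Q^2=\mathbbm{1}_{M-N}-C^{\dagger}C$ this gives a self-consistent equation for $Q^2$, namely $Q^2 = \mathbbm{1} - Q\, M_0\, Q$ with $M_0 := Z Y^{-1}(Y^{-1})^{\dagger}Z^{\dagger}= (ZY^{-1})(ZY^{-1})^{\dagger}\succeq 0$. Since $Q$ is positive semidefinite and commutes with $Q M_0 Q$ only after rearrangement, the cleaner route is: multiply $C^{\dagger}=-QZY^{-1}$ on the left by $Q^{-1}$ (once we know $Q$ is invertible — which it is, since $Q^2=\mathbbm{1}-C^{\dagger}C$ and $\|C^{\dagger}C\|<1$ will follow) to isolate things, or better, observe directly that $\mathbbm{1}_{M-N}=Q^2+C^{\dagger}C=Q^2+QM_0Q=Q(\mathbbm{1}+M_0)Q$, whence $Q=(\mathbbm{1}_{M-N}+M_0)^{-1/2}$ and therefore $C^{\dagger}=-(\mathbbm{1}_{M-N}+M_0)^{-1/2}ZY^{-1}$. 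This pins down $C$ uniquely, establishing the uniqueness claim, and one checks $\|C^{\dagger}C\|<1$ automatically because $M_0\succeq 0$ makes $(\mathbbm{1}+M_0)^{-1}\preceq\mathbbm{1}$ with strict inequality on the range of $M_0$; on the complement $C^{\dagger}$ vanishes, so no problem.

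Finally I would substitute back into the first equation $G=P^{\dagger}Y-CZ=PY-CZ$ and simplify. Using $CC^{\dagger}=(ZY^{-1})^{\dagger}(\mathbbm{1}_{M-N}+M_0)^{-1}(ZY^{-1})$ one recognises, via the push-through identity $f(\mathbbm{1}+BB^{\dagger})B=Bf(\mathbbm{1}+B^{\dagger}B)$ applied with $B=(ZY^{-1})^{\dagger}$, that $\mathbbm{1}_N-CC^{\dagger}=(\mathbbm{1}_N+(ZY^{-1})^{\dagger}(ZY^{-1}))^{-1}=(\mathbbm{1}_N+(Y^{-1})^{\dagger}Z^{\dagger}ZY^{-1})^{-1}=(\mathbbm{1}_N+\tfrac{1}{Y^{\dagger}}Z^{\dagger}Z\tfrac{1}{Y})^{-1}$, so $P=(\mathbbm{1}_N+\tfrac{1}{Y^{\dagger}}Z^{\dagger}Z\tfrac{1}{Y})^{-1/2}$. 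A short computation with the $CZ$ term (again using $C^{\dagger}=-(\mathbbm{1}+M_0)^{-1/2}ZY^{-1}$ and the same push-through trick) should show $PY-CZ = (\mathbbm{1}_N+\tfrac{1}{Y^{\dagger}}Z^{\dagger}Z\tfrac{1}{Y})^{1/2}Y$, which is (\ref{atilda}). The main obstacle I anticipate is purely bookkeeping: correctly transposing/conjugating the off-diagonal blocks in $W^{\dagger}$ and then applying the operator-function push-through identity in the right places so that the $CZ$ correction combines with $PY$ into the single clean factor $(\mathbbm{1}_N+\tfrac{1}{Y^{\dagger}}Z^{\dagger}Z\tfrac{1}{Y})^{1/2}$ rather than something messier; once the push-through identity is invoked consistently the algebra collapses.
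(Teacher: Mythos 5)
Your proposal is correct and follows essentially the same route as the paper's proof in \ref{appendix_A}: read off the two block equations from $W^{\dagger}X$, solve the lower one for $C$ (which is where invertibility of $Y$ enters), and substitute back into $G=PY-CZ$ using the push-through identity. The paper merely organizes the algebra slightly differently, identifying $(\mathbbm{1}_N-CC^{\dagger})^{-1}=\frac{1}{Y^{\dagger}}(Y^{\dagger}Y+Z^{\dagger}Z)\frac{1}{Y}$ via $Z^{\dagger}Z$ instead of solving for $C$ and $Q$ explicitly as you do.
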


\medskip

A direct calculation proving this lemma is provided in \ref{appendix_A}.

\medskip

We now have a procedure for quadratising 'standing' rectangular matrices. Of course, in the opposite case of  'lying'  rectangular matrices ($M<N$) one may apply the same procedure to quadratise the transposed matrix $X^T$. Thus,  any rectangular matrix $X$ can be quadratised by a unitary transformation on its columns (or rows, if the number of columns is greater than the number of rows), giving rise to a square matrix $G$. As $G$ is a unique solution of  equation (\ref{WX}), its spectrum characterises algebraic properties of the rectangular matrix $X$.

Motivated by studies of quantum operations, see Section \ref{sec:quant}, we want to explore the concept of quadratisation in the context of random rectangular matrices. To this end, we will  consider Gaussian random matrices $X$, real or complex, with $M$ rows and $N$ columns, $M> N$, so that the probability distribution of $X$ is specified by the measure
\be\label{bb0}
\diffi \nu (X) \propto \exp\left( -\frac{\beta}{2} \Tr X^{\dagger} X\right) |\diffi X|\; ,
\ee
where $\beta=1$ or $\beta=2$ depending on whether the matrix $X$ is real or complex, and $|\diffi X|$ is the flat (Lebesgue) measure on the corresponding matrix space.

Note that the result of quadratisation depends on the particular choice of elements of $W^{\dagger}X$ which are set to zero by action of the unitary $W$ in the ansatz (\ref{WX}). For instance, one could define another matrix ${G}'$ by assuming that unitary $W$ brings to zero the upper rectangular part of the matrix  $W^{\dagger} X$. However, we are going to apply (\ref{atilda}) for a random matrix $X$, drawn from a unitary invariant Gaussian ensemble. Thus the statistical properties of $G$ and ${G}'$ will be the same. Therefore for a rectangular random Gaussian matrix $X$ one may associate by (\ref{atilda})  a square random matrix
$G$, which we shall call the {\sl quadratisation} of $X$.

The probability distribution for square matrices $G$ (\ref{atilda})  induced by the normal distribution for $Y$ and $Z$ can be obtained directly from (\ref{WX}). However, it is useful to look at this problem from the Singular Value Decomposition (SVD) perspective, especially as one can utilise the known Jacobian of the corresponding coordinate transformation.

Ignoring a set of zero probability measure, the $N\times N$ matrix $X^{\dagger} X$ has $N$ distinct eigenvalues $s_j$, $0<s_1<s_2 < \ldots < s_N$, and the SVD asserts that $X$ can be factorised as follows
\be \label{bb1}
X = Q \; \Sigma^{1/2}  P^{\dagger},
\ee
where $\Sigma=\diag (s_1, \ldots , s_N)$, and $Q$ and $P$ are, respectively, $M\times N$ and $N\times N$ matrices with orthonormal columns, so that $Q^{\dagger}Q=P^{\dagger}P=\mathbbm{1}_N$. The columns of $P$ are in fact eigenvectors of $X^{\dagger}X$ and, hence, are  defined up to phase factor (or sign for real matrices).  To make the choice of $P$ unique, we shall impose the condition that the first non-zero entry in each column of $P$ is positive. Consequently,  the matrix $Q$ is also defined uniquely via $Q=XP\Sigma^{-1/2}$. Thus, factorisation (\ref{bb1}) introduces a new coordinate system  $(Q,P,\Sigma)$ in the space of rectangular matrices. In the new coordinates \cite{Forrester2010b}
\be \label{bb2}
\diffi \nu (X) = \diffi \mu (Q)\diffi \tilde \mu (P) \diffi \sigma (\Sigma)\; ,
\ee
where $\diffi \mu (Q)$ is the normalised invariant (Haar) measure on the (Stiefel) manifold of complex (or, correspondingly, real) $M\times N$ matrices with orthonormal columns,  $\diffi \tilde \mu (P)$  is the normalized measure defined by the maximum degree form
\[
\omega (P)= \left\{
\begin{array}{ll}
\wedge_{1\le k<j\le N} (\re (P^{\dagger}dP)_{jk} \wedge \im (P^{\dagger}dP)_{jk}) & \hbox{($\beta=2$),}\\
\wedge_{1\le k<j\le N} (P^{\dagger}dP)_{jk} & \hbox{($\beta=1$), }
\end{array}
\right.
\]
on the manifold of unitary (real orthogonal for $\beta=1$) $N\times N$ matrices satisfying the column condition above (the first non-zero entry in each column is positive), and
\be \label{bb3}
\diffi \sigma (\Sigma) \propto (\det \Sigma )^{\frac{\beta}{2}(M-N+1-\frac{2}{\beta})}\  e^{-\frac{\beta}{2}\Tr \Sigma}\  \prod_{j<k}  |s_k-s_j|^{\beta}\prod_{j=1}^N \diffi s_j\; .
\ee
It is apparent from (\ref{bb2}) that the matrices $Q, P$ and $\Sigma$ are mutually independent.

Let us introduce an additional unitary (real orthogonal for $\beta=1$) matrix $U$ of size $N\times N$ and rewrite (\ref{bb1}) in the form
\be \label{bb4}
X = QUU^{\dagger} \Sigma^{1/2} P^{\dagger} = QU G. %, \quad \text{where $G=U^{\dagger} \Sigma^{1/2} P^{\dagger} $ is $N\times N$.}
\ee
The matrix $G=U^{\dagger} \Sigma^{1/2} P^{\dagger} $ is $N\times N$. Now choose $U$  to be Haar unitary (real orthogonal for $\beta=1$) and independent of $Q, P$ and $\Sigma$. Then,  by rolling back from (\ref{bb3}) to (\ref{bb1}) with $Q$ replaced by $U^{\dagger}$,  the square matrix $G$ is distributed according to the measure
\be \label{bb5}
\diffi \mu_{\mathrm{IndG}} (G) \propto (\det G^{\dagger}G)^{\frac{\beta}{2} (M-N)}\exp\left( -\frac{\beta}{2} \Tr G^{\dagger} G\right) |\diffi G|\; ,
\ee
with the determinant on the right in (\ref{bb5}) originating from the one in (\ref{bb3}). This is the induced Ginibre distribution introduced in Section \ref{sec:1}.   Because of the invariance of the Haar distribution,  the unitary matrix $U$ in (\ref{bb4}) can be absorbed into $Q$ . In other words,  we have decomposed the rectangular Gaussian $X$ into the product
\be \label{gs}
X=  \tilde Q G
\ee
of two independent random matrices: the rectangular matrix $\tilde Q:=QU$ with orthonormal columns, this one has uniform distribution, and a square matrix $G$, this one has induced Ginibre distribution.  Decomposition (\ref{gs})  can also be written as
\be \label{gs1}
X=W \left[\begin{array}{c}
G \\0 \end{array}\right]\; ,
\ee
where $W$ is an $M\times M$ unitary (real orthogonal for $\beta=1$) matrix obtained from the $M\times N$ matrix  $\tilde Q$ by appending suitable column-vectors. This is nothing else as equation (\ref{WX}). One can transform the matrix $W$ to the block form of  (\ref{bb11}), so that (\ref{gs1}) becomes
\[
X=\left[ \begin{array}{cc}(\mathbbm{1}_N-CC^{\dagger})^{1/2}& C \\-C^{\dagger} & (\mathbbm{1}_{M-N}-C^{\dagger}C)^{1/2}  \end{array} \right]  \left[\begin{array}{c}
\tilde G \\0 \end{array}\right],
\]
where $\tilde G= \tilde U G$ for some unitary $\tilde U$. Obviously, $\tilde G$ has the same distribution as $G$.  Thus:

\begin{lem}\label{lem2}
Suppose that $X$ is $M\times N$ Gaussian, $M>N$, with distribution (\ref{bb0}). Then its quadratisation $G$ has the induced Ginibre distribution (\ref{bb5}).
\end{lem}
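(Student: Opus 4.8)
The plan is to avoid a brute-force Jacobian computation for the map $X\mapsto G$ defined by (\ref{WX})--(\ref{atilda}) (which, as noted above, is possible in principle) and instead to route everything through the singular value decomposition, whose Jacobian is classical and already recorded in (\ref{bb2})--(\ref{bb3}). The one idea that makes this work is the insertion of an auxiliary Haar-distributed unitary (real orthogonal for $\beta=1$) matrix, exactly as in the passage leading to (\ref{bb4}): it converts the rectangular Stiefel factor of $X$ into a genuine $N\times N$ unitary factor attached to $G$, after which $G$ can be read off as a \emph{square} matrix whose SVD data are already in hand, and the change of variables can simply be run backwards.

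In detail I would proceed as follows. First, discard the measure-zero set of $X$ of rank $<N$ or with a repeated singular value, and fix the SVD $X=Q\,\Sigma^{1/2}P^{\dagger}$ uniquely by the column-sign condition on $P$ and then $Q=XP\Sigma^{-1/2}$, as in (\ref{bb1}). Second, invoke (\ref{bb2})--(\ref{bb3}): in the coordinates $(Q,P,\Sigma)$ the Gaussian law (\ref{bb0}) factorises, so $Q,P,\Sigma$ are independent, with $Q$ uniform on the Stiefel manifold of $M\times N$ matrices with orthonormal columns and $\Sigma$ distributed according to (\ref{bb3}). Third, take $U$ Haar on $U(N)$ (resp.\ $O(N)$), independent of $(Q,P,\Sigma)$, write $X=(QU)(U^{\dagger}\Sigma^{1/2}P^{\dagger})$ and set $G:=U^{\dagger}\Sigma^{1/2}P^{\dagger}$; by invariance of Haar measure $QU$ is again uniform on the Stiefel manifold, so $X=(QU)G$ is a decomposition of the form (\ref{gs}), equivalently (\ref{gs1})/(\ref{WX}) after completing $QU$ to a $W\in U(M)$ of the block shape (\ref{bb11}), and Lemma~\ref{lem1} identifies the corresponding square matrix, up to a left unitary factor that does not change its distribution, with the quadratisation (\ref{atilda}) of $X$. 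Fourth --- the substantive step --- observe that $(U^{\dagger},P,\Sigma)$ are precisely the SVD data of the square matrix $G$: $U^{\dagger}$ is Haar on the $N\times N$ unitary (orthogonal) group, $P$ satisfies the normalised column condition, and $G^{\dagger}G=P\Sigma P^{\dagger}$ so $\Sigma$ is its singular-value matrix. Running the change of variables (\ref{bb2})--(\ref{bb3}) backwards in the square case $M=N$ then expresses the law of $G$ with respect to $|\diffi G|$; the only discrepancy from a plain $N\times N$ Ginibre matrix is that here $\Sigma$ carries the weight $(\det\Sigma)^{\frac{\beta}{2}(M-N+1-\frac{2}{\beta})}$ from (\ref{bb3}) rather than $(\det\Sigma)^{\frac{\beta}{2}(1-\frac{2}{\beta})}$, so the density of $G$ picks up the extra factor $(\det\Sigma)^{\frac{\beta}{2}(M-N)}=(\det G^{\dagger}G)^{\frac{\beta}{2}(M-N)}$. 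This is exactly (\ref{bb5}).

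The main obstacle is the careful bookkeeping of the SVD Jacobian --- in particular the $1-\frac{2}{\beta}$ term in the exponent of $\det\Sigma$ in (\ref{bb3}) and the normalisations of the surface measures $\diffi\mu(Q)$ and $\diffi\tilde\mu(P)$ --- together with the verification that, once $G$ is viewed as a square matrix, the Haar law of $U^{\dagger}$ on $U(N)$ is precisely the measure playing the role of $\diffi\mu$ in the square-case version of (\ref{bb2}). One must also be attentive to the exceptional measure-zero set and to the phase/sign ambiguities fixed above, so that every map invoked is a genuine bijection off that set. The explicit value of the normalisation constant $C^{(\beta)}_L$ in (\ref{ginibind1}) is a separate and routine matter: it follows by integrating (\ref{bb5}) with $L=M-N$, a Selberg/Laguerre-type integral, and can be handled elsewhere.
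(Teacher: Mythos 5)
Your proposal is correct and follows essentially the same route as the paper: SVD of $X$ with the factorised measure (\ref{bb2})--(\ref{bb3}), insertion of an independent Haar unitary to form $G=U^{\dagger}\Sigma^{1/2}P^{\dagger}$, rolling the square-case change of variables backwards so that the exponent mismatch in (\ref{bb3}) produces the factor $(\det G^{\dagger}G)^{\frac{\beta}{2}(M-N)}$, and identification with the quadratisation via the block form (\ref{bb11}) and left-unitary invariance of (\ref{bb5}).
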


This result holds in the cases of complex and real Ginibre matrices. It shows that the notion of quadratisation of a rectangular matrix is specially justified for Gaussian rectangular matrices.  In this case the statistical properties of the outcome do not depend  on the particular way, how the quadratisation is obtained out of the initially rectangular random matrix. This is in a clear analogy to the ensembles of truncations of  unitary \cite{ZS00} and orthogonal \cite{KSZ10} random matrices, the statistical properties of which do not depend on the choice of the rows and columns to be truncated. Thus for any random rectangular matrix $X$ we may introduce its {\sl quasi spectrum} as the spectrum of its quadratisation.

Lemma \ref{lem2} together with Lemma \ref{lem1} provides a recipe for generating matrices from the induced Ginibre distribution starting with Gaussian matrices. Interestingly, by rearranging  (\ref{bb4})  one obtains another recipe which is might be less efficient computationally but still interesting from the theoretical point of view.  Indeed,  since $(X^{\dagger}X)^{1/2}=P\Sigma^{1/2}P^{\dagger}$, it follows from (\ref{bb4}) that
\be \label{bb6}
G=U^{\dagger}Q^{\dagger} X=U^{\dagger}P^{\dagger}(X^{\dagger}X)^{1/2}=\tilde U^{\dagger}(X^{\dagger}X)^{1/2}\; , \quad \text{where $\tilde U=PU$}\; .
\ee
Recalling that the Haar measure is invariant with respect to right (and left) multiplication, one arrives at the following recipe for generating matrices from the induced Ginibre distribution.

\begin{lem}
Suppose that $U$ is $N\times N$ Haar unitary (real orthogonal for $\beta=1$) and $X$ is $M\times N$ Gaussian with distribution (\ref{bb0}) and independent of $U$. Then the $N\times N$ matrix $G=U(X^{\dagger}X)^{1/2}$  has the induced Ginibre distribution (\ref{bb5}).
\end{lem}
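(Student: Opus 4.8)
The plan is to reduce the statement to the singular value decomposition (SVD) machinery already assembled in (\ref{bb1})--(\ref{bb3}), so that essentially no new computation is needed. Writing the SVD of $X$ as in (\ref{bb1}), $X=Q\,\Sigma^{1/2}P^{\dagger}$, one has $X^{\dagger}X=P\,\Sigma\,P^{\dagger}$ and hence $(X^{\dagger}X)^{1/2}=P\,\Sigma^{1/2}P^{\dagger}$, so that
\[
G \;=\; U(X^{\dagger}X)^{1/2} \;=\; (UP)\,\Sigma^{1/2}\,P^{\dagger}.
\]
Setting $\hat U:=UP$ we have $\hat U^{\dagger}\hat U=\mathbbm{1}_N$, while $P$ still obeys the column-normalisation convention fixed after (\ref{bb1}); thus $G=\hat U\,\Sigma^{1/2}P^{\dagger}$ is exactly of the form (\ref{bb4}), with the matrix called $Q$ there now played by $\hat U$.

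The step that needs care is the probabilistic bookkeeping: I would check that $(\hat U,P,\Sigma)$ are \emph{mutually independent}, with $\hat U$ Haar on $U(N)$ (resp.\ $O(N)$ for $\beta=1$), $P$ distributed according to $\tilde\mu$, and $\Sigma$ according to (\ref{bb3}). Since $U$ is independent of \emph{all} of $X$, it is independent of the pair $(P,\Sigma)$, which are measurable functions of $X$; conditionally on $(P,\Sigma)$ the matrix $\hat U=UP$ is Haar by right-invariance of the Haar measure, and this conditional law does not depend on $(P,\Sigma)$; hence $\hat U$ is Haar and independent of $(P,\Sigma)$. Together with $P\perp\Sigma$, which is the content of (\ref{bb2}), this yields the required mutual independence, and we are in precisely the situation treated after (\ref{bb4}).

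It then remains to roll back the SVD change of variables, exactly as in the passage from (\ref{bb4}) to (\ref{bb5}). Here I would compare $G=\hat U\,\Sigma^{1/2}P^{\dagger}$ with the SVD $X_0=Q_0\,\Sigma^{1/2}P^{\dagger}$ of an ordinary $N\times N$ Ginibre matrix $X_0$ (the square specialisation of (\ref{bb2})--(\ref{bb3})): both use the same SVD coordinate map, the same $\tilde\mu$-distributed $P$, and identically distributed Haar factors $\hat U$ and $Q_0$, so the Jacobians agree; the only discrepancy is in the law of the singular data, where (\ref{bb3}) for $G$ carries the extra factor $(\det\Sigma)^{\frac{\beta}{2}(M-N)}$ relative to its $M=N$ counterpart. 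Consequently the Lebesgue density of $G$ is that of $X_0$ multiplied by $(\det\Sigma)^{\frac{\beta}{2}(M-N)}$, and since $G^{\dagger}G=P\,\Sigma\,P^{\dagger}$ shares the eigenvalues of $\Sigma$, so that $\det\Sigma=\det G^{\dagger}G$ and $\Tr\Sigma=\Tr G^{\dagger}G$, this density is $\propto(\det G^{\dagger}G)^{\frac{\beta}{2}(M-N)}e^{-\frac{\beta}{2}\Tr G^{\dagger}G}$, i.e.\ the induced Ginibre density (\ref{bb5}). Equivalently, the whole argument is just (\ref{bb6}) read with $U$ Haar: then $\tilde U=PU$ is again Haar and independent of $(X^{\dagger}X)^{1/2}$, so the claim reduces to the distributional identity already established in (\ref{bb5}).

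I do not expect a genuine obstacle here. The only subtlety is the one flagged above: one must use that $U$ is independent of the \emph{whole} matrix $X$ (not merely of $X^{\dagger}X$) in order to absorb it into the eigenvector matrix $P$, and one must note that the column-normalisation of $P$ is untouched by left multiplication; beyond that, everything is a direct appeal to the SVD Jacobian (\ref{bb2})--(\ref{bb3}) and to invariance of the Haar measure.
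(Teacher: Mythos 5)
Your argument is correct and follows essentially the same route as the paper: write $X$ in SVD coordinates, absorb the Haar factor into the eigenvector matrix via invariance of the Haar measure so that $G=U(X^{\dagger}X)^{1/2}$ takes the form $\hat U\,\Sigma^{1/2}P^{\dagger}$ with $\hat U$ Haar independent of $(P,\Sigma)$, and then roll back the SVD Jacobian (\ref{bb3}) to read off the density (\ref{bb5}). The paper states this tersely as equation (\ref{bb6}) plus ``Haar invariance under right and left multiplication''; your version merely spells out the independence bookkeeping and the Jacobian comparison in more detail, and your closing remark about the lemma needing $U\perp X$ rather than just $U\perp X^{\dagger}X$ is unnecessary (since $P$ and $\Sigma$ are both functions of $X^{\dagger}X$) but harmless.
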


Obviously, our arguments extend to random rectangular matrices with invariant distributions other than Gaussian, e.g. the Feinberg-Zee distribution with density
\be \label{FZ1}
p_{\mathrm{FZ}} (X) \propto \exp [-\Tr V(X^{\dagger}X)  ] \;,
\ee
where $X$ is $M\times N$, $M>N$. On applying the procedure of quadratisation to such an ensemble, one obtains the induced Feinberg-Zee distribution
\[
p_{\mathrm{IndFZ}} (G) \propto  (\det G^{\dagger}G)^{\frac{\beta}{2} (M-N)} \exp [-\Tr V(G^{\dagger}G)  ]\; .
\]
Another example is provided by rectangular truncations of random unitary or orthogonal matrices\footnote{For a range of parameter values the matrix distribution of such truncations is given by (\ref{FZ1}) with  $V(X^{\dagger}X) $ replaced by a sum of powers of $\log X^{\dagger}X $ and $ \log(1- X^{\dagger}X) $, see \cite{Forrester2006}}. By applying quadratisation, one can extend the study of square truncations of Haar unitary \cite{ZS00} and orthogonal \cite{KSZ10} matrices.

In the context of eigenvalue maps, it is instructive to embed rectangular $M\times N$ matrices $X$ into the space of $M\times M$ matrices by augmenting $X$ with $M-N$ zero column-vectors and write the quadratisation rule (\ref{WX}) in terms of square matrices albeit with zero blocks:
\be \label{WX1}
W^{\dagger} \tilde X = \tilde G, \quad \text{with}\;\; \tilde X = \left[ \begin{array}{cc}Y & 0 \\Z & 0  \end{array} \right]\; \text{and}\; \tilde G= \left[ \begin{array}{cc}G & 0 \\0 & 0  \end{array} \right]\; ,
\ee
where as before $W$ is $M\times M$ unitary, $Y$ and $G$ are $N\times N$ and $Z$ is $(M-N)\times N$.

By construction, zero is an eigenvalue of $\tilde X$ of multiplicity $M-N$ and the remaining $N$ eigenvalues of $\tilde X$ are exactly those of its top left block $Y$. Thus our quadratisation procedure induces an eigenvalue map: the eigenvalues of $\tilde X$ are mapped onto the eigenvalues of $\tilde G$. Under this map, the zero eigenvalue stays put, and its multiplicity is conserved, and, otherwise, the eigenvalues of $Y$ are mapped onto those of $G$.

For Gaussian matrices the eigenvalues of $Y$, for large matrix dimensions ($M\gg 1$ and $N\propto M$) are distributed uniformly in a disk, and we shall show in the subsequent sections that the eigenvalues of $G$ are distributed uniformly in a ring. This hole in the spectrum created by quadratisation can be interpreted as due to repulsion of eigenvalues of $\tilde G$ from its zero eigenvalues.

\section{Complex induced Ginibre ensemble}
\label{sec:compl}
The complex induced Ginibre ensemble is defined by the probability density $p^{(2)}_{\mathrm{IndG}} (G)$ (\ref{ginibind}) -- (\ref{ginibind1}) on the set of  complex $N\times N$ matrices $G=[g_{jk}]$ with volume element $ |\diffi G|=\prod_{j,k=1}^N\diffi\Real g_{jk}\diffi\Image g_{jk}$. In the subsequent analysis we restrict ourselves to non-negative integer $L$, in which case the obtained results can be interpreted in the context of quadratisation of $M\times N$  matrices,  with $L=M-N$ being a measure of rectangularity as was discussed in Section \ref{w:section4.2}. However, our analysis extends almost verbatim to real non-negative $L$, see \cite{JF}.

The symmetrised joint probability density function $P(\lambda_1, \ldots, \lambda_N)$  of the eigenvalues in the complex induced Ginibre ensemble is obtained from that in the Ginibre ensemble \cite{Ginibre1965, Me04} by multiplying through by $\det(GG^\dag)^L=\prod_{j=1}|\lambda_j|^{2L}$ and re-evaluating the normalization constant. This yields
\begin{equation}\label{b9}
\fl P (\la_1,\ldots,\la_N)=\frac{1}{N!\pi^N}\prod_{j=1}^N\frac{1}{\Gamma(j+L)}
\prod_{j<k}|\la_k-\la_j|^2\prod_{j=1}^N|\la_j|^{2L}\exp\left(-\sum_{j=1}^N|\la_j|^2\right).
\end{equation}
Consequently, the $n$-eigenvalue correlation functions
\begin{equation*}
R_n(\la_1,\ldots,\la_n)=\frac{N!}{(N-n)!}\int P(\la_1,\ldots,\la_N)d^2\la_{n+1}\cdots d^2\la_N.
\end{equation*}
follow in the determinantal form,
\begin{equation}\label{b1a}
R_n(\la_1,\ldots,\la_n)=\det\Big(K_N(\la_k,\la_l)\Big)_{k,l=1}^n\; ,
\end{equation}
via the method of orthogonal polynomials in the same way as for the complex Ginibre ensemble \cite{Me04}. The kernel $K_N(\la_k,\la_l)$ is given by
\begin{equation}\label{b1}
K_N(\la_k,\la_l)=\frac{1}{\pi}e^{-\f|\la_k|^2-\f|\la_l|^2}
\sum_{j=0}^{N-1}\frac{\big(\la_k\bar{\la}_l\big)^{j+L}}{\Gamma(j+L+1)}\: .
%=\frac{1}{\pi}\Big[\frac{\gamma(L,\la_k\bar{\la}_l)}{\Gamma(L)}-\frac{\gamma(L+N,\la_k\bar{\la}_l)}{\Gamma(L+N)}\Big],
\end{equation}
It should be noted that the right-hand side  in (\ref{b1}) can be expressed as a difference of two incomplete gamma functions. This makes the asymptotic analysis of the eigenvalue statistics for large matrix dimensions straightforward.

Of special interest is the mean eigenvalue density $\rho_N (\la):=R_1(\la)=K_N(\la,\la)$,
\be \label{b2}
\rho_N (\la) =  \frac{1}{\pi}e^{-|\la|^2}
\sum_{l=L}^{L+N-1}\frac{|\la|^{2l}}{l!}
=
\frac{1}{\pi}\Big[\frac{\gamma(L,|\la|^2)}{\Gamma(L)}-\frac{\gamma(L+N,|\la|^2)}{\Gamma(L+N)}\Big]\; ,
\ee
where $\gamma(a,z)$ is the lower incomplete gamma function \cite{AS},
\be \label{b3}
\gamma(a,z):=\int_0^z t^{a-1}e^{-t}dt\; .
\ee
In the limit when  $N$ is large and $L= N\alpha$, $\alpha >0$ (which corresponds to quadratisation of `standing' rectangular matrices of size $(N+L)\times N$) the eigenvalue distribution  (in the leading order) is uniform and supported by a ring about the origin with the inner and outer radii $r_{in}=\sqrt{L}$ and $r_{out}=\sqrt{L+N}$, respectively. More precisely,
%introducing the \emph{normalised} density $\rho_N (\sqrt{N}z)$ of the scaled eigenvalues $z=\la/\sqrt{N}$,
\be \label{b4}
\lim_{N\to\infty}\rho_N (\sqrt{N}z)=\frac{1}{\pi}\left[\Theta(\sqrt{\al+1}-|z|) - \Theta(\sqrt{\al}-|z|)\right],
\ee
where $\Theta$ is the Heaviside function, $\Theta (x)=1$ if $x>0$, $\Theta (x)=0$ if $x<0$ and $\Theta (0)=\frac{1}{2}$.

Close to the circular edges of the eigenvalue support, for every angle $\phi$,
\be \label{b5}
\lim_{N\to\infty} \rho_N ((r_{out}+\xi)e^{i\phi})=\lim_{N\to\infty} \rho_N ((r_{in}-\xi)e^{i\phi})=\frac{1}{2\pi}\erfc(\sqrt{2} \xi )\; ,
\ee
where $\erfc(x)$ is the complementary error function,
\be \label{erfc1}
\erfc(x):=\frac{2}{\sqrt{\pi}}\int_x^\infty e^{-t^2}dt\; .
\ee
Hence the eigenvalue density falls from $1/\pi$ to zero at a Gaussian rate at the inner and outer boundaries of the eigenvalue support.  As was observed in \cite{Bo10} the scaling law (\ref{b5}) is universal, see Eqs. (42)--(43) in \cite{Bo10}.

The scaling limits of the eigenvalue correlation functions $R_n(\la_1,\ldots,\la_n)$ in the induced Ginibre ensemble can also be obtained from (\ref{b1a})--(\ref{b1}). Not surprisingly in the regime when $N\to\infty$ and $L\propto N$ one recovers the same expressions as in the Ginibre ensemble \cite{Ginibre1965,Forrester1999,BS}, both in the bulk and at the circular edges of the eigenvalue distribution, see \ref{appendix_d}.

% in the limit $N\to\infty$, $L=N\al$. It turns out that both in the bulk and at the edges the induced Ginibre ensemble  As usual, one is interested in the eigenvalue statistics on a local scale. Let $z_j$ be the distance of the eigenvalue $\lambda_j$ from a reference point $\sqrt{N}u$ which is assumed to be in the support of the eigenvalue distribution, i.e. $\la_j=\sqrt{N}u+z_j$, then
%%$\sqrt{\al}\le |u|\le \sqrt{\al +1}$. It is proved in \ref{appendix_d} that
%\be \label{b6}
%\lim_{N\to\infty} R_n(\la_1, \ldots , \la_n)=\det \left( K(z_j,z_k) \right)_{j,k=1}^n, \quad \text{for }\;
%\ee
%with
%\be \label{b10}
%K(z_j,z_k)=K_{bulk} (z_j,z_k):=\frac{1}{\pi} \exp \left(-\frac{|z_j|^2}{2} -\frac{|z_k|^2}{2} +z_j\bar z_k \right)
%\ee
%in the bulk of eigenvalue distribution, i.e. when $\sqrt{\al} < |u| <\sqrt{\al +1}$. At the outer circular edge of the eigenvalue distribution $\la_j=\sqrt{N(\al+1)}u+z_j$ with $|u|=1$ the limiting edge kernel is given by:
%\be\label{b7}
%K(z_j,z_k)=K_{edge} (z_j,z_k):=f _u(z_j, z_k) K_{bulk} (z_j,z_k)
%\ee
%where
%\be \label{b7a}
%f_u(z_j, z_k) =  \frac{1}{2}  \erfc (({z_j\bar u + \bar z_k u})/{\sqrt{2}}) .
%\ee
%The same limiting expression for the edge kernel is found at the inner circular edge of the eigenvalue distribution $\la_j=\sqrt{N(\al)}u+z_j$ with $|u|=1$. {\color{red} These are the same expressions as those found for the complex Ginibre ensemble in the bulk \cite{Ginibre1965} and at the edge \cite{Forrester1999,BS}. }
%

A quick inspection of the joint probability density function (jpdf) of the eigenvalues (\ref{b9}) convinces that the determinantal power in front of the Gaussian weight in (\ref{ginibind}) repels eigenvalues from the origin. When the rectangularity index $L$ grows proportionally with $N$, the eigenvalues are actually displaced from the origin, resulting in the creation of a ring of eigenvalues, see Fig. \ref{exec}.

\begin{figure}[htbp]
\centering
\includegraphics[width=4.0in]{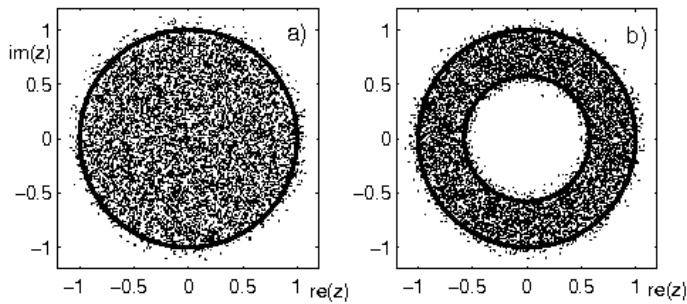} %{eig_c3.eps} % {igc.eps}
\caption[]{Spectra of matrices pertaining to
induced Ginibre ensemble of complex matrices for dimension $N=128$
and  a) $L=0$ and b) $L=32$.
Each plot consists of data from $128$ independent realisations.
The spectra are rescaled by a factor of ${1}/{\sqrt{N+L}}$ to be
localized inside the unit disk.
The circles of radius: $r_{\rm in}=\sqrt{L/(N+L)}$
(inner one) and $r_{\rm out}=1$ (outer one)
are depicted to guide the eye.
}
\label{exec}
\end{figure}

In the context of the augmented matrix $\tilde G$ (\ref{WX1}) the creation of the hole in the spectrum can be interpreted as due to repulsion from zero eigenvalues.  There are $M-N$ eigenvalues of $\tilde G$ located at the origin and the remaining eigenvalues are exactly as those of $G$. When $L$ increases, the zero eigenvalues become dominant and repel the rest of the eigenvalues strongly, which leads to the creation of the hole in the spectrum.  Looking at the polynomial on the right-hand side in (\ref{b2}),  it starts with the power $|\la|^{2L}$ and ends with the power $|\la|^{2(N+L-1)}=|\la|^{2(M-1)}$. Also in the $2-$dim plot in the complex plane we see that $(M-N)$ eigenvalues have been taken off from the centre near the origin and put in top of the surface of the circle of support. Thus the outer radius of the support of the density increases from $\sqrt{N}$ to $\sqrt{M}$ while there is an inner hole of radius $\sqrt{L}=\sqrt{M-N}$ - see Fig.~\ref{exec}.

Now we shall explore a different regime when the rectangularity index $L= M-N\ll N$. This corresponds to the quadratisation of almost square matrices. In the vicinity of the origin the corresponding large-$N$ (or equivalently large-$M$) limit can be performed by simply extending the summation in (\ref{b1}) to infinity. For the mean eigenvalue density, $\rho_N(\la)=R_1(\la)$ it gives\footnote{The complex induced Ginibre ensemble is a special case of the more general  non-Hermitian random matrix ensemble studied by Akemann in \cite{Akemann2001b} . Correspondingly, our (\ref{a1})--(\ref{a2})) is a limiting case of Eq. (4.5) in  \cite{Akemann2001b}. Taking the corresponding limit is straightforward for small values of $L$.}
\[
\lim_{N\to\infty} \rho_N(\la) =\frac{1}{\pi}e^{-|\la|^2}
\sum_{j=0}^{\infty}\frac{|\la|^{2(j+L)}}{\Gamma(j+L+1)}=\frac{1}{\pi} \frac{\gamma (L, |\la|^2)}{\Gamma (L)}\: .
\]
and more generally
\be \label{a1}
\lim_{N\to\infty}  R_n(\la_1, \ldots , \la_n)=\det \left(K_{origin}(\la_j, \la_k)\right)_{j,k=1}^n\; , \ee
with
\be \label{a2}
K_{origin}(\la_j, \la_k) = \frac{1}{\pi} \frac{\gamma (L, \la_j\bar \la_k)}{\Gamma (L)} =
\frac{1}{\pi} \ \frac{1}{\Gamma (L)}  \int_0^{\la_j\bar \la_k} t^{L-1}e^{-t}dt\; .
\ee
At the origin the eigenvalue density vanishes algebraically, $\rho_N (\la) \sim \frac{1}{\pi} \frac{|\la|^{2L}}{\Gamma (L+1)}$ as $\la \to\ 0$, uniformly in $N$. Away from the origin, the density reaches its asymptotic value $1/\pi$ very quickly\footnote{A similar behaviour is found in the chiral Ginibre ensemble, see \cite{Akemann2009} for a discussion in the context of gap probabilities.}.  This plateau extends to a full circle of radius $\sqrt{N}$:
\[
  \lim_{N\to\infty} \rho_N(\sqrt{N} z) =  \frac{1}{\pi}\,  \Theta (1-|z|)\; ,
\]
as in the Ginibre ensemble, and, moreover, for reference points $\sqrt{N}u$, $|u|<1$, one also recovers the Ginibre correlations. % (\ref{b6}) -- (\ref{b10}).

Another quantity of interest is the so-called hole probability $A(s)$ at the origin giving the probability that no eigenvalues lies inside the disk $D_s=\{z: |z|< s\}$. For finite $N$ the hole probability $A(s)$ can be derived from the expression:
\bes
A(s)=\int P(\la_1,\ldots,\la_N) \prod_{j=1}^N (1-\chi_{D_s}(\la_j)) d^2 \la_1 \ldots d^2 \la_N
\ees
where $\chi_{D_s}$ denotes the indicator function of $D_s$ by employing the method of orthogonal polynomials to yield
\be
A(s)=\prod_{j=1}^{N} \frac{\Gamma(j+L,s^2)}{\Gamma(j+L)}
\ee
with $\Gamma(a,x):=\int_x^\infty e^{-t}t^{a-1}dt$ denoting the upper incomplete gamma function. In the asymptotic regime of almost square matrices taking the large $N$ limit, while keeping $L$ fixed, results in the easily accessible expression for the hole probability $A(s)=1-\frac{s^{2(L+1)}}{(L+1)!}+O(\frac{s^{2(L+2)}}{(L+2)!})$.

\section{Real induced Ginibre ensemble}
\label{sec:real}

The real induced Ginibre ensemble is defined by the probability density $p^{(1)}_{\mathrm{IndG}} (G)$ (\ref{ginibind}) -- (\ref{ginibind1})  on the set of real $N\times N$ matrices $G=[g_{jk}]$ with volume element $|dG|=\prod_{j,k=1}^N\diffi g_{jk}$. In the following we restrict ourselves to even dimension $N$.

\subsection{The joint distribution of eigenvalues}

The difficulty in deriving the joint probability density function for real asymmetric matrices is due to the fact that there is a non-zero probability $p_{N,k}$ for the matrix $G$ to have $k$ real eigenvalues. In the following it is assumed that $G$ has $k$ real ordered eigenvalues: $\la_1\geq\ldots\geq\la_k$, while $l=\frac{N-k}{2}$ denotes the number of complex conjugate eigenvalue pairs $x_{1}\pm i y_{1},\ldots, x_{l}\pm i y_{l}$ ordered by their real part. We adopt the convention that $y_j>0$ for all $j$ .
In the case of two complex eigenvalues with identical real part the eigenvalue pairs are ordered by the imaginary part.

As a consequence the eigenvalue jpdf decomposes into a sum of probability densities $P_{N,k,l}(\la_1,\ldots,\la_k, x_{1} + i y_{1},\ldots,x_{l} + i y_{l})$,  corresponding to having $k$ real eigenvalues and $l$ pairs of complex conjugate eigenvalues. In order for $P_{N,k,l}$ to be non-zero $k$ must be even, as it is assumed that $N$ is even.

The derivation of the eigenvalue jpdf goes as follows \cite{E}, see \cite{LS,SW,KS} for alternative derivations. In order to change variables from the entries of $G$ to the eigenvalues of $G$ and some auxiliary variables the real Schur decomposition is employed: $G=QTQ^T$, where $Q\in \mathbb{R}^{N\times N}$ is an orthogonal matrix, whose first row is chosen to be non-negative and the matrix $T\in \mathbb{R}^{N\times N}$ is block triangular of the form:
\bes
T=\left(
    \begin{array}{cccccc}
      \la_1 & \cdots & t_{1k} & t_{1,k+1} & \cdots & t_{1,N} \\
       & \ddots & \vdots & \vdots &  & \vdots \\
      0 & & \la_k & t_{k,k+1} & \cdots & t_{k,N} \\
      0 & \cdots & 0 & Z_{1} & \cdots & t_{k+1,N} \\
      \vdots &  & \vdots &  & \ddots & \vdots \\
      0 & \cdots & 0 & 0&  & Z_l \\
    \end{array}
  \right)
  =\left(
            \begin{array}{cc}
              \Lambda & T^U \\
              0 & Z \\
            \end{array}
          \right)
  .
\ees
Here $\Lambda$ is triangular containing the real eigenvalues $\la_1,\ldots,\la_k$ of $G$ on its diagonal and $Z$ is block triangular containing the $2\times 2$ blocks:
\bes
      Z_j=\left(
      \begin{array}{cc}
        x_j & b_j \\
        -c_j & x_j \\
      \end{array}
    \right), \quad b_jc_j>0,\quad b_j\geq c_j\quad {\rm and}\quad y_j=\sqrt{b_jc_j}
\ees
on its block diagonal.

The Jacobian of this change of variable was already computed in \cite{E}: \bes |J|=2^l\Big|\Delta\big(\{\la_j\}_{j=1,\ldots,k}\cup \{x_j\pm i y_j\}_{j=1,\ldots,l}\big)\Big| \prod_{i>k}(b_i-c_i), \ees with $\Delta\big(\{z_p\}_{p=1,\ldots,n}\big):=\prod_{i<j}(z_j-z_i)$ denoting the Vandermonde determinant. Consequently we arrive at the relation:
\begin{eqnarray*}
%\fl
p_{\mathrm{IndG}}^{(1)}(G)|dG|&=&C_{L}|J|\prod_{j=1}^k |\la_j|^L \prod_{m=1}^l (x_m^2+b_mc_m)^{L} \times \\ & &  e^{-\f\sum_{j=1}^k\la_j^2
-\f\sum_{t_{ij}} t_{ij}^2-\sum_{j=1}^l\big(x_j^2+\frac{b^2_j}{2}+\frac{c^2_j}{2}\big)}
|\diffi O||\diffi \Lambda||\diffi T^U||\diffi Z|.
\end{eqnarray*}
Here $|\diffi O|$ denotes the volume form on the space of orthogonal $N\times N$ matrices with positive first row while $|\diffi \Lambda|=\prod_{j=1}^k \diffi \la_j\prod_{m<n\leq k}^N\diffi t_{mn}$, $|\diffi T^U|=\prod_{m=1,\ldots,k, n=k+1,\dots, N}^N\diffi t_{mn}$ and $|\diffi Z|=\prod_{j=1}^l\diffi b_j \diffi c_j \diffi x_j\prod_{m,n= k+1}^N\diffi t_{mn}$. In addition another change of variable is necessary from the entries $x_j,b_j,c_j$ of the matrix blocks $Z_j$ to the real and imaginary part $x_j,y_j$ of the complex conjugate eigenvalue pairs and an auxiliary variable $\delta_j$. The change of variable is performed in the following way:
\begin{eqnarray}
\label{cov}
\fl{\rm Set}\qquad b_j=\f\left(\delta_j+\sqrt{\delta_j^2+4y_j^2}\right)\qquad{\rm and}\qquad c_j=\f\left(-\delta_j+\sqrt{\delta_j^2+4y_j^2}\right),
\end{eqnarray}
which implies $y_j=\sqrt{b_jc_j}$ and $\delta_j=b_j-c_j$. The Jacobian of this second change of variables can easily be determined:
\bes
|\bar{J}|=\frac{4y_j}{\sqrt{\delta_j^2+4y_j^2}}.
\ees
Integrating out the auxiliary variables $\delta_j$ for $j=1,\ldots,m$ and $t_{ij}$ as well as using $\Vol(O[N])=\frac{\pi^{\frac{1}{4}N(N+1)}}{\prod_{j=1}^N\Gamma(\frac{j}{2})}$ finally yields the partial eigenvalue joint probability density function:
\begin{eqnarray*}
\fl P_{N,k,l}(\la_1,\ldots,\la_k, x_{1} + i y_{1},\ldots,x_{l} +
i y_{l}) =
\frac{2^{2l-\frac{1}{4}N(N+1)}\pi^{-NL}}{\prod_{j=1}^{N}\Gamma(\frac{L+j}{2})}
 \Big|\Delta\big(\{\la_j\}_{j=1}^{k}\cup \{x_j\pm i
y_j\}_{j=1}^{l}\big)\Big| \times \\
\prod_{j=1}^k |\la_j|^L e^{-\f |\la_j|^2}
\prod_{m=1}^l (x_m^2+y_m^2)^{L} e^{y_m^2-x_m^2} y_m\erfc(\sqrt{2}y_m) \\
\end{eqnarray*}
where $\erfc(x)$ is the complementary error function (\ref{erfc1}), $\la_j\in\mathbb{R}$ for $j=1,\ldots,k$ and $x_m + i y_m\in \mathbb{C}_+$ for $m=1,\ldots,l$. Integrating the partial eigenvalue jpdf $P_{N,k,l}$ over $\mathbb{R}^k\times\mathbb{C}_+^{2l}$ gives $p_{N,k}$.

\subsection{The $(K',L')-$correlation functions}

Again we are interested in the correlations between the eigenvalues. The first starting point are the $(K',L',k',l')$-partial correlation functions which are just the symmetrised marginal probability density functions of $K'$ real eigenvalues and $L'$ complex eigenvalue pairs in the case that the number of real eigenvalues is $k'$ while the number of complex eigenvalues is $l'$ with different normalisation:
\begin{eqnarray*}
\fl R_{(K',L',k',l')}(\la_1,\ldots,\la_{K'},x_{1} + i y_{1},\ldots,x_{L'} + i y_{L'})
=\frac{k'!l'!2^{l'-L'}}{(k'-K')!(l'-L')!}\int_{\mathbb{R}^{k'-K'}}\int_{\mathbb{C}_+^{2(l'-L')}}\\
\fl P_{N,k',l'}(\la_1,\ldots,\la_{k'}, x_{1}+ i y_{1},\ldots,x_{l'}+ i y_{l'})d\la_{k'-K'+1}\cdots d\la_{k'} dx_{l'-L'+1}dy_{l'-L'+1}\cdots dx_{l'}dy_{l'}\; .
\end{eqnarray*}
The $(K',L')$-correlation functions which are the symmetrised marginals of $K'$ real eigenvalues and $L'$ complex eigenvalue pairs with different normalisation constant then decompose into a disjoint sum of probability density corresponding to having $k'$ real and $l'$ complex eigenvalues. They are defined as follows:
\begin{eqnarray}
\fl R_{K',L'}(\la_1,\ldots,\la_{K'},z_1,\ldots,z_{L'})
=\sum_{\substack{(K',L')\\K'\leq k',L'\leq l' }}R_{(K',L',k',l')}(\la_1,\ldots,\la_{K'},z_{1},\ldots,z_{L'})\; .
\end{eqnarray}
Remarkably it is possible to express the $(K',L')$-correlation functions in closed form using Pfaffians \cite{FN,BS,KS,SW}. An elegant approach to the derivation of the correlation functions is the method of skew-orthogonal polynomials \cite{FN,BS}. \\
\\
A family $\{q_j\}_{j=1,\ldots}$, of skew-orthogonal polynomials is said to be skew-orthogonal with respect to the skew-symmetric inner product $(-,-)$, if it satisfies
\begin{eqnarray}\label{skeworthogonalpolynomial}
(q_{2j},q_{2k})&=(q_{2j+1},q_{2k+1})=0\\
(q_{2j},q_{2k+1})&=-(q_{2j+1},q_{2k})=r_j\delta_{jk}\quad{\rm for}\quad {j,k=0,1,\ldots}\;\:.
\end{eqnarray}
In the context of the induced Ginibre ensemble $(-,-)$ denotes the skew-symmetric inner product:
\begin{eqnarray*}
(f,g)&:=(f,g)_{\mathbb{R}}+(f,g)_{\mathbb{C}}\\
(f,g)_{\mathbb{R}}&:=\int_{-\infty}^{\infty}\int_{-\infty}^{\infty}e^{-\f(x^2+y^2)}\sgn(y-x)|xy|^{L}f(x)g(y)dxdy\\
(f,g)_{\mathbb{C}}&:=2i\int_{\mathbb{R}_{+}^2}e^{y^2-x^2}\erfc(\sqrt{2}y)(x^2+y^2)^{L}\\
&\quad\left[f(x+iy)g(x-iy)-g(x+iy)f(x-iy)\right]dxdy.
\end{eqnarray*}
The method of skew-orthogonal polynomials leads to the following representation of the $(K'.L')$-correlation functions in terms of Pfaffians. A detailed derivation can be found in \cite{BS}. Denote
\begin{eqnarray*}
\tilde{q}(w)&:=e^{-\f w^2}w^L\sqrt{\erfc(\sqrt{2}\Image(w))} q(w)\\
\tau_j(w)&:=\cases{\f \int_{\mathbb{R}}\sgn(y-w)\tilde{q}_j(y)dy,&if $w\in\mathbb{R}$\\
i\tilde{q}_j(\bar{w})\sgn(\Image(w)),&if $w\in\mathbb{C}\backslash \mathbb{R}$}.
\end{eqnarray*}
Then
\be\label{Pfaffianentries}
\fl  R_{K',L'}(\la_1,\ldots,\la_{K'},z_1,\ldots,z_{L'})=\Pfaff\left[
\begin{array}{cc}
K_N(\la_j,\la_{j'}) & K_N(\la_j,z_{m'}) \\
K_N(z_m,\la_{j'}) & K_N(z_m,z_{m'})\\
\end{array}
\right].
\ee
with the $2\times 2$ matrix kernel
\be \label{Pfaffianentries1}
K_N(w,w'):=\left[\begin{array}{cc}
             DS_N(w,w') & S_N(w,w') \\
             -S_N(w,w') & IS_N(w,w')+\varepsilon(w,w')
           \end{array}\right],
\ee
where
\begin{eqnarray*}
DS_N(w,w')&=2\sum_{j=0}^{\frac{N}{2}-1}\frac{1}{r_j}\left[\tilde{q}_{2j}(w)\tilde{q}_{2j+1}(w')-\tilde{q}_{2j+1}(w)\tilde{q}_{2j}(w')\right]\\
S_N(w,w')&=2\sum_{j=0}^{\frac{N}{2}-1}\frac{1}{r_j}\left[\tilde{q}_{2j}(w)\tau_{2j+1}(w')-\tilde{q}_{2j+1}\tau_{2j}(w')\right]\\
IS_N(w,w')&=2\sum_{j=0}^{\frac{N}{2}-1}\frac{1}{r_j}\left[\tau_{2j}(w)\tau_{2j+1}(w')-\tau_{2j+1}(w)\tau_{2j}(w')\right]\\
\varepsilon(w,w')&=\cases{\f\sgn(w-w'),&if $\quad w,w'\in\mathbb{R}$\\
0,&else.}
\end{eqnarray*}
The indices  $j$ and $j'$ in (\ref{Pfaffianentries}) run from $1$ to $K'$ whilst $m$ and $m'$ run from $1$ to $L'$, so that the matrix inside the Pfaffian has the block structure with the top left and right bottom blocks being of size $2K'\times 2K'$ and $2L'\times 2L'$, respectively.

The entries of the Pfaffian kernel depend on the family of polynomials $q_j$ which are skew-orthogonal with respect to the inner product $(-,-)$. The direct computation of these polynomials is a tremendous task. As a result a different approach is employed in order to determine the required skew-orthogonal polynomials and thus the kernel entries in (\ref{Pfaffianentries1}).

As already observed in \cite{Sommers2007,Akemann2009} for the real Ginibre ensemble the following relationship for the entry $DS$ of the Pfaffian kernel in (\ref{Pfaffianentries1}) holds true:
\begin{eqnarray}\nonumber
\fl 2\sum_{j=0}^{\frac{N}{2}-1}\frac{1}{r_j}\left[q_{2j}(w)q_{2j+1}(w')-q_{2j+1}(w)q_{2j}(w')\right]
=\frac{1}{r_N}\times \\ (w-w')\langle\det\left(G-wI\right)\det\left(G-w'I\right)\rangle_{N-2},
\label{bb31}
\end{eqnarray}
where $\langle\ldots \rangle_{N-2}$ denotes the average over the  induced Ginibre  ensemble of square matrices $G$ of size $N-2$,  and  $r_N$ is the normalisation of the $N-$th skew-orthogonal polynomial as defined in (\ref{skeworthogonalpolynomial}).

On integrating out the `angular' part of $G$ in (\ref{bb31}), one is left with the integral over the `radial' part of $G$, see e.g. \cite{WK},
\begin{eqnarray*}
\langle\det\left(G-wI\right)\det\left(G-w'I\right)\rangle_{N-2}=\sum_{j=0}^{N-2}\frac{\langle\epsilon_j(GG^T)\rangle_{N-2}}{{n\choose j}}(ww')^{N-2-j}
\end{eqnarray*}
Here $\epsilon_j(GG^T)$ denotes the $j-$th elementary symmetric polynomial in the eigenvalues of $GG^T$. The above relation can also be obtained by expanding the product of determinants on the left hand side in Schur polynomials in the eigenvalues of $G$ and then averaging over the orthogonal group (which is the angular part of $G$), see \cite{FK2007} for a similar integral over the unitary group.

The average $\langle\epsilon_j(GG^T)\rangle_{N-2}$ can be reduced to a Selberg-Aomoto integral  \cite{Me04}. As a result
\begin{eqnarray}
\label{kerneleq}
\fl 2\sum_{j=0}^{\frac{N}{2}-1}\frac{1}{r_j}\left[q_{2j}(w)q_{2j+1}(w')-q_{2j+1}(w)q_{2j}(w')\right]
=\frac{1}{r_N} \times\nonumber\\
(w-w')\frac{\Gamma(L+N-1)}{\sqrt{2\pi}}\sum_{j=0}^{N-2}\frac{(ww')^j}{\Gamma(L+j+1)}\; .
\end{eqnarray}
As already observed in \cite{A} the skew-orthogonal polynomials can now be just "read off" using the fact that each $q_j$ is monic and of degree $j$ by for example differentiating:
\begin{eqnarray*}
\fl q_{2j}(w)=\left. r_{j}\frac{1}{(2j+1)!}\frac{\partial^{2j+1}}{\partial u^{2j+1}}\left[\frac{1}{r_N}(u-w)\Gamma(L+N-1)\sum_{j=0}^{2j}\frac{(wu)^j}{\Gamma(L+j+1)}\right]\right|_{u=0}\\
\fl q_{2j+1}(w)=r_{j}\frac{1}{(2j)!}\frac{\partial^{2j}}{\partial u^{2j}}\left.\left[\frac{1}{r_N}(w-u)\Gamma(L+N-1)\sum_{j=0}^{2j}\frac{(wu)^j}{\Gamma(L+j+1)}\right]\right|_{u=0}
\end{eqnarray*}
Hence for $j=1,2,\ldots$ the following polynomials were found to be skew-orthogonal with respect to the skew-inner product $(-,-)$:
\bes
q_{2j}(w)=w^{2j}\qquad q_{2j+1}(w)=w^{2j+1}-(2j+L)w^{2j-1}
\ees
In addition to that the first two skew-orthogonal polynomials are given by: $q_0(w)=1$ and $q_1(w)=w$. Similarly the normalisation constant can be found by comparison:
\bes
(q_{2j},q_{2j+1})=2\sqrt{2\pi}\Gamma(L+2j+1).
\ees
Thus the entries of the Pfaffian kernel can be explicitly determined. A detailed derivation of the necessary computation can be found in \ref{appendix3}. Here we just state the final result in Theorem~\ref{Pfaffian1} below.

Let
\bea
\nonumber
\psi(z)&=e^{-\f z^2}\sqrt{\erfc(\sqrt{2}\Image(z))}\\
t(x,z)&=\frac{1}{\sqrt{2\pi}}\psi(z)2^{\frac{L}{2}-1}z^{L}\frac{\Gamma(\frac{L}{2},\f x^2)}{\Gamma(L)}\; ,
\label{t_n}
\eea
where $\Gamma (a,z)$ is the upper incomplete Gamma function,
\[
\Gamma (a,z) = \int_z^{\infty} t^{a-1} e{-t}\ dt\; ,
\]
and
%where using $\psi(z)=e^{-\f z^2}\sqrt{\erfc(\sqrt{2}\Image(z))}$:
\begin{eqnarray}\nonumber
s_N(z,w) &=\frac{1}{\sqrt{2\pi}}\psi(z)\psi(w)\sum_{j=0}^{N-2}\frac{\big(wz)^{j+L}}{\Gamma(L+j+1)}\\
\label{r_n}
r_N(x,z)& =\frac{1}{\sqrt{2\pi}}\psi(z)\sgn(x)2^{\frac{N}{2}+\frac{L}{2}-\frac{3}{2}}z^{N+L-1}\frac{\gamma(\frac{N}{2}+\frac{L}{2}-\f,\f x^2)}{\Gamma(N+L-1)}\\
%\fl t(x,z):=\frac{1}{\sqrt{2\pi}}\psi(z)2^{\frac{L}{2}-1}z^{L}\frac{\Gamma(\frac{L}{2},\f x^2)}{\Gamma(L)}.
\end{eqnarray}
Note that $s_N(z,w)$ is symmetric in its variables and $t(x,z)$ and $r_N(x,z)$  are not.

\begin{thm}\label{Pfaffian1} The entries of the complex/complex ($2\times 2$) matrix kernel $K_N(z,w)$  in (\ref{Pfaffianentries})--(\ref{Pfaffianentries1}) are given by:
\begin{eqnarray*}
\fl DS_N(z,w)=(w-z)s_N(z,w); \; S_N(z,w)=i(\bar{w}-z)s_N(z,\bar{w}); \;  IS_N(z,w)=(\bar{z}-\bar{w})s_N(\bar{z},\bar{w}).
%\fl S_N(z,w)=i(\bar{w}-z)s_N(z,\bar{w});\\
%\fl IS_N(z,w)=(\bar{z}-\bar{w})s_N(\bar{z},\bar{w}).
\end{eqnarray*}
The entries of the real/complex and complex/real matrix kernels $K_N(x,z)$ and $K_N(z,x)$ in (\ref{Pfaffianentries})--(\ref{Pfaffianentries1}) are given by:
\begin{eqnarray*}
\fl DS_N(x,z)=(z-x)s_N(x,z);\quad & DS_N(z,x)=-DS_N(x,z); \\
\fl S_N(x,z)=i(\bar{z}-x)s_N(x,\bar{z}); \quad &S_N(z,x)=s_N(x,z)+r_N(x,z)+t(x,z);\\
\fl IS_N(x,z)=-is_N(x,\bar{z})-ir_N(x,\bar{z})-it(x,\bar{z}); \quad & IS_N(z,x)=-IS_N(x,z) \; .
\end{eqnarray*}
And finally, the entries of  the real/real matrix kernel $K_N(x,y)$  in (\ref{Pfaffianentries})--(\ref{Pfaffianentries1}) are given by:
\begin{eqnarray*}
\fl DS_N(x,y)=(y-x)s_N(x,y); \quad S_N(x,y)=s_N(x,y)+r_N(y,x)+t(x,y);\\
\fl IS_N(x,y)=\frac{1}{\sqrt{2\pi}}\Big[-\frac{\gamma(L,y^2)}{\Gamma(L)}+e^{-\f(x-y)^2}\frac{\gamma(L,xy)}{\Gamma(L)}+\frac{y^Le^{\f y^2}}{\Gamma(L)}\int_x^ye^{-\f t^2}t^{L-1}dt\\
\fl+\frac{\gamma(L+N-1,y^2)}{L+N-1}-e^{-\f(x-y)^2}\frac{\gamma(L+N-1,xy)}{\Gamma(L+N-1)}-\frac{y^{L+N-1}e^{\f y^2}}{\Gamma(L+N-1)}\int_x^ye^{-\f t^2}t^{L+N-2}dt\\
\fl-\sgn(y)2^{\frac{L}{2}+N-\frac{3}{2}}\frac{\gamma(\frac{L}{2}+\frac{N}{2}-\f,\f y^2)}{\Gamma(L+N-1)}\int_x^ye^{-\f t^2}t^{L+N-1}dt
-2^{\frac{L}{2}-1}\frac{\Gamma(\frac{L}{2},\f y^2)}{\Gamma(L)}\int_x^ye^{-\f t^2}t^{L}dt\Big]
\end{eqnarray*}
\end{thm}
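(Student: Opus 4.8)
The plan is to prove the theorem by direct substitution: one inserts the explicit skew-orthogonal polynomials $q_{2j}(w)=w^{2j}$, $q_{2j+1}(w)=w^{2j+1}-(2j+L)w^{2j-1}$ (and $q_0=1$, $q_1=w$) together with the normalisations $r_j=2\sqrt{2\pi}\,\Gamma(L+2j+1)$ into the definitions of $DS_N$, $S_N$, $IS_N$ in (\ref{Pfaffianentries1}), and then evaluates the resulting finite sums in closed form. Since $\tilde q_j(w)=\psi(w)\,w^L q_j(w)$ with $\psi(w)=e^{-w^2/2}\sqrt{\erfc(\sqrt2\,\Image w)}$, and since $\tau_j(w)=i\,\sgn(\Image w)\,\tilde q_j(\bar w)$ for $w\notin\mathbb R$ while $\tau_j(w)=\frac12\int_{\mathbb R}\sgn(y-w)\,\tilde q_j(y)\,dy$ for $w\in\mathbb R$, it is enough to compute the entries with real arguments; the complex/complex, real/complex and complex/real entries then follow by replacing the appropriate variable with its conjugate and inserting a factor $i$ (using the convention $\Image z>0$ from Section~\ref{sec:real}), together with the antisymmetries $DS_N(z,x)=-DS_N(x,z)$, $IS_N(z,x)=-IS_N(x,z)$ already present in (\ref{Pfaffianentries1}).

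The entry $DS_N$ needs essentially no new work: up to the prefactor $\psi(z)\psi(w)(zw)^L$, the double sum $2\sum_j r_j^{-1}[\tilde q_{2j}(z)\tilde q_{2j+1}(w)-\tilde q_{2j+1}(z)\tilde q_{2j}(w)]$ is exactly the one already evaluated in (\ref{kerneleq}); reinstating the prefactor and fixing the multiplicative constant from $r_j=2\sqrt{2\pi}\,\Gamma(L+2j+1)$ gives $DS_N(z,w)=(w-z)\,s_N(z,w)$, and the same formula with the appropriate arguments in the real/complex and real/real cases.

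The core of the work is the evaluation of $\tau_j$ on this family and the summation that produces $S_N$ and $IS_N$. The useful observation is the total-derivative identity $\tilde q_{2j+1}(w)=-\frac{d}{dw}\big(e^{-w^2/2}w^{L+2j}\big)$, valid for real $w$ and $j\ge 1$, which follows in one line from $q_{2j+1}(w)=w^{2j+1}-(2j+L)w^{2j-1}$. Integrating it against $\frac12\sgn(y-w)$ and using $\frac{d}{dy}\sgn(y-w)=2\delta(y-w)$ (boundary terms vanishing) gives the elementary value $\tau_{2j+1}(w)=e^{-w^2/2}w^{L+2j}=\tilde q_{2j}(w)$ for real $w$, whereas $\tau_{2j}(w)=\frac12\int_{\mathbb R}\sgn(y-w)e^{-y^2/2}y^{L+2j}\,dy$ reduces, after the substitution $t=\frac12 y^2$, to a $\sgn w$-weighted combination of the incomplete Gamma functions $\gamma(\frac{L+2j+1}{2},\frac12 w^2)$ and $\Gamma(\frac{L+2j+1}{2},\frac12 w^2)$ plus the full-line term $-\frac12\int_{\mathbb R}e^{-y^2/2}y^{L+2j}\,dy$; note that at $j=0$ the polynomial $q_1=w$ is not of the generic form, and this anomaly is what produces the extra $\Gamma(\frac L2,\frac12 w^2)$-type term. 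I would then feed these into $S_N(x,y)=2\sum_j r_j^{-1}[\tilde q_{2j}(x)\tau_{2j+1}(y)-\tilde q_{2j+1}(x)\tau_{2j}(y)]$ and $IS_N(x,y)=2\sum_j r_j^{-1}[\tau_{2j}(x)\tau_{2j+1}(y)-\tau_{2j+1}(x)\tau_{2j}(y)]$, using $\Gamma(L+2j+1)=(L+2j)\Gamma(L+2j)$ to reorganise the odd-polynomial contributions: the even-power pieces coming from $\tilde q_{2j}\tau_{2j+1}$ and the odd-power pieces coming from $\tilde q_{2j+1}\tau_{2j}$ should merge into the single sum $\sum_{j=0}^{N-2}(\cdot)/\Gamma(L+j+1)$ defining $s_N$, the highest retained degree $2j+1=N-1$ should leave a boundary term equal to $r_N$ (which is why its index is $\frac{N}{2}+\frac{L}{2}-\frac12=\frac{L+N-1}{2}$), and the $j=0$ anomaly should assemble into $t$. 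For the real/real $IS_N$ I would additionally have to carry along the extra term $\varepsilon(x,y)=\frac12\sgn(x-y)$ from (\ref{Pfaffianentries1}) and check that it combines with the $e^{-(x-y)^2/2}\gamma(L,xy)/\Gamma(L)$-type cross-terms into the displayed closed form.

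The hard part will be the bookkeeping in this last summation: keeping precise track of which incomplete Gamma functions survive after the even/odd recombination, fixing the normalisations so that the surviving boundary term is exactly $r_N$ (and not an off-by-one variant), and handling the real/real $IS_N$, which has the most terms and is where the $\varepsilon$-term and the several integrals $\int_x^y e^{-t^2/2}t^{\alpha}\,dt$ have to be matched against the claimed expression. None of these steps is conceptually deep, but they are lengthy, which is presumably why the detailed verification is relegated to \ref{appendix3}.
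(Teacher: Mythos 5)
Your plan reproduces the paper's main strategy: insert the explicit monic skew-orthogonal polynomials $q_{2j}(w)=w^{2j}$, $q_{2j+1}(w)=w^{2j+1}-(2j+L)w^{2j-1}$ with normalisation $r_j=2\sqrt{2\pi}\Gamma(L+2j+1)$ into the double sums defining $DS_N,S_N,IS_N$; read off $DS_N$ from (\ref{kerneleq}); and obtain the complex/complex entries from $DS_N$ by conjugation and a factor of $i$. Your key total-derivative identity, $\tilde q_{2j+1}(w)=-\frac{d}{dw}\bigl(e^{-w^2/2}w^{L+2j}\bigr)$ for real $w$, giving $\tau_{2j+1}(w)=e^{-w^2/2}w^{L+2j}$, is precisely the paper's first trick. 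So far this is the same route as in \ref{appendix3}.

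There is however a real gap in how you propose to handle the other half-sum, $\sum_j \tilde q_{2j+1}(z)\tau_{2j}(x)/r_j$. You say you would compute $\tau_{2j}(x)$ explicitly as a $\sgn x$-weighted combination of half-integer incomplete Gamma functions $\gamma\bigl(\tfrac{L+2j+1}{2},\tfrac12 x^2\bigr)$, $\Gamma\bigl(\tfrac{L+2j+1}{2},\tfrac12 x^2\bigr)$ plus a full-line constant, and then sum. But there is no evident closed form for the resulting $j$-sum of such products against powers of $z$, so this route does not visibly lead to the single sum defining $s_N$ plus the two boundary terms $r_N$ and $t$. The paper never computes $\tau_{2j}$ at all; instead it uses a \emph{second} total-derivative identity, $q_{2j+2}(x)-(2j+L+1)q_{2j}(x)=-e^{x^2/2}x^{-L}\partial_x\bigl[e^{-x^2/2}x^{2j+L+1}\bigr]$, which yields the recursion $\tau_{2j+2}(x)-(2j+L+1)\tau_{2j}(x)=e^{-x^2/2}x^{2j+L+1}$. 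Combined with the two-term structure of $\tilde q_{2j+1}(z)=z^L\psi(z)\bigl(z^{2j+1}-(2j+L)z^{2j-1}\bigr)$, this gives an Abel-type summation by parts that telescopes the $\tau_{2j}$'s away, leaving only elementary integrals plus the boundary term of order $N$ (which is exactly $r_N$) and the $j=0$ anomaly (which is $t$). This second identity is the linchpin of the computation and is missing from your outline; without it the plan, as stated, does not close.

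Two smaller points. For the real/real $IS_N$ the paper does not re-run the double sum in $\tau_{2j}\tau_{2j+1}$; it uses $\partial_x\tau_j(x)=-\tilde q_j(x)$ to write $IS_N(x,y)$ as the primitive $\pm\int_x^y S_N(t,y)\,dt$ of the already-computed $S_N$, reducing the hardest entry to one single-variable integration rather than a fresh double sum with two non-elementary factors. And $\varepsilon(x,y)=\frac12\sgn(x-y)$ is a \emph{separate} entry of the matrix kernel in (\ref{Pfaffianentries1}) -- it is added to $IS_N$ in $K_N$, not contained in $IS_N$ -- so it does not need to be carried along when deriving the closed form of $IS_N$ itself.
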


\subsection{Real and complex eigenvalue densities}

The eigenvalue densities for finite matrix dimensions $N$ can be read from the $(K',L')$-correlation functions (\ref{Pfaffianentries1}) specialising to the $(0,1)$ and $(1,0)$ cases. Indeed,
\be \label{rho_N_C}
\rho_N^C(z):=R_{0,1}(-,z)=\Pfaff K_N(z,z)=S_N(z,z)\quad (z\in \mathbb{C}_{+})\; ,
\ee
is the mean density of complex eigenvalues whilst
\be \label{rho_N_R}
\rho_N^R(x):=R_{1,0}(x,-)=\Pfaff K_N(x,x)=S_N(x,x)\quad (x\in \mathbb{R})
\ee
is the mean density of real eigenvalues.  Note the normalisation
\[
2\int_{\mathbb{C}_{+}}\rho_N^C(z)\ d^2 z + \int_{\mathbb{R}}\rho_N^R(x)\  d x =N\; .
\]
Theorem~\ref{Pfaffian1} now yields the finite-$N$ complex and real eigenvalue densities in a closed form
\begin{eqnarray}
\fl \rho_N^C(x+iy)&=\sqrt{\frac{2}{\pi}}y\erfc(\sqrt{2}y)e^{y^2-x^2}\sum_{j=0}^{N-2}\frac{(x^2+y^2)^{j+L}}{\Gamma(j+L+1)} \label{bb53} \\
\fl&=\sqrt{\frac{2}{\pi}}y\erfc(\sqrt{2}y)e^{2y^2}
\left[\frac{\gamma(L,x^2+y^2)}{\Gamma(L)}-\frac{\gamma(L+N-1,x^2+y^2)}{\Gamma(L+N-1)}\right] ,
\label{R1rec}
\end{eqnarray}
and
\begin{eqnarray}
 \fl \rho_N^R(x)&=\frac{1}{\sqrt{2\pi}}e^{-x^2}\sum_{j=0}^{N-2}\frac{x^{2(j+L)}}{\Gamma(j+L+1)}
+t(x,x)+r_N(x,x) \label{bb54} \\
\fl&=\frac{1}{\sqrt{2\pi}}\Big[\frac{\gamma(L,x^2)}{\Gamma(L)}-\frac{\gamma(L+N-1,x^2)}{\Gamma(N+L-1)}\Big]
+t(x,x)+r_N(x,x)\; ,
\label{R1rer}
\end{eqnarray}
where $t(x,x)$ and  $r_n(x,x)$ are the functions defined in (\ref{t_n}) -- (\ref{r_n}).

%The complex density vanishes on the real axis, and, away from the origin, $\rho_N^C(x+iy) \propto y $ for small $y$. In the opposite case of large $y$ we recover the density of complex eigenvalues in the complex induced Ginibre ensemble (cf  (\ref{b2})),
%\[
%\rho_N^C(x+iy) \simeq \frac{1}{\pi}e^{-(x^2+y^2)}
%\sum_{j=0}^{N-2}\frac{(x^2+y^2)^{j+L}}{(j+L)!}\; .
%\]
%This is very much the same as for the original Ginibre ensemble.
%Near the origin, the complex eigenvalue density vanishes proportionally $y^{L+1}$ and this is a

\begin{figure}[htbp]
\centering
\includegraphics[width=4.0in]{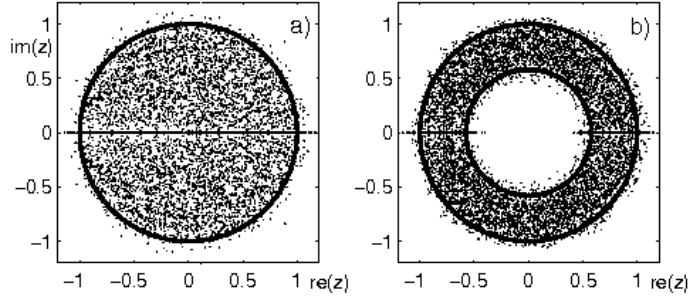} %{eig_r3.eps} % {igr.eps}
\caption[]{Spectra of random matrices from real induced Ginibre ensemble
for $N=128$
 and a) $M=N=128$ (no hole)  and b) $M=N+32$.
 Each picture consists of $128$ independent realisations and the spectra
are rescaled by a factor $1/\sqrt{M}$
as in Fig.~\ref{exec}.
In analogy to {\sf GinOE} we recognise a non-trivial
fraction of real eigenvalues.
}
\label{exer}
\end{figure}

\subsection{Asymptotic Analysis}

In this section we shall investigate the real induced Ginibre ensemble in the scaling limit when the free parameter $L$ grows proportionally with the matrix dimension $N$, which, in the language of quadratisation of rectangular matrices, corresponds to tall rectangular matrices which are neither skinny nor  almost-square.

The real and complex eigenvalue densities $\rho_N^C(z)$  (\ref{R1rec})  and  $\rho_N^R(x) $ (\ref{R1rer}) are already in a convenient form for the asymptotic analysis.  The well known limit relation for the error function \cite{AS},
\bes
\lim_{N\to\infty}\sqrt{N}we^{2Nw^2}\erfc(\sqrt{2N}w)=\frac{1}{\sqrt{2\pi}}\; ,
\ees
combined with the saddle-point analysis of the integral in (\ref{b3}) quickly gives the limiting (mean) density of complex eigenvalues. In the leading order, the distribution of complex eigenvalues turns out to be uniform in an annulus with the inner and outer radii $r_{in}=\sqrt{L}$ and $r_{out}=\sqrt{L+N}$, exactly as in the complex induced Ginibre ensemble.

Similarly, the saddle-point analysis of each of the incomplete Gamma functions in  (\ref{R1rer}) yields the limiting density of real eigenvalues.  In the leading order, the real eigenvalues in the induced Ginibre ensemble populate two symmetric segments  of the real axis, $[r_{in}, r_{out}]$  and $[-r_{out}, -r_{in}]$, with constant density. The theorem below summarises our findings and Figs.~\ref{exer}~and~\ref{cx_den} provide a comparison of the analytic results versus numerical simulations.

\begin{thm} \label{bb:thm2}
Suppose that $L=N\al$ with $\al >0$. Then:
\begin{enumerate}[(a)]
\item
In the leading order as $N\to\infty$, the average number of real eigenvalues in the real induced Ginibre ensemble is $\sqrt{\frac{2}{\pi}}(\sqrt{L+N}-\sqrt{L})$ and the density of  real eigenvalues
obeys the following limiting relation:
\begin{eqnarray*}
\lim_{N\to\infty}\rho_N^R(\sqrt{N}x)=\frac{1}{\sqrt{2\pi}}\left[\Theta(|x|-\sqrt{\al})-\Theta(|x|-\sqrt{\al+1})\right]\; .
\end{eqnarray*}
\item
The density of complex eigenvalues obeys the following limiting relation:
\begin{eqnarray*}
\lim_{N\to\infty}\rho_N^C(\sqrt{N}z)=\frac{1}{\pi}\left[\Theta(|z|-\sqrt{\al})-\Theta(|z|-\sqrt{\al+1})\right].
\end{eqnarray*}
\end{enumerate}
\end{thm}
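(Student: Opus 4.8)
Both statements are read directly off the closed forms of the finite‑$N$ densities, $\rho_N^C$ in (\ref{R1rec}) and $\rho_N^R$ in (\ref{R1rer}), after the rescaling $z\mapsto\sqrt N z$, $x\mapsto\sqrt N x$ with $L=N\al$. Two elementary asymptotic facts do the work: the error‑function limit recorded in the text immediately preceding the statement, and the saddle‑point (Laplace) behaviour of the regularised incomplete Gamma functions,
\[
\frac{\gamma(N\beta,N\tau)}{\Gamma(N\beta)}\longrightarrow\Theta(\tau-\beta),\qquad \frac{\Gamma(N\beta,N\tau)}{\Gamma(N\beta)}\longrightarrow\Theta(\beta-\tau)\qquad(N\to\infty),
\]
valid for fixed $\beta,\tau>0$, $\tau\neq\beta$ (with value $\tfrac12$ when $\tau=\beta$); these follow from Laplace's method applied to the integral in (\ref{b3}), or from concentration of the $\mathrm{Gamma}(N\beta,1)$ law about its mean.

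\emph{Part (b).} Put $z\mapsto\sqrt N z$ in (\ref{R1rec}) and write $w=\im z>0$. The prefactor satisfies $\sqrt{2/\pi}\,(\sqrt N\,w)\,\erfc(\sqrt{2N}\,w)\,e^{2Nw^{2}}\to\sqrt{2/\pi}\,(2\pi)^{-1/2}=1/\pi$ by the error‑function limit, while the bracket is a difference of regularised lower incomplete Gamma functions with common argument $N|z|^{2}$ and shape parameters $L=N\al$ and $L+N-1\sim N(\al+1)$; by the display it tends to $\Theta(|z|^{2}-\al)-\Theta(|z|^{2}-(\al+1))=\Theta(|z|-\sqrt\al)-\Theta(|z|-\sqrt{\al+1})$. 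Multiplying the two factors yields the claim.

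\emph{Part (a), density.} Put $x\mapsto\sqrt N x$ in (\ref{R1rer}). The leading bracket is the same object as in part (b) (argument $Nx^{2}$), so it tends to $\tfrac1{\sqrt{2\pi}}[\Theta(x^{2}-\al)-\Theta(x^{2}-(\al+1))]=\tfrac1{\sqrt{2\pi}}[\Theta(|x|-\sqrt\al)-\Theta(|x|-\sqrt{\al+1})]$. It remains to check that the two extra terms $t(\sqrt N x,\sqrt N x)$ and $r_N(\sqrt N x,\sqrt N x)$ from (\ref{t_n})–(\ref{r_n}) vanish in the limit for every $x$ off the edge radii, and this is the technical core. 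Evaluating, via Stirling's formula, the Gamma‑ratios $\Gamma(L/2)/\Gamma(L)$ and $\Gamma((L+N-1)/2)/\Gamma(L+N-1)$ that sit in the normalisations, each of these terms is exhibited as a bounded regularised incomplete Gamma factor times $\exp\!\big(\tfrac N2\,g(x)+O(\log N)\big)$, where $g$ is a strictly concave rate function with a single zero at its maximum, located at $x^{2}=\al$ for $t$ and at $x^{2}=\al+1$ for $r_N$. Since $g<0$ off the maximum, $t(\sqrt N x,\sqrt N x)\to0$ for $x\neq\pm\sqrt\al$ and $r_N(\sqrt N x,\sqrt N x)\to0$ for $x\neq\pm\sqrt{\al+1}$, so $t+r_N\to0$ outside the finite set of edge radii; this gives the stated limiting density there. (On the edge circles and segments the convergence is understood in the leading, distributional sense, the sharp transition being an $\erfc$‑profile in the local variable, as in (\ref{b5}).)

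\emph{Part (a), count.} Integrate (\ref{R1rer}) over $\mathbb R$. The leading term integrates termwise, $\frac1{\sqrt{2\pi}}\int_{\mathbb R}e^{-x^{2}}\sum_{j=0}^{N-2}\frac{x^{2(j+L)}}{\Gamma(j+L+1)}\,dx=\frac1{\sqrt{2\pi}}\sum_{j=0}^{N-2}\frac{\Gamma(j+L+\tfrac12)}{\Gamma(j+L+1)}$; using $\Gamma(m+\tfrac12)/\Gamma(m+1)=\sqrt\pi\binom{2m}{m}4^{-m}$ and the central‑binomial partial sum $\sum_{m=0}^{n}\binom{2m}{m}4^{-m}=(2n+1)\binom{2n}{n}4^{-n}$ (or Euler–Maclaurin on $\sum_j(j+L)^{-1/2}$) this sum equals $\sqrt{2/\pi}\,(\sqrt{L+N}-\sqrt L)$ to leading order. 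The remaining contributions $\int_{\mathbb R}t(x,x)\,dx$ and $\int_{\mathbb R}r_N(x,x)\,dx$ are $O(1)$ by Laplace's method — their integrands concentrate on windows of bounded width about $|x|=\sqrt{N\al}$ and $|x|=\sqrt{N(\al+1)}$ — hence are negligible against the main term of order $\sqrt N$. The sole place calling for real computation, and so the main obstacle, is the Stirling bookkeeping that produces the rate functions $g$ in part (a) and certifies their negativity; the rest is a direct application of the two quoted asymptotic facts.
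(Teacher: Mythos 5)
Your proposal is correct and follows exactly the route the paper indicates (and then omits as "straightforward but tedious"): the $\erfc$ limit relation for the prefactor of $\rho_N^C$, saddle-point/Laplace asymptotics of the regularised incomplete Gamma functions for the bulk Heaviside profiles, Stirling bookkeeping to kill the $t$ and $r_N$ terms off the edge radii, and termwise integration of (\ref{bb54}) for the expected number of real eigenvalues. The details you supply (the rate functions for $t$ and $r_N$, the $\Gamma(m+\tfrac12)/\Gamma(m+1)$ sum) are accurate and fill in precisely what the paper leaves out.
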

On setting $L=0$ in the above results one recovers  the expected number $\sqrt{2N/\pi}$ of real eigenvalues in the Ginibre ensemble \cite{Edelman1994} together with the uniform densities of distribution of real and complex eigenvalues \cite{Edelman1994, E}.

One can examine how quickly the eigenvalue density falls to zero when one moves away from the boundary of the eigenvalue support. At the inner and outer circular edges \emph{away from the real line},  one recovers the same eigenvalue density profile as for the complex Ginibre ensemble \cite{Forrester1999}.

\begin{thm}\label{bb:thm3}
Suppose that $L=N\al$ with $\al >0$. Then, for fixed  $\xi \in \mathbb{R}$ and $ \phi \not=0, \pi$:
\be \label{bb50}
\lim_{N\to\infty} \rho_N^C( (\sqrt{L}-\xi)e^{i\phi})= \lim_{N\to\infty} \rho_N^C ( (\sqrt{L+N}+\xi)e^{i\phi})=\frac{1}{2\pi}\erfc(\sqrt{2}\xi).
\ee
\end{thm}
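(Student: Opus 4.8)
The plan is to read the limit off directly from the closed form (\ref{R1rec}) for $\rho_N^C$, tracking its three factors separately. I would substitute $x+iy = (r_{out}+\xi)e^{i\phi}$ with $r_{out}=\sqrt{L+N}$ at the outer edge, and $x+iy=(r_{in}-\xi)e^{i\phi}$ with $r_{in}=\sqrt{L}$ at the inner edge; in both cases $x^2+y^2$ equals $(r_{out}+\xi)^2$, resp.\ $(r_{in}-\xi)^2$, while $y=(r_{out}+\xi)\sin\phi$, resp.\ $(r_{in}-\xi)\sin\phi$. Since $L=N\al$ with $\al>0$ and $\phi\neq 0,\pi$, the imaginary part $y$ grows like $\sqrt N$; this growth is precisely where the hypothesis on $\phi$ is used.

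\emph{The Gaussian prefactor.} Writing $y=\sqrt N\, w_N$ with $w_N\to\sqrt{\al+1}\,\sin\phi\neq 0$ (outer edge) or $w_N\to\sqrt\al\,\sin\phi\neq 0$ (inner edge), the limit relation $\lim_{N\to\infty}\sqrt N\, w\, e^{2Nw^2}\erfc(\sqrt{2N}w)=1/\sqrt{2\pi}$ quoted just before the theorem (valid uniformly on compact subsets of $w>0$) gives $\sqrt{2/\pi}\; y\,\erfc(\sqrt 2 y)\,e^{2y^2}\to \sqrt{2/\pi}\cdot 1/\sqrt{2\pi}=1/\pi$. So the prefactor contributes the Ginibre bulk value $1/\pi$.

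\emph{The bracket.} Write $P(a,s):=\gamma(a,s)/\Gamma(a)$ for the regularised lower incomplete gamma function (\ref{b3}). One must then evaluate $P(L,s)-P(L+N-1,s)$ at $s=(r_{out}+\xi)^2$, resp.\ $s=(r_{in}-\xi)^2$. The key input is the normal (saddle-point) approximation $P(a, a+t\sqrt a)\to\Phi(t):=\f\big(1+\erf(t/\sqrt 2)\big)$ as $a\to\infty$ for fixed $t$, which follows from Laplace's method applied to $\gamma(a,s)=\int_0^s e^{(a-1)\log\tau-\tau}\,d\tau$ (saddle at $\tau=a-1$) together with Stirling's formula for $\Gamma(a)$; it is also immediate from the Poisson/Gamma central limit theorem. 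At the outer edge $s=(L+N)+2\xi\sqrt{L+N}+\xi^2$: for the first term $a=L\sim N\al$ and $s-a\sim N\gg\sqrt a$, so $P(L,s)\to 1$; for the second term $a=L+N-1$ and $(s-a)/\sqrt a\to 2\xi$, so $P(L+N-1,s)\to\Phi(2\xi)$; hence the bracket tends to $1-\Phi(2\xi)=\f\erfc(\sqrt 2\xi)$. At the inner edge $s=L-2\xi\sqrt L+\xi^2$: for the first term $(s-L)/\sqrt L\to -2\xi$, so $P(L,s)\to\Phi(-2\xi)=\f\erfc(\sqrt 2\xi)$; for the second term $a=L+N-1\sim N(\al+1)$ and $s-a\sim -N\ll-\sqrt a$, so $P(L+N-1,s)\to 0$; again the bracket tends to $\f\erfc(\sqrt 2\xi)$. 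Multiplying prefactor and bracket yields $\rho_N^C\to\frac1\pi\cdot\f\erfc(\sqrt 2\xi)=\frac1{2\pi}\erfc(\sqrt 2\xi)$ in both cases, which is (\ref{bb50}).

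\emph{Main obstacle.} The only step that is not bookkeeping is to make the normal approximation of $P(a,s)$ rigorous while the argument $s$ itself drifts with $N$ on the $\sqrt N$-scale: one needs that after the rescaling $\tau=a+\sigma\sqrt a$ the integrand (divided by $\Gamma(a)$) converges to the Gaussian density uniformly on compact $\sigma$-sets, with the tail contributions controlled by standard exponential bounds on the incomplete gamma function. This can be shortcut by quoting Temme's uniform asymptotic expansion of $P(a,s)$ in terms of $\erfc$; the two terms where the bracket tends to $0$ or $1$ are easier still, following from elementary tail estimates of the form $P(a,s)=O(e^{-c(s-a)^2/a})$, resp.\ $1-P(a,s)=O(e^{-c(a-s)^2/a})$.
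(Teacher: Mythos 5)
Your proof is correct, and it is essentially the approach the paper sketches: the paper omits the proof as ``straightforward but tedious,'' indicating only that one should combine the error-function limit $\sqrt{N}\,w\,e^{2Nw^2}\erfc(\sqrt{2N}w)\to 1/\sqrt{2\pi}$ with a saddle-point analysis of the two incomplete Gamma functions in (\ref{R1rec}). Your factor-by-factor treatment does exactly that, with the regularised incomplete gamma's normal approximation $P(a,a+t\sqrt a)\to\Phi(t)$ (the Gamma CLT) serving as the clean packaging of the saddle-point step; the bookkeeping at both edges and the recognition that $\phi\neq 0,\pi$ is precisely what forces $y\sim\sqrt N$ so that the $\erfc$-prefactor collapses to the bulk constant $1/\pi$ are both correct. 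One implicit point worth keeping in mind, present in the paper's statement too, is that $\rho_N^C$ in (\ref{rho_N_C}) is defined on $\mathbb{C}_+$, so $\phi$ should be taken in $(0,\pi)$ (or one appeals to the reflection symmetry $z\mapsto\bar z$ of the spectrum), which is what guarantees $w_N>0$ in your application of the quoted limit.
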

It follows from (\ref{bb50}) that the density of complex eigenvalues in the real induced Ginibre ensemble falls to zero very fast (at a Gaussian rate) away from the boundary of the eigenvalue support. This is also true of the density of real eigenvalues:

\begin{thm}\label{bb:thm4}
Suppose that $L=N\al$ with $\al >0$. Then, for fixed  $\xi \in \mathbb{R}$,
\bes
\fl \lim_{N\to\infty}\rho_N^R(\sqrt{L}-\xi)=\lim_{N\to\infty}\rho_N^R (\sqrt{L+N}+\xi)= \frac{1}{\sqrt{2\pi}}\left[\erfc(\sqrt{2}\xi)+\frac{1}{2\sqrt{2}}e^{-\xi^2}\erfc(-\xi)\right].
\ees
\end{thm}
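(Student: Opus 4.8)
The plan is to read off the two edge profiles directly from the exact finite-$N$ density (\ref{R1rer}) by a Laplace/saddle-point analysis of the incomplete Gamma functions in it. Write
\bes
\rho_N^R(x)=B_N(x)+t(x,x)+r_N(x,x),\qquad B_N(x):=\frac{1}{\sqrt{2\pi}}\left[\frac{\gamma(L,x^2)}{\Gamma(L)}-\frac{\gamma(L+N-1,x^2)}{\Gamma(N+L-1)}\right],
\ees
with $t(x,x)$, $r_N(x,x)$ as in (\ref{t_n})--(\ref{r_n}) (note $\psi(x)=e^{-\f x^2}$ for real $x$, and $\sgn(x)=1$ at both edges once $N$ is large). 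I would substitute $x=\sqrt L-\xi$ (inner edge) or $x=\sqrt{L+N}+\xi$ (outer edge), set $L=N\al$, and handle the three summands separately. Two facts about the regularised lower incomplete Gamma $P(a,z):=\gamma(a,z)/\Gamma(a)$ as $a\to\infty$ are used throughout: (i) whenever $(z-a)/\sqrt a\to\eta$ one has $P(a,z)\to\f\erfc(-\eta/\sqrt2)$; and (ii) $P(a,z)\to1$ if $\liminf(z/a)>1$, while $P(a,z)\to0$ exponentially fast if $\limsup(z/a)<1$. Both follow from Laplace's method applied to (\ref{b3}), exactly as in the proofs of Theorems~\ref{bb:thm2}--\ref{bb:thm3}, and I would pair them with Stirling's formula and the Legendre duplication formula $\Gamma(a)=\pi^{-1/2}2^{a-1}\Gamma(\f a)\Gamma(\frac{a+1}{2})$ to simplify the ratios of Gamma functions.

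For the bracket: at the outer edge $x^2=(L+N)+2\xi\sqrt{L+N}+O(1)$, so (since $L=N\al<L+N$) fact (ii) gives $\gamma(L,x^2)/\Gamma(L)\to1$, whereas the second term has index $a=L+N-1$ with $(x^2-a)/\sqrt a\to2\xi$, so fact (i) gives $\gamma(L+N-1,x^2)/\Gamma(N+L-1)\to\f\erfc(-\sqrt2\,\xi)$; hence $B_N$ converges to a constant multiple of $\erfc(\sqrt2\,\xi)$. At the inner edge $x^2=L-2\xi\sqrt L+O(1)$ and the roles interchange: the first term is now in the transition regime and converges to $\f\erfc(\sqrt2\,\xi)$, while the second has argument $\sim L$ far below its index and tends to $0$; the limit of $B_N$ is the same. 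This yields the $\erfc(\sqrt2\,\xi)$ term of the profile and explains why it is common to the two edges.

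The terms $t$ and $r_N$ exchange roles between the edges. The ``profile-generating'' factor $x^L e^{-\f x^2}$ in $t(x,x)=\frac{1}{\sqrt{2\pi}}e^{-\f x^2}2^{\f L-1}x^L\,\Gamma(\f L,\f x^2)/\Gamma(L)$ has its Laplace maximum at $x=\sqrt L$ (the inner radius), while $x^{N+L-1}e^{-\f x^2}$ in $r_N(x,x)=\frac{1}{\sqrt{2\pi}}e^{-\f x^2}2^{(N+L-3)/2}x^{N+L-1}\,\gamma(\frac{N+L-1}{2},\f x^2)/\Gamma(N+L-1)$ peaks at $x=\sqrt{N+L-1}$ (essentially the outer radius). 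Hence $r_N(x,x)\to0$ exponentially at the inner edge (its Gamma-argument $\f x^2\sim\f L$ sits far left of the peak at $\frac{N+L}{2}$), and $t(x,x)\to0$ at the outer edge for the symmetric reason. For the surviving term at the inner edge I would write $\Gamma(\f L,\f x^2)=\Gamma(\f L)\,[1-P(\f L,\f x^2)]$ and apply (i) with $a=\f L$ and $(\f x^2-a)/\sqrt a\to-\sqrt2\,\xi$, obtaining $\Gamma(\f L,\f x^2)/\Gamma(\f L)\to\f\erfc(-\xi)$; the prefactor $\frac{1}{\sqrt{2\pi}}2^{\f L-1}x^L e^{-\f x^2}\Gamma(\f L)/\Gamma(L)$ becomes, via the duplication formula, $2^{-1/2}(\f x^2)^{\f L}e^{-\f x^2}/\Gamma(\frac{L+1}{2})$, which by the local limit theorem for the $\Gamma(\frac{L+1}{2},1)$ density (evaluated at $\f x^2=\frac{L+1}{2}-\sqrt2\,\xi\sqrt{(L+1)/2}+o(\sqrt L)$) tends to $\frac{1}{\sqrt2}(2\pi)^{-1/2}e^{-\xi^2}$. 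Multiplying gives $t(x,x)\to\frac{1}{\sqrt{2\pi}}\cdot\frac{1}{2\sqrt2}\,e^{-\xi^2}\erfc(-\xi)$, and an entirely analogous computation with index $\frac{N+L-1}{2}$ and power $x^{N+L-1}$ gives $r_N(x,x)$ the same limit at the outer edge. This is the $e^{-\xi^2}\erfc(-\xi)$ term, again common to both edges.

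Summing the three limits at either edge and tidying up with $\erfc(-u)=2-\erfc(u)$ produces the asserted profile. The main obstacle is that facts (i)--(ii) must be used with explicit error estimates: the prefactors $2^{\f L-1}x^L e^{-\f x^2}/\Gamma(L)$ and its $r_N$-analogue are products of individually exponentially large and small quantities, so their finite $O(1)$ limits appear only after a delicate cancellation that has to be tracked through Stirling-with-remainder and the duplication formula, and one must also check that the lower-order ($o(\sqrt a)$) shifts inside the incomplete-Gamma arguments do not affect the limits. The remaining tasks are more routine: verifying that the limits hold locally uniformly in $\xi$, so that a genuine edge density profile is obtained, and that the exponentially small ``wrong-edge'' contributions remain negligible after multiplication by their polynomially growing prefactors.
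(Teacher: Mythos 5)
The paper omits the proof of Theorem~\ref{bb:thm4} entirely (``The proofs of Theorems~\ref{bb:thm2},~\ref{bb:thm3}~and~\ref{bb:thm4} are straightforward but tedious and are omitted''), so there is no author-supplied argument to compare against. Your decomposition into $B_N+t+r_N$ and the use of the central-limit asymptotics $P(a,a+\eta\sqrt a)\to\tfrac12\erfc(-\eta/\sqrt2)$, the Legendre duplication formula, and the local CLT for the Gamma density is the natural and surely the intended route; I have checked the key intermediate steps (the prefactor $\tfrac{1}{\sqrt2}(\tfrac12 x^2)^{L/2}e^{-x^2/2}/\Gamma(\tfrac{L+1}{2})$, and the limits $\Gamma(\tfrac L2,\tfrac12 x^2)/\Gamma(\tfrac L2)\to\tfrac12\erfc(-\xi)$ and $t\to\tfrac{1}{\sqrt{2\pi}}\cdot\tfrac{1}{2\sqrt2}e^{-\xi^2}\erfc(-\xi)$ at the inner edge, with the symmetric statements for $r_N$ at the outer edge) and they are right.

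The gap is in your very last sentence, ``Summing the three limits at either edge and tidying up with $\erfc(-u)=2-\erfc(u)$ produces the asserted profile.'' It does not, and no identity will fix it. You leave the constant in front of the $\erfc(\sqrt2\,\xi)$ contribution from $B_N$ unspecified (``a constant multiple''), but if you track it you find, at the outer edge,
\begin{equation*}
B_N\to\frac{1}{\sqrt{2\pi}}\Bigl[1-\tfrac12\erfc(-\sqrt2\,\xi)\Bigr]=\frac{1}{2\sqrt{2\pi}}\erfc(\sqrt2\,\xi),
\end{equation*}
and the same $\tfrac{1}{2\sqrt{2\pi}}\erfc(\sqrt2\,\xi)$ at the inner edge. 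Combined with your correct $t,r_N$ limits the sum is $\frac{1}{\sqrt{2\pi}}\bigl[\tfrac12\erfc(\sqrt2\,\xi)+\tfrac{1}{2\sqrt2}e^{-\xi^2}\erfc(-\xi)\bigr]$, which differs from the theorem by a factor $\tfrac12$ in the first term. In fact this is evidence that Theorem~\ref{bb:thm4} as printed contains a typo: letting $\xi\to-\infty$ in the edge profile must reproduce the bulk density $\tfrac{1}{\sqrt{2\pi}}$ from Theorem~\ref{bb:thm2}(a), and the corrected constant does exactly that, while the printed formula gives $\tfrac{2}{\sqrt{2\pi}}$. (Equivalently, the $B_N$ piece is just $\tfrac{\pi}{\sqrt{2\pi}}$ times the complex density in (\ref{b2}) with $N\mapsto N-1$, so its edge limit is forced to be $\tfrac{1}{2\sqrt{2\pi}}\erfc(\sqrt2\,\xi)$ by (\ref{b5}).) So your method is sound and actually yields the correct profile, but the claim that it reproduces the stated formula is false, and you should have been stopped by the bulk-matching sanity check.
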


\medskip

The proofs of Theorems~\ref{bb:thm2},~\ref{bb:thm3}~and~\ref{bb:thm4} are straightforward but tedious and are omitted.\\

Another interesting transitional region appears close to the real line. Here the density of complex eigenvalues is more sparse: for finite matrix dimensions $\rho_N^C(x+iy)\propto y$ for small values of $y$. One can easily obtain the complex eigenvalue density profile in the crossover from zero density on the real axis to the plateau  of constant density far away from the real axis. For example, at the origin
$
\lim_{N\to\infty}\rho_N^C(iv)=\sqrt{\frac{2}{\pi}}v\erfc(\sqrt{2}v)e^{2v^2}
$,
and more generally
\bes
\lim_{N\to\infty}\rho_N^C (\sqrt{N} u + i v)=\sqrt{\frac{2}{\pi}}v\erfc(\sqrt{2}v)e^{2v^2} \quad \text{for $|u| \in (\sqrt{\al}, \sqrt{\al +1})$}.
\ees
%Furthermore closing down on the outer real edge: $z=\sqrt{N(\al+1)}+iv$
%\bes
%\lim_{N\to\infty}\rho_N^C(z)=\frac{1}{\sqrt{2\pi}}v\erfc\big(\sqrt{2}v\big)e^{2v^2}.
%\ees

It should be noted that apart from the support of the eigenvalue distribution which clearly depends on $\al$, the limiting eigenvalue density profiles in various scaling regimes in the induced Ginibre ensemble are independent of $\al $ and coincide with those for the original Ginibre ensemble. This correspondence also extends to the eigenvalue correlation functions. We show in \ref{appen_C} that the eigenvalue correlation functions in the induced Ginibre ensemble in the bulk and at the edges are given by the expressions obtained for the Ginibre ensemble \cite{BS}, see also \cite{FN, Sommers2007, Forrester2008}.

\begin{figure}[]
\centering
\includegraphics[width=5.0in]{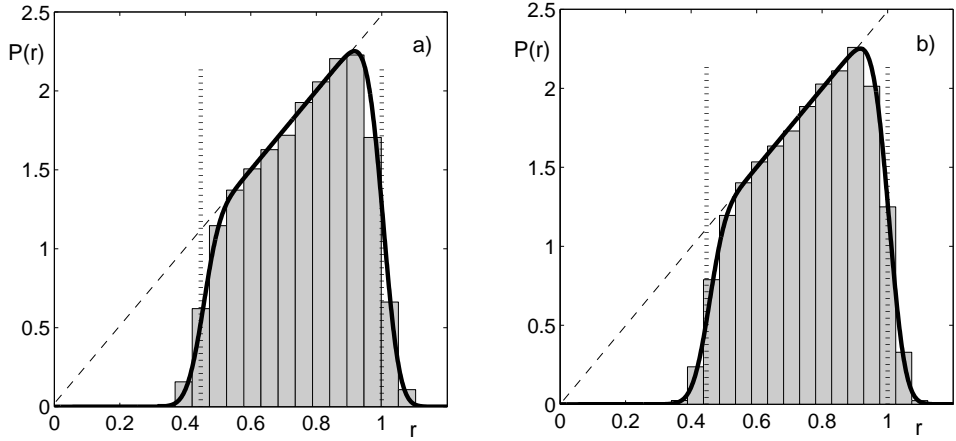}
\caption[]{
Radial density of complex eigenvalues of $256$ realisations of
a) complex and b) real
induced Ginibre matrices of dimension $N=128$ with $L=32$ compared
with the analytical results (solid lines)
obtained from Eq.~(\ref{b2}) and~(\ref{R1rec}), respectively.
% integrated over all phases.
Linear character of the curves between the
inner radius, $r_{\rm in}=\sqrt{L}/\sqrt{L+N}\approx 0.447$ % $r_{\rm in}=\sqrt{M-N}$
and the outer radius, $r_{\rm out}=1$,
(both rescaled by $1/\sqrt{L+N}$ and represented by vertical lines)
suggests that the
distribution of eigenvalues on the ring is uniform in both cases.
Since the dimension $N$ is relatively small, the area on which
$P(z)$ is linear does not cover the entire interval $[r_{\rm in},r_{\rm out}]$.
}
\label{cx_den}
\end{figure}

\subsection{Almost square matrices}
Another interesting regime arises when the rectangularity index $L$ is fixed instead of growing proportionally with matrix size as discussed toward the end of Section \ref{sec:compl}. In the bulk, i.e. at a distance of order $\sqrt{N}$ from the origin one recovers uniform distribution of eigenvalues (real and complex) and Ginibre correlations, whereas in the vicinity of the origin new eigenvalue statistics arise. The eigenvalue densities can be obtained by extending the summation in (\ref{bb53}), (\ref{bb54}) to infinity. This yields
\begin{eqnarray*}
\lim_{N\to\infty}\rho_N^C(x+iy)
%=\lim_{N\to\infty}\sqrt{\frac{2}{\pi}}\, y\, \erfc(\sqrt{2}y)e^{y^2-x^2}\sum_{j=0}^{N}\frac{(x^2+y^2)^{j+L}}{\Gamma(j+L+1)}\\&
=\sqrt{\frac{2}{\pi}}y\erfc(\sqrt{2}y)\, e^{2y^2}\, \frac{\gamma(L,x^2+y^2)}{\Gamma(L)}
\end{eqnarray*}
for the density of complex eigenvalues
%Rescaling $z=\sqrt{N}\la$ and a quick asymptotic analysis shows that the limiting mean density of complex eigenvalues is again constant:
%\bes
%\rho^C(\la)=\lim_{N\to\infty}R^C_1(\sqrt{N}\la)=\frac{1}{\pi}\Theta(1-|z|).
%\ees
%In addition to that the mean eigenvalue density of real eigenvalues takes on the form:
and
\begin{eqnarray*}
\lim_{N\to\infty}\rho_N^R(x)&=\frac{1}{\sqrt{2\pi}}\Big[\frac{\gamma(L,x^2)}{\Gamma(L)}
+e^{-\f x^2}x^{L}2^{\frac{L}{2}-1}\frac{\Gamma\Big(\frac{L}{2},\f x^2\Big)}{\Gamma(L)}\Big]
\end{eqnarray*}
for the density of real eigenvalues.
%Using the scaling $x=\sqrt{N}r$ it can easily be shown that the large $N$ limit of the first term is unity, whereas a saddle-point analysis as well as using the doubling formula for the gamma function yields zero for the remaining term of the real eigenvalue density. As a result:
%\be
%\rho^R(x)=\lim_{N\to\infty}R^R_1(\sqrt{N}r)=\frac{1}{\sqrt{2\pi}}\Theta(1-|x|).
%\ee

As in the case of complex matrices the higher order correlation functions at the origin are non-universal:
\begin{enumerate}
  \item The limiting real/real kernel is given by a $2\times 2$ matrix
    \bes
   \fl K_{origin}(r,r')=\frac{1}{\sqrt{2\pi}}\left[
              \begin{array}{cc}
                (r'-r)e^{-\f(r-r')^2}\frac{\gamma(L,rr')}{\Gamma(L)} & e^{-\f(r-r')^2}\frac{\gamma(L,rr')}{\Gamma(L)}+t(r,r') \\
                -e^{-\f(r-r')^2}\frac{\gamma(L,rr')}{\Gamma(L)}-t(r,r') & (*) \\
              \end{array}
            \right]
    \ees
    where
    \bes
    \fl(*)=-\frac{\gamma(L,r'^2)}{\Gamma(L)}+e^{-\f(r-r')^2}\frac{\gamma(L,rr')}{\Gamma(L)}+\Big(\frac{r'^Le^{\f r'^2}}{\Gamma(L)}-2^{\frac{L}{2}-1}\frac{\Gamma\Big(\frac{L}{2},\f r'^2\Big)}{\Gamma(\frac{L}{2})}\Big)\int_x^y e^{\f t}t^Ldt
    \ees
  \item The limiting complex/complex kernel is given by a $2\times 2$ matrix
\begin{eqnarray*}
    \fl K_{origin}(z,z')=\frac{1}{\sqrt{2\pi}}\sqrt{\erfc(\sqrt{2}\Image(z))\erfc(\sqrt{2}\Image(z'))}\\
\times\left[              \begin{array}{cc}
               (z-z')e^{-\f(z-z')^2}\frac{\gamma(L,zz')}{\Gamma(L)} & i(\bar{z}-z')e^{-\f(z-\bar{z'})^2} \frac{\gamma(L,z\bar{z'})}{\Gamma(L)} \\
                i(z'-\bar{z})e^{-\f(\bar{z}-z')^2}\frac{\gamma(L,z\bar{z'})}{\Gamma(L)}  & (\bar{z}-\bar{z'})e^{-\f(\bar{z}-\bar{z'})^2}\frac{\gamma(L,\bar{z}\bar{z'})}{\Gamma(L)} \\
              \end{array}
            \right]
    \end{eqnarray*}
    \\

  \item The limiting real/complex kernel is given by a $2\times 2$ matrix
\begin{eqnarray*}
    \fl K_{origin}(r,z)=\frac{1}{\sqrt{2\pi}}\sqrt{\erfc(\sqrt{2}\Image(z))}\\
    \times\left[\begin{array}{cc}
                (z-r)e^{-\f(r-z)^2}\frac{\gamma(L,rz)}{\Gamma(L)} & i(\bar{z}-r)e^{-\f(r-\bar{z})^2}\frac{\gamma(L,r\bar{z})}{\Gamma(L)} \\
                -e^{-\f(r-z)^2}\frac{\gamma(L,r\bar{z})}{\Gamma(L)} & -i\e^{-\f(r-\bar{z})^2}\frac{\gamma(L,r\bar{z})}{\Gamma(L)}-it(r,\bar{z}) \\
              \end{array}
            \right]
    \end{eqnarray*}
\end{enumerate}
Nevertheless setting the reference points at a distance of $\sqrt{N}$ away from the origin then yields the universal Ginibre correlation functions as given in \ref{appen_C}.

\section{Quantum operations and spectra of associated evolution operators}
\label{sec:quant}

We start this section with a brief review of quantum maps which act
on the set of density operators. Later on we identify a class of maps,
for which the associated one--step evolution operator is
represented by a rectangular  real matrix.
We argue that for a generic random map from this class
such an operator may be described by the
induced Ginibre ensemble of real matrices.

\subsection{Quantum maps}

A quantum state acting on a $d$--dimensional Hilbert space ${\cal H}_d$
can be represented by a density matrix $\rho$ of size $d$.
It is a positive  Hermitian matrix, $\rho^{\dagger}=\rho\ge 0$,
normalised  by the trace condition, ${\rm tr}\rho=1$.
Let ${\cal M}_d$  denote the set of all density matrices of size $d$.

Consider a linear quantum map $\Phi$, which maps the set
of density matrices onto itself,  $\Phi: {\cal M}_d \to {\cal M}_d$.
Such a map is called {\sl positive} as it transforms a positive operator
into a positive operator.  In quantum theory one uses also
a stronger property: a map is called {\sl completely positive}
if any extended map,   $\Phi \otimes  \mathbbm{1}_n$
is positive for an arbitrary dimension $n$ of the auxiliary subsystem.

Any completely positive map $\Phi$ can be written in the following Kraus form,
\begin{equation}
\rho'=\Phi(\rho) = \sum_{i=1}^k  E_i \rho E_i^\dagger .
\label{kraus1}
\end{equation}
The Kraus operators $\{ E_i\}_{i=1}^k$ and their number $k$ can be arbitrary
- see e.g. \cite{BZ06}.
The map  (\ref{kraus1}) preserves the trace, ${\rm tr} \rho={\rm tr} \rho'$,
if the entire set of $k$ Kraus operators satisfies the identity resolution,
\begin{equation}
\sum_{i=1}^k  E_i^\dagger E_i = \mathbbm{1} .
\label{trace2}
\end{equation}
Any physical transformation of a quantum state $\rho$
can be described by a completely positive, trace preserving map,
which is  also called  a {\sl quantum operation}.

Alternatively, any quantum operation can be
written as a result of a unitary operation acting on an extended system
and followed by the partial trace over the auxiliary subsystem $E$,
\begin{equation}
\rho'= \Phi(\rho)={\rm tr_{E}}\left[U(\rho\otimes |\nu\rangle \langle \nu|)U^{\dagger}\right].
\label{env_rep}
\end{equation}
Here a unitary operator $U\in U(kd)$ acts on a composite Hilbert space
${\cal H}_S \otimes {\cal H}_E$, where $S$ denotes the principal system of size $d$,
while $E$ denotes an environment of size $k$.
It is assumed that the environment is described initially by a pure state,
 $|\nu\rangle \in {\cal H}_E$.

The action of the linear transformation (\ref{kraus1})
can be described by a matrix $\Phi$ of size $d^2$,
\begin{equation}
\rho^\prime \; = \;  \Phi \rho \quad \quad {\rm or} \quad \quad
\rho_{m\mu}\!\!\!\!\!\!^\prime\ \ \: = \:
 \Phi_{\stackrel{\scriptstyle m \mu}{n \nu}}
  \, \rho_{n \nu} \ ,
 \label{map2}
\end{equation}
where the summation over repeated indices goes from $1$ to $d$.
Since the matrices $\rho$ and $\rho'$ represent
quantum states, so they are positive and normalised,
the following matrix with exchanged indices,
$ D_{\stackrel{\scriptstyle m n}{\mu \nu}}
: = \Phi_{\stackrel{\scriptstyle m \mu}{n \nu}}$
is also Hermitian, positive and normalised, tr$D=d$.
Thus this matrix of rank $k$, called
{\sl dynamical matrix}
is related to a quantum state $\varsigma$ acting on the
composed Hilbert space ${\cal H}_d \otimes {\cal H}_d$.
The state $\varsigma$, proportional to the dynamical matrix,
 can be expressed  by an extended map \cite{BZ06}
\begin{equation}
\varsigma=\frac{1}{d} D \; = \;
(\Phi \otimes {\mathbbm 1})|\psi^+\rangle \langle \psi^+| \ ,
 \label{dyn}
\end{equation}
where $|\psi^+\rangle$
denotes the maximally entangled state,
$|\psi^+\rangle=\frac{1}{d}\sum_{i=1}^d |i\rangle \otimes |i\rangle$.

%%%%%%%%%%

The dynamical matrix $D$ is Hermitian by construction,
 while the superoperator matrix $\Phi$ is in general not Hermitian.
Positivity of the map and the trace preserving condition
imply, that the spectrum of $\Phi$ is contained in the unit disk
and there exists at least one eigenvalue equal to unity.
It corresponds to the invariant state of the map.
The dynamical properties of the map can be described by the
modulus of the subleading eigenvalue,
which determines the behaviour of the system under consecutive actions of the map.

For a generic random map, distributed according to the flat
(Euclidean) measure in the space of quantum operations,
the subleading eigenvalue is almost surely strongly smaller than one,
which implies that the convergence to equilibrium occurs
exponentially fast.
In fact statistical properties of the evolution matrix $\Phi$
of a random operation can be described by  the real Ginibre ensemble \cite{BCSZ09}.
The same is true for operations corresponding to
quantum dynamical systems under assumption of
classical chaos and strong decoherence  \cite{BSCSZ10,RSZ11}.
The real ensemble is applicable here since the map  $\Phi$ preserves Hermiticity
and the matrix $\Phi$ becomes real in the generalised Bloch vector representation
\cite{BZ06}.

Observe that independently of the dimension $k$ of the environment
the dimension $d$ of the output state $\rho'$ does not change under the action of the
transformation (\ref{env_rep}), so that $\Phi: {\cal M}_d \to {\cal M}_d$.
 However, in several applications one uses
a more general class of  quantum maps, which may change the dimension of the state.
Among them one distinguishes an important class of {\sl complementary maps}.
A map $\tilde \Phi$ complementary to $\Phi$ can be defined as  \cite{Ho05,king}
\begin{equation}
\sigma= {\tilde \Phi}(\rho)={\rm tr_{S}}\left[U(\rho\otimes |\nu\rangle \langle \nu|)U^{\dagger}\right].
\label{compl}
\end{equation}
Note that the only difference with respect to (\ref{env_rep}) is that
the partial trace is performed with respect to the principal system $S$.
The output state $\sigma \in {\cal M}_k$ describes thus the final
state of the environment after the interaction with the system described
by the global unitary matrix $U$. Thus the complementary map
sends the density matrix of size $d$ into a state of size $k$,
   ${\tilde \Phi}: {\cal M}_d \to {\cal M}_k$.
Comparing both maps we see that the role of parameters $d$ and $k$
is interchanged: the rank of the dynamical matrix corresponding to $\Phi$
is equal to $k$, while for the complementary map $\tilde \Phi$
it is equal to $d$.

The complementary map $\tilde \Phi$ can also be described in the form
(\ref{map2}). However, as the dimension $d$ of the input state $\rho$ and the
dimension $k$ of the output state  $\sigma$ do differ,
the matrix $\Phi_c$  representing the evolution operator ${\tilde \Phi}$
 under the action of
a complementary map is a rectangular matrix of size $k^2\times d^2$.
To study spectral properties  of such evolution operators
one has therefore to go beyond the standard
diagonalisation procedure which holds for square matrices only.

\subsection{Complementary maps and real induced Ginibre ensemble}
\label{w:section4}

We are now in position to investigate spectral properties
of evolution operators associated with a random complementary map
defined by (\ref{compl}).
As such an operator is represented by a rectangular matrix,
we use formula (\ref{atilda}) to obtain a square matrix,
which can be diagonalized in the standard way.

Random complementary maps were generated numerically by selecting
a fixed initial state
$|\nu\rangle$ of the $k$ dimensional environment and by plugging into
(\ref{compl}) a random unitary matrix $U$ distributed according to the Haar measure on
$U(kd)$. This procedure allows one to generate random maps according to the
Euclidean measure in the space of all quantum operations  \cite{BCSZ09}.

Fig.~\ref{real_aschlange} shows the exemplary
spectra of ${\tilde \Phi}$ associated with random complementary maps,
which transform an initial quantum state of size $d=14$
into an output state of various dimensionality.
In the symmetric case $k=d$ the evolution operator
is represented by a square matrix, so the spectrum covers the entire disk.
In the cases $k\ne d$ the evolution operator $\tilde \Phi$
is represented by a rectangular matrix $\Phi_c$.
Thus the spectrum of its quadratisation $\tilde{ \tilde{ \Phi}}$,
which corresponds to a matrix from the induced Ginibre ensemble,
covers asymptotically a ring in the complex plane.
The radius of the inner ring depends on the difference $|k-d|$.\\
%
% The larger dimension $k>d$ of the auxiliary space,
% the narrower the ring of eigenvalues.

\begin{figure}[]
\centering
 \includegraphics[width=5.0in]{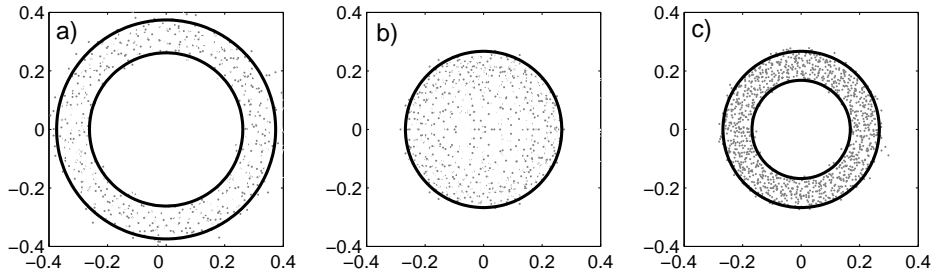}
\caption{
Spectra of operators ${\tilde {\tilde \Phi}}$ associated with evolution operators
  for generic complementary quantum maps for fixed
size $d=14$ of the initial system and different  sizes of the environment, a) $k=10$,
 b) $k=14=d$ (which implies no hole in the support of the spectrum),
and c) $k=18$. The data in each panel are superimposed
  from $8$ realisations of random complementary operations.
  Solid circles represent the radii implied by Eq. (\ref{ring2}).
}
\label{real_aschlange}
\end{figure}

Consider a random complementary map ${\tilde \Phi}$,
which sends the set ${\cal M}_d$ of $d$-dimensional states
into ${\cal M}_k$. The evolution operator
can be represented by a rectangular matrix,
$\Phi_c$ of size $k^2\times d^2$.
Assume first that $k\ge d$.
Constraints imposed by the trace preserving condition
(\ref{trace2}) are known to be weak even for relatively small dimensions
\cite{BCSZ09,BSCSZ10}, so it is legitimate to assume that
$\Phi_c$ can be described by a real random rectangular matrix
of the Ginibre ensemble. Thus a quadratisation $\tilde{\tilde{\Phi}}$
of $\Phi_c$ will be described by an appropriately rescaled
matrix of the induced Ginibre ensemble with $M=k^2$ and $N=d^2$.

To find the scaling factor we need to discuss the normalisation of
$\Phi_c$. The squared norm,
$||\Phi_c||^2={\rm Tr} \Phi_c^{\dagger} \Phi_c$
is equal to the norm $||D||^2={\rm Tr} D^2$
 of the Hermitian dynamical matrix, which contains the same entries
in a different order \cite{BZ06}.
Since the map $\tilde \Phi$ changes the dimension $d$ of the input state
into $k$, expression (\ref{dyn}) implies that the corresponding
dynamical matrix $D$ is a $dk\times dk$
Hermitian matrix of rank $d$.

Neglecting the constraints implied by the trace preserving condition,
Tr$_B D={\mathbbm 1}$, we assume that in case of a random complementary map
the dynamical matrix $D$ behaves as a random mixed state
 generated by an induced measure \cite{ZS01}.
In such a case the average purity of a random state $\rho$ of size  $N$
obtained  from an initially random pure state of size $NK$
by the partial trace of an over the $K$ dimensional environment reads,
 $\langle \Tr \rho^2\rangle =(K+N)/KN$. In the case of
dynamical matrix of a complementary map
we need to substitute the correct dimensions, $N\to kd$ and $K\to d$,
and use the normalisation constant $D=d\varsigma$
to obtain an approximate expression
\begin{equation}
\langle \Tr D^2\rangle = d^2 \langle \Tr \varsigma ^2\rangle
\approx d^2 \frac {kd+d}{d^2k} =  \frac {d(k+1)}{k} \approx d .
\label{normd}
\end{equation}
This in turn implies the relation
 $ \langle{\rm Tr} \left(\Phi_c^{\dagger}\Phi_{c}\right)\rangle\approx d$,
so the average norm of the superoperator
$\Phi_c$ representing the complementary channel  $\tilde \Phi$
does not depend on the parameter $k$~and  can be estimated by
 $||\Phi_c|| \approx \sqrt{d}$ -- see Fig.~\ref{mean_tr}.

\begin{figure}[ht]
  \centering
  \includegraphics[width=4.0in]{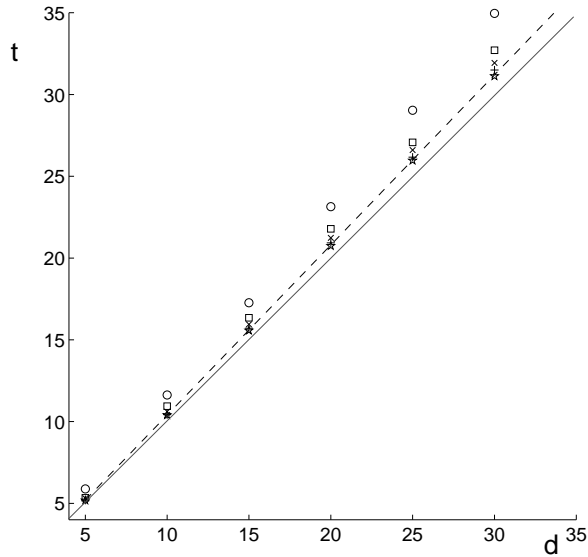}
  \caption[]{Average squared norm of the superoperator
    related to the complementary map,
    $t=\langle {\rm Tr} \Phi_c^{\dagger} \Phi_c \rangle$
as a~function of size  $d$ of the input system. Various symbols
    denote different values of the size of the output system,
     $k=5$ ($\circ$), $k=10$ ({\tiny $\square$}),
 $k=15$ ($\times$), $k=20$ ($+$) and $k=25$ ($\star$).
Dashed line is plotted to guide the eye. It corresponds to $k=25$ and shows
  that the bigger value of $k$, the better
 approximation to the asymptote represented by the solid diagonal line.}
\label{mean_tr}
\end{figure}

As discussed in previous sections, an  $M \times N$ matrix $X$
(with $N \le M$) of the Gaussian ensemble (\ref{bb0}), for which
the average norm satisfies  $\langle {\rm Tr} X^{\dagger}X \rangle = NM$,
leads to a square random matrix $G$ of size $N$ from
the induced Ginibre ensemble (\ref{bb5}) by the way of quadratisation. Its spectrum is (asymptotically)
supported in the ring $r\in [r_{\rm in}, r_{\rm out}]$
with $r_{\rm in}=\sqrt{L}$ and $r_{\rm out}=\sqrt{L+N}$.
To apply this ensemble for a rectangular superoperator
matrix $\Phi_c$ it is then sufficient to
substitute $M\to k^2$ and $N\to d^2$
and to use the  rescaling $\Phi^c=X/k\sqrt{d}$
to match the normalisation. Therefore
the spectrum of the quadratisation $\tilde{\tilde{\Phi}}$
of the superoperator $\Phi_c$ associated with the complementary map   $\tilde{\Phi}$
forms a ring of the inner radius
$r_{\rm in}=\sqrt{1-d^2/k^2}/\sqrt{d}$ while
the outer radius reads  $r_{\rm out}=1/\sqrt{d}$.

%%%%%%%%%

Consider now the other case, in which the dimensions determining the complementary map satisfy $k \le d$. Then the superoperator $\Phi_c$, represented by a rectangular matrix $k^2\times d^2$, gets its quadratisation by the same procedure from the transposed matrix as mentioned in  Section~\ref{w:section4.2}, and again it is described by the induced Ginibre matrices with the level density analysed in Section~\ref{sec:real}. The key difference with respect to this distribution, describing the transposed rectangular matrix $X^T$, is that the role of variables $M$ and $N$ is exchanged. Hence in the case for the transposed superoperator $\Phi_c^T$, one needs to exchange the dimensions $k$ and $d$. Taking this into account and the normalisation,  $||\Phi_c||^2 \approx d$, one infers that the outer radius of the ring reads in this case $r_{\rm out}=\sqrt{d^2}/k\sqrt{d}=\sqrt{d}/k$.

Thus the square matrices $\tilde{\tilde{\Phi}}$ associated with the superoperators corresponding to random complementary maps
${\tilde \Phi}: {\cal M}_d \to {\cal M}_k$ can be described with
real matrices of the induced Ginibre ensemble.
Apart of the leading eigenvalue $\lambda_1=1$,
which is implied by the trace preserving condition,
the spectra are asymptotically localized in a ring in the complex plane.
Both radii of the ring read
\begin{equation}
[r_{\rm in}, r_{\rm out}]= \
\left\{
\begin{array}{ll}
\left[\frac{1}{\sqrt{d}} \sqrt{1 -\frac{d^2}{k^2}} ,\
 \frac{1}{\sqrt{d}}\right]           & \ {\rm for} \ \ k \geqslant d \\

\left[\frac{\sqrt{d}}{k} \sqrt{1 - \frac{k^2}{d^2}} ,\
      \frac{\sqrt{d}}{k}\right]      &  \ {\rm for} \ \ k\leqslant d
 \ . \end{array}
\right.
\label{ring2}
\end{equation}
The above predictions show good agreement with numerical data
obtained for several realisations of random complementary maps
and shown in Fig.~\ref{real_aschlange}.
To obtain a clearer figure
the plot is magnified by a factor $\sqrt{d}$.

\section{Concluding remarks}

Although the specific example of the Ginibre ensemble is completely solved,
the theory of non--Hermitian random matrices is still far from being as thoroughly
understood as its Hermitian counterpart.
In this work we have introduced a new generalisation of the
ensemble of non--Hermitian Ginibre matrices and derived
explicit results for spectral density in the complex and real cases.
Using the method of skew-orthogonal polynomials we derived
 various spectral correlation functions which can be expressed as Pfaffians.
Analysing asymptotic behaviour in the limit of large matrix dimensions
 we have found a universal behaviour of the eigenvalue statistics.

The induced Ginibre ensemble of square matrices of size $N$
is parametrised by a single discrete parameter $L\ge 0$.
In the case $L=0$ the model reduces to the standard Ginibre ensemble \cite{Ginibre1965}, with the eigenvalues, for large $N$, uniformly distributed in the disk of radius $\sqrt{N}$ about the origin, while for $L>0$ the eigenvalues are repelled from the origin and form a ring in the complex plane $\sqrt{L} < |z| < \sqrt{N+L}$.

This form of the spectrum suggests a comparison with the model (\ref{ginibFZ}) of
Feinberg--Zee, for which the 'single ring' theorem was proven \cite{FZ97,GKZ09,GZ10}.
Our model formally belongs to the Feinberg--Zee class,
with the potential
\begin{equation}
     V(G{^\dagger}G)=  G{^\dagger}G - L \log G{^\dagger}G ,
\label{logpot}
\end{equation}
 but due to the log function the assumption
that the potential is polynomial is not satisfied.
Analysing the joint probability distribution for induced Ginibre ensemble
it is easy to see that eigenvalues near the origin are unlikely,
which implies the ring of eigenvalues.
On the other hand it is not simple to find a specific mechanism responsible for
the single ring distribution for the general version of the Feinberg--Zee model.
Furthermore,  in the latter model the eigenvalue correlation functions are not known,
while we could compute them for the induced Ginibre ensemble
for the complex and real versions of the model.

Our work leads to a straightforward explicit algorithm
to generate random matrices from the induced Ginibre ensemble.
Its is sufficient to take a rectangular  $M\times N$  random Gaussian matrix
$X$, construct the positive Wishart like matrix, $X^{\dagger}X$ of size $N$,
take its square root and multiply it by a random unitary $U$ distributed according to the
Haar measure on the unitary group $U(N)$.
The result $G=U\; \sqrt{X^{\dagger}X}$ is distributed according to
the desired joint probability function (\ref{ginibind}),
while its singular values are described by the Marchenko--Pastur distribution
with parameter $c=M/N$. This follows from the fact
that  $G^{\dagger}G=X^{\dagger}X$, hence the square random matrix $G$ has the same singular values as the initial rectangular Gaussian matrix $X$.

Alternatively, square random matrices from the induced Ginibre
ensemble can be obtained by the quadratisation (\ref{WX})--(\ref{bb11}), (\ref{atilda}) of rectangular Gaussian matrices. Although for a given rectangular matrix
various quadratisation algorithms produce different square matrices,
the statistical properties of square matrices obtained by quadratisation
of rectangular random Gaussian matrices do not depend on the choice of the algorithm.

An analogous construction works also in the real case.
Take a rectangular  $M\times N$ real  Ginibre matrix
$X$, construct the square root of the positive Wishart matrix, $X^{T}X$
and multiplying it by a random orthogonal matrix $O$ distributed according to the
Haar measure on the orthogonal group $O(N)$.
The result $G=O\sqrt{X^{T}X}$ is then distributed according to
the induced ensemble of real Ginibre matrices, specified by (\ref{ginibind}) with $\beta=1$.

The induced ensemble of random Ginibre matrices offers a simple model for
further research within the developing theory of non--Hermitian random matrices.
We are tempted to believe it will find its applications in several fields of physics.
For instance they can be helpful for analysis of evolution operators
associated to certain classes of generic quantum maps.
In the simplest case as the dimension of the input and output states are the
same, the associated evolution operator can be described
by  the standard ensemble of real Ginibre matrices \cite{BSCSZ10}.
However,  studying complementary quantum operations, or other quantum maps
in which the dimensions of the initial and the final states do differ,
one copes with evolution operators represented by rectangular matrices.
Our work provides evidence that the statistical properties of
square matrices associated with the evolution operators for complementary quantum maps
do correspond to the induced ensemble of real Ginibre matrices.

%{\bf remarks for Wojtek}: concerning the figures:
%Fig.1 and 2 - please substitute new nice figures with rings plotted
%Fig. 3. P(r) panels a) and b): change of  labels :  $|z| \to  r$, $P(z) \to P(r)$
%     horizontal axis goes from $0$ to $1.1$
%   - labels (a) and (b) go up into the upper left corner of the corresponding figure
%    (just to save the vertical space..) ,
%Fig.4 three panels as (it is now) for a fixed value of $d$
%       and thre values of $k$, e.g.
%      $d=14$, and $k=10,14,20$.
%In case a) [$k < d$] please check if formula  (\ref{ring2}) [lower case] hold!
%    I was thinking the trace $t\approx d$ independently of the fact
%     if $k$ is greater then $d$ or not. Thus one needs to rescale the size of the
%    ring by $1/\sqrt{d}$ so the outer ring should have radius $r_{\rm out}=1/\sqrt{d}$
%     in both cases ($k>d$ and $k<d$ !) this needs to be checked!!
% Fig.5 (now missing) $t=< {\rm Tr}\Phi_c \Phi_c^*  > $ as a function of $d$.
%       Different symbols for different values of $k$ with a legend.
%      (please check that both cases $k>d$ and $k<d$ are covered in the plot!)
%

\medskip

\ack

We would like to thank Gernot Akemann, Marek  Bo{\.z}ejko, Yan Fyodorov, Dmitry Savin  and an anonymous referee  for their useful comments; one of the referee's comments is reproduced in the footnote following equation (\ref{bb11}).
Financial support by the SFB Transregio-12 project
der Deutschen Forschungsgemeinschaft and the grant financed by the Polish National Science Centre under the contract number  DEC-2011/01/M/ST2/00379 is gratefully acknowledged.

\appendix

\section{Proof of Lemma \ref{lem1}}
\label{appendix_A}
% Karol (July 5) changes G--> -C as suggested by Hans-Juergen

Assuming $W$ as in (\ref{bb11}), by multiplying through in  (\ref{WX}), one obtains an equation for $C$,
%In this appendix we are going to prove Lemma \ref{lem1}
% by deriving expression (\ref{atilda}), which
%gives the square matrix $G$ of size $N$ as a function of
%matrices $Y$ and $Z$. We shall start finding the desired
%unitary matrix $W$, which brings in Eq. (\ref{WX})
%the entire block $Z$ into zero.
%
%We postulate that $W$ has the following block form
%\begin{equation}
%%U_t=\left[\begin{array}{cc}
%%\left(\mathbbm{1}-AA^{\dagger}\right)^{1/2}&A\\
%%-A^{\dagger}&\left(\mathbbm{1}-A^{\dagger}A\right)^{1/2}
%%\end{array}\right],
%%
%W=\left[\begin{array}{cc}
%\left(\mathbbm{1}_N-CC^{\dagger}\right)^{1/2}&C\\
%-C^{\dagger}&\left(\mathbbm{1}_{M-N}-C^{\dagger}C\right)^{1/2}
%\end{array}\right],
%\end{equation}
%where $\mathbbm{1}$ is the unit matrix of appropriate dimension and
%$C$ is any $(M-N)\times N$ non--zero matrix.
%After applying unitary $W$ to $X$  as in (\ref{WX})  we obtain a relation
%\begin{equation}
%-A^{\dagger}a+\left(\mathbbm{1}-A^{\dagger}A\right)^{1/2}b=0\Longrightarrow
%b=\left(\mathbbm{1}-A^{\dagger}A\right)^{-1/2}A^{\dagger}a=
%A^{\dagger}\left(\mathbbm{1}-AA^{\dagger}\right)^{-1/2}a
%%%
%%%
%% old  $-C^{\dagger}Y+\left(\mathbbm{1}_{M-N}-C^{\dagger}C\right)^{1/2}Z=0$.
%% new C--> -C
$C^{\dagger}Y+\left(\mathbbm{1}_{M-N}-C^{\dagger}C\right)^{1/2}Z=0$.
%\Longrightarrow
%Z=\left(\mathbbm{1}-C^{\dagger}C\right)^{-1/2}C^{\dagger}Y=
%C^{\dagger}\left(\mathbbm{1}-CC^{\dagger}\right)^{-1/2}Y
%\end{equation}
Hence,
%Assuming that all matrices in question are invertible this implies
\begin{equation}
%-A^{\dagger}a+\left(\mathbbm{1}-A^{\dagger}A\right)^{1/2}b=0\Longrightarrow
%b=\left(\mathbbm{1}-A^{\dagger}A\right)^{-1/2}A^{\dagger}a=
%A^{\dagger}\left(\mathbbm{1}-AA^{\dagger}\right)^{-1/2}a
%
%-C^{\dagger}Y+\left(\mathbbm{1}-C^{\dagger}C\right)^{1/2}Z=0\Longrightarrow
Z=-\left(\mathbbm{1}_{M-N}-C^{\dagger}C\right)^{-1/2}C^{\dagger}Y=
-C^{\dagger}\left(\mathbbm{1}_N-CC^{\dagger}\right)^{-1/2}Y\; .
\end{equation}
Consequently, by making use of (\ref{WX}) again,
\begin{equation}
G=\left(\mathbbm{1}_N-CC^{\dagger}\right)^{1/2}Y-CZ
=\left(\mathbbm{1}_N-CC^{\dagger}\right)^{-1/2}Y.
\label{aa2}
\end{equation}
It is easy to check that
\begin{equation}
Z^{\dagger}Z =
 Y^{\dagger}\left(\mathbbm{1}_N-CC^{\dagger}\right)^{-1}Y-Y^{\dagger}Y
\label{aa3}
\end{equation}
which implies that
\begin{equation}
Y^{\dagger}Y+Z^{\dagger}Z=Y^{\dagger}\left(\mathbbm{1}_N-CC^{\dagger}\right)^{-1}Y .
\label{aa4}
\end{equation}
This in turn allows us to write
\begin{equation}
\left(\mathbbm{1}_N-CC^{\dagger}\right)^{-1}=
\frac{1}{Y^{\dagger}}\left(Y^{\dagger}Y + Z^{\dagger}Z\right)\frac{1}{Y} ,
\label{aa5}
\end{equation}
which when substituted into (\ref{aa2}) yields
(\ref{atilda}).
Note that the desired result (\ref{atilda})
can be also rewritten in a more symmetric form,
\begin{equation}
G=
Y\left(Y^{\dagger}Y\right)^{-1/2}\left(\mathbbm{1}_N+\left(Y^{\dagger}Y\right)^{-1/2}
       Z^{\dagger}Z\left(Y^{\dagger}Y\right)^{-1/2}\right)^{1/2}
 \left(Y^{\dagger}Y\right)^{1/2},
\label{atilda1}
\end{equation}
which shows that all matrix square roots
operate correctly on positive definite objects.

\section{Proof of Theorem \ref{Pfaffian1}}
\label{appendix3}
In this appendix we determine the finite $N$ correlation functions for the real induced Ginibre by determining the matrix entries of the Pfaffian kernel
given in \ref{Pfaffianentries}
and hence prove Theorem \ref{Pfaffian1}. Equipped with the appropriate skew-orthogonal polynomials and their normalisation the task of determining the entries of the matrix kernel for the $(j,m)-$correlation functions can now proceed. Firstly \eref{kerneleq} implies for $w,z\in\mathbb{C}$:
\be
\fl DS_N(z,w)=\frac{1}{\sqrt{2\pi}}\psi(w)\psi(z)(z-w)
\sum_{j=0}^{N-2}\frac{\big(wz)^{j+L}}{\Gamma(L+j+1)}.
\ee
Noting that $S_N(z,w)=iDS_N(z,\bar{w})$ and $IS_N(z,w)=-DS_N(\bar{z},\bar{w})$ we have completely determined the entries of the complex-complex matrix kernel.\\
\\
Let us next consider the case $x\in \mathbb{R}$, $z\in\mathbb{C}$. The following approach is borrowed from \cite{FN}. We observe that
\bes
q_{2j+1}(x)=-e^{\f x^2}x^{-L}\frac{\partial}{\partial x}\left[e^{-\f x^2}x^{2j+L}\right],
\ees
which implies for $j>0$:
\bes
\tau_{2j+1}(x)=e^{-\f x^2}x^{j+L}
\ees
Furthermore direct computation shows that:
\bes
\tau_{1}(x)-\frac{L}{2}\int_{\mathbb{R}}\sgn(x-t)e^{-\f t^2}x^{L-1}dt=e^{-\f x^2}x^{2j+L}
\ees
All in all
\begin{eqnarray*}
\fl\sum_{j=0}^{\frac{N}{2}-1}\frac{1}{r_j}\tilde{q}_{2j}(z)\tau_{2j+1}(x)
=\frac{1}{\sqrt{2\pi}}\psi(z)e^{-\f x^2}\sum_{j=1}^{\frac{N}{2}-1}\frac{(xz)^{L+2j}}{\Gamma(L+2j+1)}\\
\fl+\frac{1}{\sqrt{2}\pi\Gamma(L+1)}\psi(z)z^L
\left[\tau_1(x)-\frac{L}{2}\int_{\mathbb{R}}\sgn(x-t)e^{-\f t^2}x^{L-1}dt\right]\\
\fl+\frac{1}{\sqrt{2\pi}\Gamma(L+1)}\psi(z)z^L\frac{L}{2}\int_{\mathbb{R}}\sgn(x-t)e^{-\f t^2}x^{L-1}dt\\
\fl=\frac{1}{\sqrt{2\pi}}e^{-\f x^2-\f z^2}\sqrt{\erfc(\sqrt{2}\Image(z))}\sum_{j=0}^{\frac{N}{2}-1}\frac{(xz)^{L+2j}}{\Gamma(L+2j+1)}
+\frac{1}{\sqrt{2\pi}}\psi(z)z^L2^{\frac{L}{2}-1}\frac{\Gamma(\frac{L}{2},\f x^2)}{\Gamma(L+1)}
\end{eqnarray*}
In addition to that:
\begin{eqnarray*}
 S&=\sum_{j=0}^{\frac{N}{2}-1}\frac{1}{r_j}\tilde{q}_{2j+1}(z)\tau_{2j}(x)\\
&=\frac{1}{\sqrt{2\pi}}\psi(z)z^L\sum_{j=1}^{\frac{N}{2}-1}\frac{\left[z^{2j+1}-(L+2j)z^{2j-1}\right]\tau_{2j}(x)}{\Gamma(L+2j+1)}
+\frac{1}{\sqrt{2\pi}\Gamma(L+1)}\psi(z)z^L\tau_0(x)
\end{eqnarray*}
Rearranging the summation gives:
\begin{eqnarray*}
S&=\frac{1}{\sqrt{2\pi}\Gamma(L+N-1)}\psi(z)z^{L+N-1}\tau_{N-2}(x)\\
&\quad-\frac{1}{\sqrt{2\pi}}\psi(z)z^L\sum_{j=0}^{\frac{N}{2}-2}\frac{\left[\tau_{2j+2}(x)-(L+j+1)\tau_{2j}(x)\right]z^{2j+1}}{\Gamma(L+2j+1)}
\end{eqnarray*}
Another differential equation:
\bes
q_{2j+2}(x)-(2j+L+1)q_{2j}(x)=-e^{\f x^2}x^{-L}\frac{\partial}{\partial x}\left[e^{-\f x^2}x^{2j+L+1}\right],
\ees
leads to:
\bes
\tau_{2j+2}(x)-(2j+L+1)\tau_{2j}(x)=e^{-\f x^2}x^{2j+L+1}
\ees
As a consequence
\begin{eqnarray*}
\fl S=-\frac{1}{\sqrt{2\pi}}\psi(z)z^{L+N-1}2^{\frac{L}{2}+N-\frac{3}{2}}\sgn(x)\frac{\gamma(\frac{L}{2}+\frac{N}{2}-\f,\f x^2)}{\Gamma(L+N-1)}-\frac{1}{\sqrt{2\pi}}\psi(z)e^{-\f x^2}\sum_{j=0}^{\frac{N}{2}-2}\frac{(xz)^{L+2j+1}}{\Gamma(L+2j+2)}
\end{eqnarray*}
Finally we obtain:
\begin{eqnarray*}
 \fl S_N(z,w)&=\frac{1}{\sqrt{2\pi}}e^{-\f x^2}\psi(z)\sum_{j=0}^{N-2}\frac{(xz)^{L+2j}}{\Gamma(L+2j+1)}\\
\fl&+\frac{1}{\sqrt{2\pi}}\psi(z)z^{L+N-1}2^{\frac{L}{2}+N-\frac{3}{2}}\sgn(x)\frac{\gamma(\frac{L}{2}+\frac{N}{2}-\f,\f x^2)}{\Gamma(L+N-1)}+\frac{1}{\sqrt{2\pi}}\psi(z)2^{\frac{L}{2}-1}\frac{\Gamma(\frac{L}{2},\f x^2)}{\Gamma(L+1)}
\end{eqnarray*}
The last entry requiring explicit computation is $IS_N(x,y)$ for $x,y\in \mathbb{R}$. In this case the relationship:
\bes
IS_N(x,y)=-\int_x^yS_N(t,y)dt
\ees
comes handy. Using the expression obtained for $S_N(x,y)$ and in addition to that employing the integral representation
\bes
 e^{-ty}\sum_{j=0}^{N-2}\frac{(ty)^{L+2j}}{\Gamma(L+2j+1)}=
\left[\frac{\gamma(L,ty)}{\Gamma(L)}-\frac{\gamma(L+N-1,ty)}{\Gamma(L+N-1)}\right]
\ees
leads to the following starting point for our derivation
\begin{eqnarray*}
&IS_N(x,y)=-\frac{1}{\sqrt{2\pi}\Gamma(L)}\int_x^ye^{-\f(t-y)^2}(y-t)\gamma(L,ty)dt\\
&\quad+\frac{1}{\sqrt{2\pi}\Gamma(L+N-1)}\int_x^ye^{-\f(t-y)^2}(y-t)\gamma(L+N-1,ty)dt\\
&\quad-\frac{1}{\sqrt{2\pi}}\sgn(x)2^{\frac{L}{2}+N-\frac{3}{2}}\frac{\gamma(L+\frac{N}{2}-\f,\f x^2)}{\Gamma(L+N-1)}
\int_x^ye^{\f t^2}t^{L+N-1}dt\\
&\quad-\frac{1}{\sqrt{2\pi}}\frac{\Gamma(\frac{L}{2},\f x^2)}{\Gamma(L+1)}\int_x^ye^{\f t^2}t^{L}dt\;.
\end{eqnarray*}
The above expression can be simplified by employing integration by parts with respect to $t$. As a conclusion we have derived all the possible entries of the Pfaffian matrix kernel.

\section{Asymptotic Analysis of Correlation Functions}
\label{appendix_d}
In this appendix we state the asymptotic behaviour of the correlation functions of the complex induced Ginibre ensemble for different scaling regimes.
\begin{thm}[The limiting correlation functions in the bulk]\label{complexcorr1}
Let $u,z_1,\ldots,z_n$ be complex numbers and set $\la_k=\sqrt{N}u+z_k$ for $k=1,\ldots,n$ and $L=N\al$, then for $u\in R$:
\bes
\lim_{N\to\infty}R_n(\la)=\det{\left[\frac{1}{\pi}\exp\Big(-\frac{|z_j|}{2}-\frac{|z_k|}{2}+z_j\bar{z}_k\Big)\right]}_{j,k=1}^n
\ees
with $R=\{r\in\mathbb{C}|\sqrt{\al}\leq |r|\leq \sqrt{\al+1}\}$.
\end{thm}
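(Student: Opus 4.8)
The plan is to work directly from the determinantal form (\ref{b1a}) together with the incomplete-gamma representation of the kernel mentioned after (\ref{b1}). Since $R_n(\la_1,\dots,\la_n)=\det\big(K_N(\la_k,\la_l)\big)_{k,l=1}^n$ and a determinant is unchanged under the conjugation $K_N(\la_k,\la_l)\mapsto\big(g(\la_k)/g(\la_l)\big)K_N(\la_k,\la_l)$ by any non-vanishing $g$, it is enough to prove that, with $\la_k=\sqrt N u+z_k$ and $L=N\al$,
\[ K_N(\sqrt N u+z_k,\sqrt N u+z_l)=\frac{g(z_k)}{g(z_l)}\Big[\frac1\pi\,e^{z_k\bar z_l-\f|z_k|^2-\f|z_l|^2}+o(1)\Big] \]
for a suitable $g$ with $|g|\equiv1$, uniformly for $z_k,z_l$ in compact sets.

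First I would split off the Gaussian prefactor in (\ref{b1}). Expanding $|\la_k|^2$, $|\la_l|^2$ and $\la_k\bar\la_l$ about $\sqrt N u$ and cancelling the $O(N)$ contributions gives the identity
\[ -\f|\la_k|^2-\f|\la_l|^2+\la_k\bar\la_l=z_k\bar z_l-\f|z_k|^2-\f|z_l|^2+i\sqrt N\big(\im(\bar u z_k)-\im(\bar u z_l)\big), \]
whose $N$-dependent term is purely imaginary and of the gauge form $g(z_k)/g(z_l)$ with $g(z)=e^{i\sqrt N\,\im(\bar u z)}$; it therefore drops out of the determinant. Writing $w=\la_k\bar\la_l$, what remains is to show that
\[ e^{-w}\sum_{j=0}^{N-1}\frac{w^{j+L}}{\Gamma(j+L+1)}=e^{-w}\sum_{m=L}^{L+N-1}\frac{w^{m}}{m!}\;\longrightarrow\;1 \]
(here we use that $L$ is a non-negative integer), uniformly on compacts.

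Second, I would write this truncated exponential series as $1$ minus the two Poisson tails and bound them in absolute value,
\[ \Big|e^{-w}\sum_{m=0}^{L-1}\frac{w^m}{m!}\Big|\le e^{\,|w|-\re w}\,\frac{\Gamma(L,|w|)}{\Gamma(L)},\qquad \Big|e^{-w}\sum_{m=L+N}^{\infty}\frac{w^m}{m!}\Big|\le e^{\,|w|-\re w}\,\frac{\gamma(L+N,|w|)}{\Gamma(L+N)}, \]
using $e^{-|w|}\sum_{m<L}|w|^m/m!=\Gamma(L,|w|)/\Gamma(L)$ for integer $L$ and the analogous identity for the upper tail. Since $w=N|u|^2+O(\sqrt N)$ and $|u|>0$ for $u$ inside the annulus, one has $\im w=O(\sqrt N)$ while $\re w\asymp N$, hence $|w|-\re w=(\im w)^2/(|w|+\re w)=O(1)$ and the prefactor $e^{\,|w|-\re w}$ is bounded. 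The two remaining ratios are the lower- and upper-tail probabilities of a Poisson variable of mean $|w|$ at the points $L-1$ and $L+N$; because $|w|/L\to|u|^2/\al>1$ and $|w|/(L+N)\to|u|^2/(\al+1)<1$ for $u$ strictly inside the annulus, a Chernoff bound for the Poisson law makes both $O(e^{-cN})$ with $c>0$, with all constants uniform for $z_k,z_l$ in a fixed compact set.

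Combining the two steps yields the displayed asymptotics for $K_N$, and taking determinants (a continuous, polynomial operation in the entries) proves the theorem. The one genuinely delicate point is the tail analysis for the \emph{complex} argument $w$: one must not apply real-variable incomplete-gamma asymptotics naively, but first reduce to the real quantity $|w|$ via $|w|-\re w=O(1)$ and then invoke large-deviation estimates for the Poisson distribution. The hypothesis that $|u|$ lies strictly between $\sqrt\al$ and $\sqrt{\al+1}$ is exactly what forces \emph{both} Poisson tails to be exponentially small; on the two bounding circles the limit instead crosses over to the error-function edge profile recorded in (\ref{b5}).
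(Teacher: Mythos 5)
Your argument is correct and its skeleton matches the paper's: both proofs first strip off the $N$-dependent phase $e^{i\sqrt{N}\,\im(\bar u z)}$ as a diagonal conjugation (the paper's $\psi_N(z)=e^{\frac{\sqrt N}{2}(\bar u z-u\bar z)}$ is exactly your $g$, since $\bar u z-u\bar z=2i\,\im(\bar u z)$), and then reduce the theorem to showing that $e^{-w}\sum_{j=0}^{N-1}w^{j+L}/\Gamma(j+L+1)\to 1$ for $w=\lambda_k\bar\lambda_l=N|u|^2+O(\sqrt N)$. Where you genuinely add value is in the last step: the paper disposes of it with a one-line appeal to ``another saddle-point analysis,'' which is a little delicate because $w$ is complex and one cannot blindly feed it into real incomplete-gamma asymptotics. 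Your observation that $|w|-\re w=(\im w)^2/(|w|+\re w)=O(1)$ cleanly reduces the estimate to the real quantity $|w|$, after which the identification of the two tails with Poisson tail probabilities at $L$ and $L+N$, plus a Chernoff/large-deviation bound driven by $|u|^2/\al>1$ and $|u|^2/(\al+1)<1$, gives an explicit exponential rate. This is the same underlying mechanism (the Chernoff bound \emph{is} a discrete saddle-point estimate), but your version is self-contained and avoids the hidden complex-argument subtlety; it also transparently explains why the condition $\sqrt\al<|u|<\sqrt{\al+1}$ is exactly what is needed and why the edge profile takes over at the boundary. One cosmetic nit: in your opening sentence you write the gauge factor as $g(\la_k)/g(\la_l)$, but it is $g(z_k)/g(z_l)$ as you correctly use later --- the unitary diagonal matrix depends on the local coordinates $z_j$, not on $\la_j$.
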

\begin{proof}
We need to analyze the asymptotics of the kernel:
\begin{eqnarray*}
\fl K_N(z_j,z_k)&=\frac{1}{\pi}e^{-\f\big|\sqrt{N}u+z_j\big|\big|\sqrt{N}\bar{u}+\bar{z}_j\big|-
\f\big|\sqrt{N}u+z_k\big|\big|\sqrt{N}\bar{u}+\bar{z}_k\big|}\\
\fl&\quad\times\sum_{l=0}^{N-1}\frac{\Big(\big|\sqrt{N}u+z_j\big|\big|\sqrt{N}\bar{u}+\bar{z}_k\big|\Big)^{l+L}}{\Gamma(l+L+1)}\\
&=\frac{1}{\pi}e^{-N|u|^2-\sqrt{N}(u\bar{z}_k+\bar{u}z_j)-z_j\bar{z}_k}
e^{-\frac{|z_j|^2}{2}-\frac{|z_k|^2}{2}+z_j\bar{z}_k} e^{\frac{\sqrt{N}}{2}(\bar{u}z_j-u\bar{z}_j)}e^{-\frac{\sqrt{N}}{2}(\bar{u}z_k-u\bar{z}_k)}\\
\fl&\quad\times\sum_{l=0}^{N-1}\frac{\Big(N|u|^2+\sqrt{N}(u\bar{z}_k+\bar{u}z_j)+z_j\bar{z}_k\Big)^{l+L}}{\Gamma(l+L+1)}.
\end{eqnarray*}
In order to simplify this expression we resort to the following trick outlined in \cite{BS}. We set: $\psi_N(z_j)=e^{\frac{\sqrt{N}}{2}(\bar{u}z_j-u\bar{z}_j)}$ and define the diagonal matrix
\bes
D_n=\diag\Big(\psi_N(z_1),\psi_N(z_2),\ldots,\psi_N(z_n)\Big).
\ees
We note the following: $\left|\psi_N(s)\right|=1$ and as a consequence
\bes
R_n(\la)=\det\big(D_n\big)R_n(\mathbf{\la})\det\big(D_n^{\dagger}\big).
\ees
Thus we can write for the $n$-point correlation function
\begin{eqnarray*}
\fl R_n(\la)=\det\Big[\frac{1}{\pi}e^{-\frac{|z_j|}{2}-\frac{|z_k|}{2}+z_j\bar{z}_k}
e^{-N|u|^2-\sqrt{N}(u\bar{z}_k+\bar{u}z_j)-z_j\bar{z}_k}\\
\times\sum_{l=0}^{N-1}\frac{\Big(N|u|^2+\sqrt{N}(u\bar{z}_k+\bar{u}z_j)+z_j\bar{z}_k\Big)^{l+L}}{\Gamma(l+L+1)}\Big]_{j,k=1}^n.
\end{eqnarray*}
Applying another saddle-point analysis it can be shown that for $u\in R$:
\bes
\fl\lim_{N\to\infty} e^{-N|u|^2-\sqrt{N}(u\bar{z}_k+\bar{u}z_j)-z_j\bar{z}_k}
\sum_{l=0}^{N-1}\frac{\Big(N|u|^2+\sqrt{N}(u\bar{z}_k+\bar{u}z_j)+z_j\bar{z}_k\Big)^{l+L}}{\Gamma(l+L+1)}=1,
\ees
which gives the limiting expression for the $n$-point correlation functions.
\end{proof}
Employing the same technique again, albeit with a slightly different saddle-point method, yields the limiting behavior of the $n$-point correlation functions around the circular edges of the eigenvalue density.
\begin{thm}[The limiting correlation functions at the edges]
Let $u,z_1,\ldots,z_n$ be complex numbers with $|u|=1$,\\
1. Setting $\la_k=\sqrt{N(\al+1)}u+z_k$ for $k=1,\ldots,n$ leads to the limiting correlation functions at the outer edge $\sqrt{N(\al+1)}$:
\bes
\fl\lim_{N\to\infty}R_n(\la_1,\ldots,\la_n)=
\det{\left[\frac{1}{2\pi}\exp\Big(-\frac{|z_j|^2}{2}-\frac{|z_k|^2}{2}+z_j\bar{z}_k\Big)
\Big(\erfc\big(\frac{z_j\bar{u}+\bar{z}_ku}{\sqrt{2}}\big)\Big)\right]}_{j,k=1}^n.
\ees
The same limiting expression is found around the inner edge $\sqrt{N\al}$ of the eigenvalue density by setting $\la_k=\sqrt{N\al}u-z_k$ for $k=1,\ldots,n$.
\end{thm}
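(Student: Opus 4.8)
The plan is to follow the proof of Theorem~\ref{complexcorr1} almost verbatim; the only structural difference is that at the edges one of the two incomplete Gamma functions appearing in the kernel sits precisely at its Gaussian transition scale instead of being driven to $0$ or $1$. First I would start from the determinantal form $R_n(\la_1,\ldots,\la_n)=\det\big(K_N(\la_j,\la_k)\big)_{j,k=1}^n$ and rewrite the finite sum in (\ref{b1}) through lower incomplete Gamma functions exactly as in (\ref{b2}), so that
\[
K_N(\la_j,\la_k)=\frac{1}{\pi}\,e^{-\f|\la_j|^2-\f|\la_k|^2+\la_j\bar\la_k}\Big[\frac{\gamma(L,\la_j\bar\la_k)}{\Gamma(L)}-\frac{\gamma(L+N,\la_j\bar\la_k)}{\Gamma(L+N)}\Big].
\]
Substituting $\la_k=\sqrt{N(\al+1)}\,u+z_k$ with $|u|=1$ and expanding gives
\[
\la_j\bar\la_k=N(\al+1)+\sqrt{N(\al+1)}\,(\bar u z_j+u\bar z_k)+z_j\bar z_k,\qquad |\la_j|^2=N(\al+1)+\sqrt{N(\al+1)}\,(\bar u z_j+u\bar z_j)+|z_j|^2,
\]
from which one reads off that in the exponent $-\f|\la_j|^2-\f|\la_k|^2+\la_j\bar\la_k$ the $O(N)$ terms cancel, the $O(\sqrt N)$ terms add up to $\f\sqrt{N(\al+1)}\big[(\bar u z_j-u\bar z_j)-(\bar u z_k-u\bar z_k)\big]$, which is purely imaginary, and the $O(1)$ remainder is $-\f|z_j|^2-\f|z_k|^2+z_j\bar z_k$.

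The purely imaginary $O(\sqrt N)$ phases are removed by the gauge trick of Theorem~\ref{complexcorr1}: with $D_n=\diag(\psi_N(z_1),\ldots,\psi_N(z_n))$ and $\psi_N(z)=\exp\big(\f\sqrt{N(\al+1)}(\bar u z-u\bar z)\big)$ one has $|\psi_N(z)|=1$, hence $\det(D_n)R_n\det(D_n^{\dagger})=R_n$, so multiplying $K_N(\la_j,\la_k)$ by $\psi_N(z_j)\overline{\psi_N(z_k)}$ kills the oscillatory factor without changing $R_n$. It remains to evaluate the bracket. Writing $\la_j\bar\la_k=(L+N)+\sqrt{L+N}\,\xi_{jk}$ with $\xi_{jk}=\bar u z_j+u\bar z_k+O(N^{-1/2})$, while $\la_j\bar\la_k\approx N(\al+1)$ exceeds the transition point $L=N\al$ by $\approx N\gg\sqrt L$, the standard transition asymptotics of the incomplete Gamma function (equivalently the central-limit scaling $\gamma(a,a+\sqrt a\,\xi)/\Gamma(a)\to\f\erfc(-\xi/\sqrt2)$, extended from real $\xi$ to compact complex sets by analyticity of $\gamma(a,\cdot)/\Gamma(a)$ for integer $a$) give
\[
\frac{\gamma(L,\la_j\bar\la_k)}{\Gamma(L)}\to1,\qquad \frac{\gamma(L+N,\la_j\bar\la_k)}{\Gamma(L+N)}\to\f\,\erfc\!\Big(-\frac{\bar u z_j+u\bar z_k}{\sqrt2}\Big).
\]
Using $1-\f\erfc(-x)=\f\erfc(x)$ and $\bar u z_j+u\bar z_k=z_j\bar u+\bar z_k u$, the bracket tends to $\f\erfc\big((z_j\bar u+\bar z_k u)/\sqrt2\big)$; combined with the surviving $O(1)$ exponent this gives $K_N(\la_j,\la_k)\to\frac{1}{2\pi}e^{-\f|z_j|^2-\f|z_k|^2+z_j\bar z_k}\erfc\big((z_j\bar u+\bar z_k u)/\sqrt2\big)$, whence the stated outer-edge limit by continuity of the determinant.

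For the inner edge one repeats the computation with $\la_k=\sqrt{N\al}\,u-z_k$. Now $\la_j\bar\la_k=N\al-\sqrt{N\al}(\bar u z_j+u\bar z_k)+z_j\bar z_k=L+\sqrt L\,\eta_{jk}$ with $\eta_{jk}\to-(\bar u z_j+u\bar z_k)$, so it is the first incomplete Gamma function that sits at the transition while $L+N$ now exceeds $\la_j\bar\la_k\approx N\al$ by $\approx N\gg\sqrt{L+N}$; hence $\gamma(L,\la_j\bar\la_k)/\Gamma(L)\to\f\erfc\big((z_j\bar u+\bar z_k u)/\sqrt2\big)$ and $\gamma(L+N,\la_j\bar\la_k)/\Gamma(L+N)\to0$, the bracket has the same limit as before, and after removing the (again purely imaginary) $O(\sqrt N)$ phases by the analogous gauge one obtains exactly the same limiting kernel, hence the same limiting correlation functions. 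The one point requiring genuine care is the uniform control of $\gamma(a,\cdot)/\Gamma(a)$ at the transition scale for \emph{complex} argument in an $O(\sqrt a)$ neighbourhood of $a$; the cleanest rigorous route is to bypass the incomplete Gamma altogether and analyse the finite sum $\sum_{l}(\la_j\bar\la_k)^{l+L}/\Gamma(l+L+1)$ directly by rescaling the index $l$ around its saddle and comparing with a Gaussian Riemann sum, checking that the error is $o(1)$ uniformly for $z_1,\ldots,z_n$ in a compact set and uniformly in the phase of $u$ — this is precisely the ``saddle-point analysis'' invoked in the proof of Theorem~\ref{complexcorr1}.
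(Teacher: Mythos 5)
Your proof is correct and follows the same route the paper only sketches (the paper merely asserts ``Employing the same technique again, albeit with a slightly different saddle-point method'' after Theorem~\ref{complexcorr1}): you use the determinantal form, the same gauge conjugation by a diagonal unimodular matrix to kill the $O(\sqrt N)$ oscillatory phases, and the central-limit transition asymptotics of the regularised incomplete Gamma function, with the only new ingredient at the edges being that exactly one of the two $\gamma(\cdot,\cdot)/\Gamma(\cdot)$ terms sits at its Gaussian transition scale rather than being driven to $0$ or $1$. Your closing remark about the need for uniform control of $\gamma(a,\cdot)/\Gamma(a)$ for complex argument near the transition (or, more safely, a direct saddle-point analysis of the finite sum in (\ref{b1})) is exactly the technical point the paper leaves implicit, and for integer $L$ the analyticity argument you give is sound.
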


\section{Limiting correlation functions for the real induced Ginibre ensemble}

Throughout this section it is assumed that $L=N\al$.

\label{appen_C}
\begin{thm}[The limiting correlation functions in the bulk]\label{complexcorr}
Let $u\in\mathbb{R}$ such that $\sqrt{\al}<|u|<\sqrt{\al+1}$ and let $r_1,\ldots,r_j\in\mathbb{R}$ as well as $s_1,\ldots,s_m\in \mathbb{C}_+\backslash\mathbb{R}$. Furthermore set $x_t=\sqrt{N}u+r_t$ for $t=1,\ldots,j$, $z_v=\sqrt{N}u+s_v$ for $v=1,\ldots,m$ and $L=N\al$, then:
\bes
\fl\lim_{N\to\infty}R_{j,m}(x_1,\ldots,x_j,z_1,\ldots,z_m)=
\Pfaff\left[\begin{array}{cc}
        K(r_t,r_{t'}) & K(r_t,s_{v'}) \\
        K(s_v,r_{t'}) & K(s_v,s_{v'})
      \end{array}\right]
%_{t,t'=1,\ldots,j\\
%v,v'=1,\ldots, m}.
\ees
where $t,t'=1,\ldots,j$ and $v,v'=1,\ldots, m$.
\begin{enumerate}
  \item The limiting real/real kernel is given by
    \bes
   \fl K(r,r')=\left[
              \begin{array}{cc}
                \frac{1}{\sqrt{2\pi}}(r-r')e^{-\f(r-r')^2} & \frac{1}{\sqrt{2\pi}}e^{-\f(r-r')^2} \\
                -\frac{1}{\sqrt{2\pi}}e^{-\f(r-r')^2} & \f\sgn(r-r')\erfc\left(\frac{|r-r'|}{\sqrt{2}}\right) \\
              \end{array}
            \right].
    \ees
  \item The limiting complex/complex kernel is given by
\begin{eqnarray*}
    \fl K(z,z')=\frac{1}{\sqrt{2\pi}}\sqrt{\erfc(\sqrt{2}\Image(z))\erfc(\sqrt{2}\Image(z'))}\\
\times\left[              \begin{array}{cc}
               (z'-z)e^{-\f(z-z')^2} & i(\bar{z}-z')e^{-\f(z-\bar{z'})^2} \\
                i(z'-\bar{z})e^{-\f(\bar{z}-z')^2}  & (\bar{z}-\bar{z'})e^{-\f(\bar{z}-\bar{z'})^2} \\
              \end{array}
            \right].
    \end{eqnarray*}
  \item The limiting real/complex kernel is given by
\bes
    \fl K(r,z)=\frac{1}{\sqrt{2\pi}}\sqrt{\erfc(\sqrt{2}\Image(z))}\left[
              \begin{array}{cc}
                (z-r)e^{-\f(r-z)^2} & i(\bar{z}-r)e^{-\f(r-\bar{z})^2} \\
                -\e^{-\f(x-z)^2} & -ie^{-\f(x-\bar{z})^2} \\
              \end{array}
            \right].
    \ees
\end{enumerate}
\end{thm}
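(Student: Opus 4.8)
The plan is to take the limit $N\to\infty$ directly in the exact finite-$N$ Pfaffian formula (\ref{Pfaffianentries})--(\ref{Pfaffianentries1}), using the explicit matrix-kernel entries of Theorem~\ref{Pfaffian1}. Those entries are assembled from the three functions of (\ref{t_n})--(\ref{r_n}): the symmetric double sum $s_N$, and the non-symmetric remainders $t$ and $r_N$. Accordingly the proof reduces to three estimates: (i) the bulk asymptotics of $s_N$ near the reference point $\sqrt{N}u$; (ii) the vanishing of $t$ and $r_N$ in the bulk; and (iii) identifying the limit of the only entry that is not simply a product built out of $s_N$, namely the real/real $(2,2)$ entry $IS_N+\varepsilon$.

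For (i), I would set each argument to $\sqrt{N}u+\zeta$ with $\zeta$ ranging over the bounded perturbations $r_1,\dots,r_j,s_1,\dots,s_m$, handling a conjugated slot $\bar w$ by $\overline{\sqrt{N}u+\zeta}=\sqrt{N}u+\bar\zeta$ (recall $u\in\mathbb R$). Then, exactly as in the proof of Theorem~\ref{complexcorr1}, one isolates in $s_N(\sqrt{N}u+\zeta,\sqrt{N}u+\zeta')$ the elementary prefactor
\[
\psi(\sqrt{N}u+\zeta)\,\psi(\sqrt{N}u+\zeta')\,e^{(\sqrt{N}u+\zeta)(\sqrt{N}u+\zeta')}
=e^{-\f(\zeta-\zeta')^2}\sqrt{\erfc(\sqrt2\,\im\zeta)\,\erfc(\sqrt2\,\im\zeta')},
\]
leaving the remaining factor $T_N=e^{-(\sqrt{N}u+\zeta)(\sqrt{N}u+\zeta')}\sum_{l=0}^{N-2}\big((\sqrt{N}u+\zeta)(\sqrt{N}u+\zeta')\big)^{l+L}/\Gamma(L+l+1)$, which is precisely the normalised partial exponential sum whose saddle-point limit is shown to equal $1$ in the proof of Theorem~\ref{complexcorr1} --- and it equals $1$ \emph{exactly} on the open annulus $\sqrt\al<|u|<\sqrt{\al+1}$ (for $|u|^2\le\al$ the sum is dominated by its leading terms, for $|u|^2\ge\al+1$ by its trailing terms, and the limit changes). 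Hence $s_N(\sqrt{N}u+\zeta,\sqrt{N}u+\zeta')\to\frac1{\sqrt{2\pi}}e^{-\f(\zeta-\zeta')^2}\sqrt{\erfc(\sqrt2\,\im\zeta)\erfc(\sqrt2\,\im\zeta')}$, uniformly on compact sets, and similarly for the conjugated combinations $s_N(\cdot,\overline{\,\cdot\,})$ and $s_N(\overline{\,\cdot\,},\overline{\,\cdot\,})$. Feeding this into the $DS_N$, $S_N$, $IS_N$ formulas of Theorem~\ref{Pfaffian1} reproduces, for the complex/complex block and for the $s_N$-parts of the off-diagonal blocks, exactly the Gaussian$\times\erfc$ entries in the statement; for the purely real block, where $\im\zeta=0$ and the $\erfc$ factors become $1$, one gets $DS_N(x,y)\to\frac1{\sqrt{2\pi}}(r'-r)e^{-\f(r-r')^2}$ and $S_N(x,y)\to\frac1{\sqrt{2\pi}}e^{-\f(r-r')^2}$ after the relabelling $x=\sqrt{N}u+r$, $y=\sqrt{N}u+r'$.

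Step (ii) is where I expect the main difficulty to lie. The functions $t(x,z)$ and $r_N(x,z)$ carry incomplete Gamma functions of \emph{large} parameter with a \emph{large} argument, multiplied by algebraically large factors: $t$ involves $\Gamma(L/2,x^2/2)/\Gamma(L)$ together with $z^{L}2^{L/2}$, while $r_N$ involves $\gamma((N+L-1)/2,x^2/2)/\Gamma(N+L-1)$ together with $z^{N+L-1}2^{N/2+L/2}$. With $x=\sqrt{N}u+r$ one has $x^2/2\sim N|u|^2/2$; the inequality $|u|^2>\al$ places this strictly above the parameter $L/2=N\al/2$, so $\Gamma(L/2,x^2/2)$ is exponentially smaller than $\Gamma(L)$, whereas $|u|^2<\al+1$ places it strictly below $(N+L)/2=N(\al+1)/2$, so $\gamma((N+L-1)/2,x^2/2)$ is of the order of its integrand near the upper endpoint. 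A Stirling-type accounting of the competing exponential rates --- together with the Gaussian decay that $\psi$ contributes to the other slot --- then shows that the net exponent of both $t$ and $r_N$ is strictly negative on the open annulus, uniformly for the perturbations in a compact set, so that $t$, $r_N$, and their integrals over bounded intervals, all tend to zero. The delicate part is uniformity: several of the boundary terms in the explicit finite-$N$ formula for $IS_N(x,y)$ in Theorem~\ref{Pfaffian1} are individually exponentially large, and the estimate must be organised so that their dangerous parts cancel, leaving only the contribution of $s_N$.

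With $t,r_N\to0$, step (iii) is short. Using the identity $IS_N(x,y)=-\int_x^y S_N(t,y)\,dt$ and the uniform convergence $S_N(t,y)\to\frac1{\sqrt{2\pi}}e^{-\f(t-y)^2}$ established in (i)--(ii) (valid for $t$ between $x$ and $y$, an interval that stays within $O(1)$ of $\sqrt{N}u$), the integral converges to $-\frac1{\sqrt{2\pi}}\int_x^y e^{-\f(t-y)^2}\,dt$, an error function of $x-y=r-r'$; adding $\varepsilon(x,y)=\f\sgn(r-r')$ and tracking the signs yields $\f\sgn(r-r')\erfc(|r-r'|/\sqrt2)$, the claimed $(2,2)$ entry. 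Finally, substituting $\zeta=r_t$ or $\zeta=s_v$, reassembling the three $2\times2$ blocks, and invoking continuity of the Pfaffian gives the limiting $(j,m)$-correlation function in the stated Pfaffian form; the convergence being uniform on compacts is exactly what is needed, since the correlation functions are continuous.
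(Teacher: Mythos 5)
Your proposal follows the same route as the paper's proof in \ref{appen_C}: extract the bulk limit of $s_N$ from the complex-case analysis of \ref{complexcorr1} (the exponent identity you write for $\psi\psi\,e^{zw}$ is exact here because $s_N$ involves genuine products $zw$ rather than moduli, so no cocycle conjugation is even needed), and kill $t$ and $r_N$ by a Stirling/duplication-formula accounting of the competing exponential rates, which is precisely what the paper does, the two strict inequalities $\al<|u|^2<\al+1$ entering exactly where you say they do. The one genuine difference is your step (iii): the paper obtains the limit of $IS_N$ by saddle-point analysis of each of the eight integrals in the explicit finite-$N$ formula of Theorem~\ref{Pfaffian1}, whereas you integrate the already-established limit of $S_N$ over the $O(1)$ interval $[x,y]$. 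Your route is cleaner and avoids the cancellation bookkeeping among individually large boundary terms that you correctly flag as the delicate point of the paper's method — but you must fix the sign of the identity you invoke: differentiating $\tau_j(x)=\f\int\sgn(y-x)\tilde q_j(y)\,dy$ gives $\partial_x\tau_j=-\tilde q_j$, hence $IS_N(x,y)=+\int_x^y S_N(t,y)\,dt$ (the minus sign in \ref{appendix3} is a typo). With the plus sign, $\int_x^y\frac{1}{\sqrt{2\pi}}e^{-\f(t-y)^2}dt+\f\sgn(x-y)$ does evaluate to $\f\sgn(r-r')\erfc(|r-r'|/\sqrt2)$; with the minus sign as you wrote it the constant term comes out as $-\f(1+\erf)$ rather than $-\f\erfc$, so "tracking the signs" would not rescue the stated answer.
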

\begin{proof}
In the following only the derivation of the limiting behaviour of the real/real kernel will be outlined, as this is the most involved computation. All other results can be deduced in a similar fashion. We start by computing the asymptotic behaviour of $S_N$. Using the results from \ref{complexcorr1} it is obvious that:
\bes
\lim_{N\to\infty}s_N(\sqrt{N}u+r,\sqrt{N}u+r')=\frac{1}{\sqrt{2\pi}}e^{-\f(r-r')^2}.
\ees
The next term to be analyzed is:
\beas
\fl r_N(\sqrt{N}u+r,&\sqrt{N}u+r')=\frac{1}{\sqrt{2\pi}}\sgn(\sqrt{N}u+r)\frac{2^{\frac{N}{2}(\al+1)-\frac{3}{2}}}{\Gamma(N(\al+1)-1)} \\
&\times e^{-\f (\sqrt{N}u+r')^2}(\sqrt{N}u+r')^{N(\al+1)-1}\gamma\left(\frac{N}{2}(\al+1)-\f,\f (\sqrt{N}u+r^2)\right)
\eeas
We can apply the duplication formula for the gamma function:
\bes
\Gamma(2z)=\Gamma(z)\Gamma(z+\f)2^{2z-1}\frac{1}{\sqrt{\pi}}.
\ees
and obtain:
\beas
\fl r_N(\sqrt{N}u+r,&\sqrt{N}u+r')=\sgn(\sqrt{N}u+r)2^{\frac{N}{2}(\al+1)-\frac{3}{2}}e^{-\sqrt{N}ur'}(1+\frac{r'}{\sqrt{N}u})^{N(\al+1)-1}\\
&\times\frac{[Nu^2]^{\frac{N}{2}(\al+1)-\f}}{\Gamma(\frac{N}{2}(\al+1))\Gamma(\frac{N}{2}(\al+1-\f))}\gamma\left(\frac{N}{2}(\al+1)-\f,\f (\sqrt{N}u+r^2)\right).
\eeas
Furthermore the use of the Stirling formula:
\bes
\Gamma(z)\sim e^{-x}x^x\sqrt{\frac{2\pi}{x}},
\ees
as well as:
\bes
(1+\frac{r'}{\sqrt{N}u})^{N(\al+1)-1}\sim e^{\sqrt{N}(\al+1)\frac{s}{u}-\frac{s^2}{2u^2}},
\ees
leads to:
\beas
r_N(\sqrt{N}u+r,\sqrt{N}u+r')\sim\sgn(u)\frac{1}{\sqrt{\pi}}e^{\sqrt{N}s(\frac{\al+1-u^2}{u})}e^{-\f s^2(\frac{\al+1+u^2}{u^2})}\\
e^{\frac{N}{2}(\al+1-u^2+2(\al+1)\ln(\frac{u^2}{\al+1}))}\frac{1}{\Gamma(\frac{N}{2}(\al+1-\f)}\gamma\left(\frac{N}{2}(\al+1)-\f,\f (\sqrt{N}u+r^2)\right).
\eeas
Now it can easily be seen that:
\bes
\lim_{N\to\infty}e^{\sqrt{N}s(\frac{\al+1-u^2}{u})}e^{-\f s^2(\frac{\al+1+u^2}{u^2})}=0.
\ees
In addition to that for $|u|^2<\al+1$  the expression $\al+1-u^2+2(\al+1)\ln(\frac{u^2}{\al+1})$ is negative and a saddle point method shows:
\bes
\lim_{N\to\infty}\frac{1}{\Gamma(\frac{N}{2}(\al+1-\f)}\gamma\left(\frac{N}{2}(\al+1)-\f,\f (\sqrt{N}u+r^2)\right)= 0.
\ees
As a conclusion we have shown that $\lim_{N\to\infty}r_N(\sqrt{N}u+r,\sqrt{N}u+r')=0$. The next term
\bes
\fl t(\sqrt{N}u+r,\sqrt{N}u+r')=\frac{1}{\sqrt{2\pi}}\frac{2^{\frac{N\al}{2}-1}}{\Gamma(N\al)}e^{-\f (\sqrt{N}u+r')^2}(\sqrt{N}u+r')^{N\al}\Gamma(\frac{N}{2}\al,\f (\sqrt{N}u+r)^2)
\ees
can be dealt with in a similar fashion. We have now determined the scaling limits of the real/real entries $S_N$ and $DS_N$. The scaling limit for the entry $IS_N$ can be found by applying a saddle point method on each of the eight integrals. In addition to that the asymptotic relationships derived for $r_N$ and $t$ can be applied.
\end{proof}

Similar calculations lead to the conclusion that in the complex bulk and also at the edges the eigenvalue correlation functions in the real induced Ginibre are exactly the same as those in the real Ginibre ensemble. We omit the derivations and only state the results in the two theorems below.

\begin{thm}[The limiting correlation functions in the complex bulk]
Let u be a complex number such that $u\in R=\{r\in\mathbb{C}|\sqrt{\al}\leq |r|\leq \sqrt{\al+1}\}$ and let $s_1,\ldots,s_m\in \mathbb{C}$.  Furthermore set $z_j=\sqrt{N}u+s_j$ for $j=1,\ldots,m$ and $L=N\al$, then for $u\in R$:
\bes
\fl\lim_{N\to\infty}R_{0,m}(-,z_1,\ldots,z_m)=
\frac{1}{\pi}\det{\left[\exp\Big(-\frac{|s_k|}{2}-\frac{|s_{k'}|}{2}+z_k\bar{z}_{k'}\Big)\right]}_{k,k'=1,\ldots,m}.
\ees
\end{thm}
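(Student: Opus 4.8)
The plan is to start from the finite-$N$ Pfaffian formula of Section~\ref{sec:real}, insert the complex-bulk scaling, and recognise the limit as the bulk $m$-point correlation of the complex Ginibre ensemble. Since all $z_j$ lie in $\mathbb{C}_+\setminus\mathbb{R}$, the term $\varepsilon(\cdot,\cdot)$ in (\ref{Pfaffianentries1}) is absent, so (\ref{Pfaffianentries}) reads $R_{0,m}(-,z_1,\dots,z_m)=\Pfaff\big[K_N(z_k,z_{k'})\big]_{k,k'=1}^m$, the Pfaffian of a $2m\times2m$ antisymmetric matrix whose $2\times2$ blocks are, by Theorem~\ref{Pfaffian1}, assembled from $s_N$ together with the ``boundary'' pieces $r_N$ and $t$ attached to the real axis.

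First I would run the asymptotics of these blocks, along the lines of \ref{appendix_d} and \ref{appen_C}. Substituting $z_j=\sqrt{N}u+s_j$ and $L=N\al$ with $u$ in the \emph{open} annulus $\sqrt{\al}<|u|<\sqrt{\al+1}$, the core input is the saddle-point estimate
\[
\lim_{N\to\infty}\Big[e^{-\zeta}\sum_{j=0}^{N-2}\frac{\zeta^{\,j+L}}{\Gamma(L+j+1)}\Big]=1,\qquad \zeta\in\{z_kz_{k'},\ z_k\bar z_{k'}\},
\]
valid because $|\zeta|\sim N|u|^2\in(L,L+N)$ lies strictly inside the bulk of the incomplete-Gamma sum; this is exactly the estimate of the proof of Theorem~\ref{complexcorr1}, the only role of the weight $(\det G^{\dagger}G)^{L}$ being to shift the summation index by $L=N\al$. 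The new feature compared with the near-real regime of Theorem~\ref{complexcorr} is that $u$ is now genuinely complex, so $\im z_j\sim\sqrt{N}\,\im u$ grows; the factors $\erfc(\sqrt{2}\,\im z_j)$ must then be treated via their Gaussian tails $\erfc(t)\sim e^{-t^2}/(t\sqrt{\pi})$, and these tails supply precisely the Gaussian factor that turns $e^{(\im z)^2-(\re z)^2}$ into $e^{-|z|^2}$, which pairs off with the incomplete-Gamma sum (the same cancellation already produces $\rho_N^C(\sqrt{N}z)\to1/\pi$). After stripping the unimodular oscillatory prefactors $e^{\pm\frac{\sqrt{N}}{2}(\bar u z-u\bar z)}$ block by block --- a conjugation of the $2m\times2m$ matrix by a block-diagonal unitary, the residual global phase being pinned to $1$ since $R_{0,m}\ge0$ --- and checking, by the same Stirling/saddle-point argument as in \ref{appen_C} (now using $\re(u^2)\le|u|^2<\al+1$), that $r_N$ and $t$ drop out in the limit, one finds that every block $K_N(z_k,z_{k'})$ converges entrywise to a limiting $2\times2$ kernel $K_\infty(s_k,s_{k'})$.

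It then remains to evaluate $\Pfaff\big[K_\infty(s_k,s_{k'})\big]_{k,k'=1}^m$ and identify it with
\[
\frac{1}{\pi^{m}}\det\Big[\exp\!\big(-\f|s_k|^2-\f|s_{k'}|^2+s_k\bar s_{k'}\big)\Big]_{k,k'=1}^m .
\]
This is the well-documented collapse, away from the real axis, of the real-Ginibre complex/complex correlations onto those of the complex Ginibre ensemble \cite{FN,BS,Sommers2007,Forrester2008}: once the reference points are macroscopically far from $\mathbb{R}$ the entries of $K_\infty$ that couple a point to a reflected one decouple, $K_\infty$ takes a degenerate normal form, and the associated $2m\times2m$ Pfaffian can be turned --- by block row and column operations together with the elementary identity writing a determinant as the Pfaffian of the corresponding doubled antisymmetric matrix --- into the displayed $m\times m$ determinant, which is exactly $\det\big[k(s_k,s_{k'})\big]$ for the scalar Ginibre kernel $k(s,s')=\frac{1}{\pi}e^{-\f|s|^2-\f|s'|^2+s\bar s'}$.

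I expect this last algebraic collapse to be the main obstacle: it needs the limiting $2\times2$ kernel in fully explicit form and a careful execution of the Pfaffian-to-determinant reduction. All of it is known in principle from the real Ginibre case \cite{FN,BS}, but must be transcribed to the present setting. The subsidiary technical points --- uniformity of the saddle-point estimate over the annulus, the matching of the $\erfc$-tails against the incomplete-Gamma sums when $u$ is genuinely complex, and the Pfaffian-specific bookkeeping of the oscillatory gauge factors --- present nothing beyond what is already handled in \ref{appendix_d} and \ref{appen_C}.
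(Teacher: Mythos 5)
The paper omits its own derivation of this theorem, stating only that ``similar calculations'' to Theorem~C.1 (the real--bulk case) give the result. Your outline correctly fills that gap with the intended mechanism: the saddle--point estimate of the truncated exponential sum (identical to the complex--ensemble computation in Appendix~C), the cancellation of $e^{y^2-x^2}$ against the Gaussian tail of $\erfc(\sqrt{2}y)$ once $\im z_j\sim\sqrt{N}\,\im u$ grows, the exponential decay of $s_N(z,w)$ and $s_N(\bar z,\bar w)$ (hence of $DS_N$ and $IS_N$) for genuinely complex $u$ via $\re(u^2)<|u|^2$, the Stirling/saddle--point argument that $r_N$ and $t$ vanish, and the resulting Pfaffian--to--determinant collapse familiar from the real Ginibre literature. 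This is exactly the route the paper points to, so the proposal is both correct and on--message.

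One small point worth tightening: the block--diagonal unitary $D_m$ used to gauge away the phases $e^{\pm\frac{\sqrt{N}}{2}(\bar u z_k-u\bar z_k)}$ has determinant $1$ by construction (each $2\times2$ block is $\diag(\psi_N(z_k),\overline{\psi_N(z_k)})$ with $|\psi_N|=1$), so the Pfaffian is invariant under the conjugation without appealing to positivity of $R_{0,m}$; and the identity $\Pfaff\bigl[\begin{smallmatrix}0&B\\-B^{T}&0\end{smallmatrix}\bigr]=(-1)^{m(m-1)/2}\det B$ carries a sign that must be checked to cancel against the sign incurred when permuting rows/columns from the $2\times2$--blocked ordering to the $\bigl[\begin{smallmatrix}A&B\\-B^{T}&C\end{smallmatrix}\bigr]$ ordering. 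These are bookkeeping items rather than gaps, but an actual write--up should record them explicitly.
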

\begin{thm}[The limiting correlation functions at the edges]
Let $u=\pm1$, $r_1,\ldots,r_l\in\mathbb{R}$ as well as $s_1,\ldots,s_m\in \mathbb{C}_+$. Setting $x_j=\sqrt{N(\al+1)}u+r_j$ for $t=1,\ldots,l$ and $z_k=\sqrt{N(\al+1)}u+s_k$ for $k=1,\ldots,m$ leads to Ginibre behavior for the limiting correlation functions at
the outer edge of the eigenvalue distribution as described in \cite{BS}.
In addition at the inner circular edge $x_j=\sqrt{N\al}u-r_j$ for $t=1,\ldots,l$ and $z_k=\sqrt{N\al}u-s_k$ for $k=1,\ldots,m$ the Ginibre limiting correlation functions can again be recovered.
\end{thm}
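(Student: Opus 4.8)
The plan is to reduce the statement, exactly as was done for the bulk in the proof of Theorem~\ref{complexcorr}, to the pointwise $N\to\infty$ limit of the entries of the $2\times2$ matrix kernel in (\ref{Pfaffianentries})--(\ref{Pfaffianentries1}) under the edge scalings $w=\sqrt{N(\al+1)}\,u+\zeta$ at the outer edge and $w=\sqrt{N\al}\,u-\zeta$ at the inner edge, with $u=\pm1$ and $\zeta$ a fixed real or $\mathbb{C}_+$ number. First I would start from the explicit finite-$N$ formulas of Theorem~\ref{Pfaffian1}, which write every entry in terms of $s_N$, $r_N$, $t$ and $\psi$ from (\ref{t_n})--(\ref{r_n}). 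As in the proof of Theorem~\ref{complexcorr1}, the first move is to exploit the gauge freedom of the Pfaffian: a $(K',L')$-correlation function is unchanged if the kernel matrix is conjugated by a block-diagonal matrix whose diagonal blocks are unimodular (at the complex reference points) or real (at the real reference points), and one chooses these factors to absorb the oscillatory prefactors $e^{-N|u|^2-\sqrt N(\cdots)}$ that arise after the substitution. This reduces the problem to the pointwise limit of each kernel entry.

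The analytic engine is the transition-region behaviour of the incomplete gamma function, $\gamma(a,a+\sqrt{2a}\,\xi)/\Gamma(a)\to\frac{1}{2}(1+\erf(\xi))$, equivalently $\Gamma(a,a+\sqrt{2a}\,\xi)/\Gamma(a)\to\frac{1}{2}\erfc(\xi)$ as $a\to\infty$, together with the error-function limit $\sqrt N\,w\,e^{2Nw^2}\erfc(\sqrt{2N}w)\to\frac{1}{\sqrt{2\pi}}$ already used in Section~\ref{sec:real}. Applied to $s_N$: writing the partial sum $\sum_{j=0}^{N-2}(wz)^{j+L}/\Gamma(L+j+1)$ as a difference of two incomplete gamma functions (as in (\ref{b1})), one sees that at the outer edge $|w|^2\approx N+L$ the cut-off index $j=N-2$ sits essentially at the peak of the summand, so one half of the Gaussian mass survives and $s_N$ picks up an $\erfc$-factor in the variable $\frac{\zeta\bar u+\bar\zeta u}{\sqrt2}$; at the inner edge $|w|^2\approx L$ the peak sits near $j=0$, so the lower cut-off again retains half the mass and the minus sign in the substitution produces the same $\erfc$. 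Since the complex/complex block of Theorem~\ref{Pfaffian1} is built only from $s_N$, it immediately reproduces the $\erfc$-decorated complex-complex Ginibre kernel, exactly as for the complex induced ensemble treated in Appendix~\ref{appendix_d}.

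What remains is the real/real and real/complex blocks, in which the extra terms $r_N$ and $t$ also occur. Unlike in the bulk these do not vanish at the edge; instead a Laplace/Stirling analysis --- using the duplication formula to rewrite $\Gamma(N+L-1)$ and $\Gamma(L)$, exactly the bookkeeping indicated in Appendix~\ref{appendix3} --- shows that $r_N$ and $t$ converge to precisely the additional $\erfc$-type pieces that also appear in the real Ginibre edge kernel near the real axis; this is the same mechanism that produces the extra term $\frac{1}{2\sqrt2}e^{-\xi^2}\erfc(-\xi)$ in Theorem~\ref{bb:thm4}. For the entry $IS_N$ one must moreover evaluate by saddle point each of the eight integrals $\int_x^y e^{-t^2/2}t^{a}\,dt$ occurring in Theorem~\ref{Pfaffian1} and recombine them with the limits of $r_N$ and $t$, precisely as in the bulk computation in the proof of Theorem~\ref{complexcorr} in Appendix~\ref{appen_C}. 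Assembling the three blocks and comparing with \cite{BS} gives the claim, up to a final conjugation of the limiting matrix kernel by a constant diagonal matrix, which leaves the correlation functions unchanged and only matches sign and normalisation conventions.

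The hard part is this last recombination at the edge: carrying several incomplete gamma functions through their transition regions simultaneously while keeping exact track of the competing exponential and power prefactors --- the powers of $2$, the ratios of Gamma functions, the powers of $z$ and the factor $\psi$ --- and verifying that the limits of the truncated $s_N$, of $r_N$ and of $t$ add up to exactly the known real Ginibre edge kernel rather than to an $\al$-dependent object. Conceptually this is the Pfaffian analogue of edge universality: after the gauge reduction the kernel is built from one family of truncated exponential-type series whose $N,L\to\infty$ asymptotics near $|wz|\approx N+L$ depend only on the distance to the edge and not on $\al$. Everything else --- the gauge reduction, the saddle-point estimates, the $\erfc$ limit --- is routine and parallels the arguments already given for the bulk in Appendices~\ref{appendix_d} and~\ref{appen_C}.
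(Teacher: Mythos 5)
Your plan reproduces what the paper intends. The paper omits the edge derivations, asserting only that they follow by ``similar calculations'' to the bulk proof of Theorem~\ref{complexcorr}, and your sketch is exactly that argument pushed to the edge scaling: pointwise limits of $s_N$, $r_N$, $t$ via the transition asymptotics of the incomplete gamma function, the duplication and Stirling formulas, and a saddle-point treatment of the integrals appearing in $IS_N$, followed by comparison with the real Ginibre edge kernel of \cite{BS}. Your key observation --- that $r_N$ (at the outer edge) or $t$ (at the inner edge) no longer vanishes but contributes the additional $\erfc$-type term already visible in the density in Theorem~\ref{bb:thm4} --- is the correct and essential point. One small caveat: the gauge conjugation with which you open is borrowed from the determinantal complex case (Theorem~\ref{complexcorr1}) and is not actually needed here, since the Pfaffian kernel entries are built from the analytic factor $\psi(z)=e^{-z^2/2}\sqrt{\erfc(\sqrt{2}\Image(z))}$ rather than from $e^{-|z|^2/2}$, so the $N$-dependent exponentials cancel identically via $e^{-z^2/2-w^2/2+zw}=e^{-(z-w)^2/2}$ and one can pass to the limit entry by entry exactly as in the paper's bulk proof; if you do insist on a gauge step, note that for Pfaffians the invariance condition is $\det D=1$, not block-wise unimodularity.
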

\bigskip

%\section*{References}

\bibliographystyle{unsrt}
\bibliography{induce}{}

\end{document}